\pdfoutput=1
\documentclass[11pt,preprint]{elsarticle}

\usepackage[margin=1in]{geometry}
\usepackage{amsmath,amssymb,amsthm,mathtools,bm}
\usepackage[T1]{fontenc}
\usepackage{lmodern}
\usepackage{microtype}
\usepackage{graphicx}
\usepackage{subcaption}
\usepackage{booktabs}
\usepackage{float}
\usepackage{placeins}
\usepackage[normalem]{ulem}
\usepackage{algorithm}
\usepackage{algpseudocode}
\usepackage{xcolor}
\usepackage{hyperref}
\hypersetup{colorlinks=true,linkcolor=blue,citecolor=blue,urlcolor=blue}
\allowdisplaybreaks

\newtheorem{proposition}{Proposition}
\newtheorem{lemma}{Lemma}

\newtheorem{remark}{Remark}

\newcommand{\Kc}{\mathcal{K}}
\newcommand{\pinf}{\pi_{\infty}}
\newcommand{\pinfj}[1]{\pi_{\infty,#1}}

\algnewcommand{\Phase}[2]{\Statex\textit{#1}\hfill{\itshape\footnotesize #2}}
\algnewcommand{\PhaseTwoLine}[2]{\Statex\textit{#1}\Statex\hfill{\itshape\footnotesize #2}}

\journal{Journal of Computational Physics}

\begin{document}

\begin{frontmatter}
\title{Anchored HLLD: A Riemann solver for a hypoelastic
diffuse-interface model of compressible fluid--solid interaction}
\author[caltech]{Yanjun Zhang\corref{cor1}}
\ead{yzhang6@caltech.edu}
\cortext[cor1]{Corresponding author}
\author[caltech]{Tim Colonius}
\ead{colonius@caltech.edu}
\affiliation[caltech]{organization={Department of Mechanical and Civil Engineering, California Institute of Technology},
city={Pasadena}, statesep={}, state={CA}, postcode={91125}, country={USA}}
\begin{abstract}
We present an HLLD Riemann solver for a multicomponent hypoelastic diffuse-interface model of compressible fluid--solid interaction, with no separately computed interface and a deviatoric stress advanced by an objective hypoelastic rate law outside conservation-law form.
The difficulty specific to this model is that the stress evolution is nonconservative and, unlike the familiar volume-fraction terms of Kapila-type models, cannot be left outside the Riemann solve once the shear waves are resolved.
The shear coefficient $\rho(G+\tau_{xx})$ sets the impedance of the resolved shear waves and so belongs to their eigenstructure.
\emph{Anchored HLLD} carries it inside the solve by evaluating the HLLD fan twice at each face, anchored in turn to each adjacent cell-centered state, with every cell advancing from the flux anchored to its own state.
HLL, HLLC, and anchored HLLD form a three-tier hierarchy of increasing wave resolution in one framework.
We prove local structural properties: consistency at the reconstruction order, exact species-mass conservation, and tangential-momentum and energy drift vanishing under refinement at the reconstruction's face-jump order on smooth fields, with one-dimensional tests illustrating these structural properties and demonstrating first-order drift convergence on discontinuous fields.
Across one-dimensional tests and two-dimensional production simulations in the open-source solver MFC, anchored HLLD resolves both the material contact and the shear waves, HLLC the contact alone, and HLL neither sharply.
Where refinement studies are shown, anchored HLLD converges toward exact or semi-analytic solutions.
A three-dimensional shock--ellipsoid interaction demonstrates the solver with all six stress components active.
\end{abstract}

\begin{keyword}
Riemann solver \sep HLLD \sep diffuse-interface method \sep hypoelasticity \sep compressible fluid--solid interaction \sep nonconservative hyperbolic systems
\end{keyword}

\end{frontmatter}

\clearpage
\section{Introduction}

Compressible fluid--solid interaction arises in a wide range of applications, from underwater shock loading of structures to biomedical acoustics and geophysical wave propagation.
A natural numerical framework for such problems is a fixed-grid Eulerian method in which material interfaces are represented implicitly through evolved volume fractions, with no separately computed interface description.
This \emph{interface-capturing} approach is often called a diffuse-interface method, as the captured interface relaxes into an artificial mixture zone a few grid cells wide.
Interfaces captured this way are prone to spurious pressure oscillations under fully conservative schemes.
The foundational fluid formulations prevent them by transporting the material and equation-of-state quantities in nonconservative form rather than conservatively: the shock-capturing scheme of Shyue~\cite{Shyue1998}, the seven-equation model of Saurel and Abgrall~\cite{SaurelAbgrall1999}, and the reduced five-equation model of Kapila and coworkers~\cite{Kapila2001}.
The resulting systems remain hyperbolic, inheriting the mathematical structure of single-phase gas dynamics and thereby admitting the established toolkit of shock-capturing Riemann solvers.
Subsequent work~\cite{AlahyariBeigJohnsen2015} makes this precise, identifying the transport form that preserves interface pressure and temperature equilibrium.

Extension to solid mechanics has followed two constitutive routes.
The hyperelastic branch evolves a kinematic measure such as the deformation gradient, stretch, or Finger tensor and derives the stress from a stored elastic energy density.
A conservative single-material Eulerian formulation~\cite{GavrilyukFavrieSaurel2008} was extended to the multicomponent diffuse-interface setting by Favrie and coworkers~\cite{FavrieSaurelGavrilyuk2009,FavrieGavrilyuk2012,NdanouFavrieGavrilyuk2015}.
The branch now includes both fully diffuse multicomponent constructions~\cite{Barton2019} and sharp-interface treatments that couple the phases through level-set or ghost-fluid machinery~\cite{LopezOrtegaLombardiniPullinMeiron2014,WangAdamiAdams2024}, as well as high-order schemes outside the HLL family~\cite{WestAdlerLele2024}.

The second branch, which we adopt here, evolves the deviatoric stress directly through an objective rate law, with the pressure supplied by the equation of state.
This \emph{hypoelastic} approach fits naturally into the multicomponent interface-capturing framework, coupling to the existing volume-fraction and equation-of-state infrastructure without an additional evolved kinematic strain measure.
The augmented system remains hyperbolic, with shear waves joining the acoustic and contact families, so that Riemann-solver machinery applies directly.
The branch carries a known constitutive tradeoff.
Hypoelastic rate laws with constant moduli built on the classical objective rates, including the Lie derivative used here, are not exactly integrable to an elastic relation derived from a stored energy~\cite{SimoPister1984}, so the thermodynamic structure of the hyperelastic branch is not inherited.
For the small-deformation regime targeted here this is a standard and well-understood compromise.
The consequence that matters for this paper is different.
The stress evolution introduces nonconservative products, which create a specific numerical difficulty described below.

The model we extend is the five-equation Eulerian interface-capturing framework of Rodriguez and Johnsen~\cite{RodriguezJohnsen2019}, in which gases, liquids, and solids obey a common equation of state with spatially varying material parameters, and small-deformation elasticity is introduced through a Lie-derivative-based hypoelastic stress evolution.
The original treatment uses an HLL Riemann solver and is implemented in the open-source exascale solver MFC~\cite{MFC2021,MFC2026}, within which all results in this paper are produced.
Subsequent work by Rodriguez, Johnsen, and Powell~\cite{RodriguezJohnsenPowell2019} developed an AUSM\textsuperscript{+}-up flux-vector-splitting variant.
Neither resolves the shear waves as a distinct wave family, a gap we fill with an HLLD Riemann solver.

To explain the numerical difficulty, we recall the Riemann-solver hierarchy.
The HLL solver of Harten, Lax, and van Leer~\cite{HartenLaxVanLeer1983} uses two waves bounding the fastest signals and averages the solution between them.
HLLC~\cite{ToroSpruceSpeares1994} adds a third wave restoring the material contact discontinuity.
HLLD~\cite{MiyoshiKusano2005}, originally developed for ideal magnetohydrodynamics to resolve Alfv\'{e}n waves, adds an inner pair of waves on either side of the contact.
In the hypoelastic context these inner waves are shear waves, with speed $c_s = \sqrt{(G+\tau_{xx})/\rho}$.
HLL and HLLC smear the shear pair, averaging it across the outer wave fan.
Only HLLD resolves it as a distinct wave family.

The difficulty specific to the hypoelastic model is that the nonconservative stress coefficient $\rho(G+\tau_{xx})$ simultaneously enters a source term in the stress evolution equations \emph{and} sets the shear-wave impedance, making it part of the eigenstructure of the very waves HLLD resolves.
For familiar nonconservative terms such as the $\alpha\,\nabla\!\cdot\!u$ and $\Kc\,\nabla\!\cdot\!u$ terms in the volume-fraction equation of Kapila-type five-equation fluid models, with $\Kc$ the compressibility coefficient, the coefficient does not set the speed or impedance of any resolved wave in the Riemann solver, so it can be applied outside the Riemann solve as a source term multiplying a face-trace velocity.
This Kapila-type out-of-solver principle extends to HLLC for the hypoelastic system, since the shear coefficient does not enter the contact wave HLLC resolves.
But it does not extend to HLLD.
The shear-wave impedance $\sqrt{\rho(G+\tau_{xx})}$ is inseparable from the wave being resolved, so the nonconservative coefficient must enter the Riemann solve itself.

We accomplish this through a \emph{dual-pass} face evaluation: the HLLD solver is called twice at each face with the same reconstructed left and right states, but with the nonconservative coefficients frozen at the cell-centered state of each adjacent cell in turn, and each cell advances using the flux from the evaluation anchored to its own state.
We call the resulting solver \emph{anchored HLLD}.
We prove that sampled HLLD inner shear branches with active jumps do not admit the out-of-solver face representation (Proposition~\ref{prop:anchoring-criterion}).
In the fluid limit the construction reduces to a conventional single-pass fluid HLLC update (Section~\ref{subsec:reductions}).

The construction involves no interface localization and no ghost states.
Every cell is updated by the same dual-pass evaluation regardless of its volume-fraction content.
It therefore advances genuine fluid--solid mixture cells with intermediate volume fractions by the same operations as pure cells, a state that interface-localizing ghost-fluid couplings do not naturally represent.
A validated physical closure for such media, for example saturated porous solids, is future work.

Several existing solvers warrant explicit comparison.
In the hyperelastic branch, Barton~\cite{Barton2019} constructs a multicomponent diffuse-interface fluid--solid HLLD with the stretch tensor as part of the Riemann state.
It is the closest existing construction, distinguished by its constitutive branch rather than by interface treatment.
L\'{o}pez Ortega and coworkers~\cite{LopezOrtegaLombardiniPullinMeiron2014} and Wang, Adami, and Adams~\cite{WangAdamiAdams2024} use Eulerian level-set ghost-fluid HLLD constructions with evolved kinematic measures.
In the hypoelastic branch, the HLLD-e solver of Zhao and coworkers~\cite{ZhaoWangJiaWang2023} resolves elastic shear waves in two-dimensional hypoelasticity, and the modified-ghost-fluid method of Liu, Xie, and Khoo~\cite{LiuXieKhoo2008} couples fluids and solids, but both localize the material interface rather than using the diffuse representation here.
The HLLC solvers of Liu, Cheng, and Liu~\cite{LiuChengLiu2019} and Li, Zhai, and Shen~\cite{LiZhaiShen2022} are cell-centered Lagrangian rather than Eulerian.
The hypoelastic HLL-type solver of Zeng and coworkers~\cite{ZengEtAl2023} and the stress-continuity solver of Zeng, Cao, and coworkers~\cite{ZengEtAl2024} address single-material solids rather than fluid--solid or multicomponent configurations.
Two recent constructions are adjacent.
Serezhkin and Menshov~\cite{SerezhkinMenshov2025} develop a two-material Eulerian diffuse-interface hypoelastic model with an HLLC-type solver in a Baer--Nunziato relaxation framework, so the shear waves are not resolved as a distinct family.
Zeng and coworkers~\cite{ZengEtAl2026} treat the nonconservative products of hypoelasticity with a path-conservative scheme built on Hooke's-law-preserving paths, for a single-material solid rather than the multicomponent fluid--solid setting here.
General nonconservative frameworks include the Dal Maso--LeFloch--Murat theory~\cite{DalMasoLeFlochMurat1995}, path-conservative Godunov constructions~\cite{Pares2006,MunozRuizPares2007,CastroDiaz2013}, and HLLEM/HLLI-type solvers~\cite{DumbserBalsara2016}.
The solid and fluid--solid literature above instead develops purpose-built characteristic-wave HLLC and HLLD solvers, and we follow that line with anchored HLLD tailored to the multicomponent hypoelastic diffuse-interface setting.

The HLL, HLLC, and anchored HLLD members form a three-tier hierarchy of increasing wave resolution within one framework.
HLL is the two-wave baseline of Rodriguez and Johnsen.
HLLC restores the material contact wave through a new adaptation of the established Kapila-type out-of-solver principle, evaluated with the HLLC face trace.
Anchored HLLD, the principal new construction, additionally resolves the shear pair with the nonconservative coefficient carried inside the solve.
To our knowledge this is the first HLLD solver for a multicomponent hypoelastic diffuse-interface fluid--solid model with no separately computed interface description.
Two per-cell-frozen fluxes per face also appear in the double-flux method of Abgrall and Karni~\cite{AbgrallKarni2001}, which freezes a thermodynamic closure to keep material interfaces oscillation-free.
The similarity is structural only.
There the frozen quantity is an equation-of-state closure of a conservative flux, whereas here it is the coefficient of a nonconservative product that sets the impedance of a resolved inner wave.

We support the construction with analytical results on consistency, contact preservation, conservation and its drift rate, and the structural reductions linking the three tiers.
We verify the solvers on targeted one- and two-dimensional tests that isolate the inactive- and active-shear regimes.
With the shear active, anchored HLLD resolves both the contact and the shear waves, HLLC the contact alone, and HLL neither sharply.
At production scale, in acoustic and shock scattering by an epoxy cylinder, HLLC and anchored HLLD reduce the diffusion of the material interface compared to HLL, and anchored HLLD gives the sharpest stress response in both cases and lies closest to the semi-analytic reference in the acoustic case.
The anchored construction conserves the species masses exactly, which the nonconservative ghost-fluid methods do not~\cite{ZhaoWangJiaWang2023,LiuXieKhoo2008,XuLiu2011}, and its drift in the tangential momenta and total energy converges to zero at the reconstruction's face-jump order on smooth data (\ref{sec:analysis}).
One-dimensional contact and shock tests also demonstrate first-order drift convergence.

The remainder of the paper is organized as follows.
Section~\ref{sec:governing} presents the governing equations, the conservative--nonconservative split, and the characteristic wave structure.
Section~\ref{sec:method} develops the HLL, HLLC, and anchored HLLD constructions and their reductions.
Section~\ref{sec:results-1d} reports one-dimensional verification, including smooth eigenmode convergence, contact and Fresnel tests, and shear-resolution comparisons.
Section~\ref{sec:results-2d} assesses the solvers at production scale on transient acoustic scattering by an epoxy cylinder against a semi-analytic reference and on a two-dimensional shock--cylinder interaction, and demonstrates anchored HLLD on a three-dimensional shock--ellipsoid interaction.
Section~\ref{sec:conclusions} concludes.
\ref{sec:analysis} establishes the analytical properties of the out-of-solver update and the anchored HLLD face operator.
\ref{app:3d} extends the system and the anchored HLLD construction to three dimensions.
\ref{app:complex} presents a shock interaction with complex inclusions.

\FloatBarrier
\section{Governing equations and wave structure}\label{sec:governing}

\subsection{Governing equations}

We consider the inviscid two-dimensional multicomponent hypoelastic model in Cartesian coordinates,
\begin{equation}
\label{eq:full2d}
\begin{aligned}
\partial_t \alpha_1 + \nabla\!\cdot(\alpha_1 \bm{u}) &= (\alpha_1+\Kc)\,\nabla\!\cdot\bm{u},\\
\partial_t m_j + \nabla\!\cdot(m_j\bm{u}) &= 0, \qquad j=1,2,\\
\partial_t(\rho \bm{u}) + \nabla\!\cdot\bigl(\rho\,\bm{u}\otimes \bm{u} + p\,\bm{I} - \bm{\tau}\bigr) &= \bm{0},\\
\partial_t(\rho E) + \nabla\!\cdot\bigl((\rho E+p)\bm{u} - \bm{\tau}\,\bm{u}\bigr) &= 0,\\
\partial_t(\rho\tau_{ij}) + \nabla\!\cdot(\rho\tau_{ij}\bm{u}) &= \rho\Bigl(\tau_{kj}\,\partial_k u_i + \tau_{ik}\,\partial_k u_j - \tau_{ij}\,\partial_k u_k\Bigr)
+ 2\rho G\,d_{ij},
\end{aligned}
\end{equation}
with
\begin{equation}
\label{eq:modeldefs}
\bm{u}=(u,v)^T,
\qquad
m_1=\alpha_1\rho_1,
\qquad
m_2=\alpha_2\rho_2,
\qquad
\alpha_2=1-\alpha_1,
\qquad
\rho=m_1+m_2,
\end{equation}
and deviatoric strain rate
\begin{equation}
\label{eq:dij}
d_{ij}=\frac12\left(\partial_j u_i+\partial_i u_j\right)-\frac13(\nabla\!\cdot\bm{u})\,\delta_{ij}.
\end{equation}
In two dimensions the total energy is
\begin{equation}
\label{eq:Edef}
E=e+\frac12(u^2+v^2),
\end{equation}
where the internal energy carries a thermal and an elastic contribution,
\begin{equation}
\label{eq:caloric}
\rho e=\frac{p+\gamma\pinf}{\gamma-1}+e_{\mathrm{el}},
\qquad
e_{\mathrm{el}}=\frac{\tau_{xx}^2+\tau_{yy}^2+2\tau_{xy}^2}{4G},
\end{equation}
with \(\gamma\) and \(\pinf\) the stiffened-gas parameters and \(e_{\mathrm{el}}\) the in-plane quadratic elastic energy per unit volume, following Rodriguez and Johnsen~\cite{RodriguezJohnsen2019}.
The pressure is recovered from the conserved energy by inverting Eqs.~\eqref{eq:Edef}--\eqref{eq:caloric}, \(p=(\gamma-1)\bigl(\rho E-\tfrac12\rho(u^2+v^2)-e_{\mathrm{el}}\bigr)-\gamma\pinf\).
The mixture parameters obey the standard stiffened-gas mixture rules together with a volume-fraction-weighted shear modulus,
\begin{equation}
\label{eq:mixrules}
\frac{1}{\gamma-1}=\sum_{j}\frac{\alpha_j}{\gamma_j-1},
\qquad
\frac{\gamma\,\pinf}{\gamma-1}=\sum_{j}\alpha_j\,\frac{\gamma_j\pinfj{j}}{\gamma_j-1},
\qquad
G=\sum_{j}\alpha_j G_j,
\end{equation}
with \(j=1,2\) the phase index.
The acoustic closure is Wood's mixture relation over the phase bulk moduli~\cite{Wood1930},
\begin{equation}
\label{eq:eos}
a^2=\frac{1}{\rho}\left[\sum_{j}\frac{\alpha_j}{B_j}\right]^{-1},
\qquad
B_j=\gamma_j\bigl(p+\pinfj{j}\bigr),
\end{equation}
with \(B_j\) the bulk modulus of phase \(j\).
The symbol \(\Kc\) denotes the Kapila-type correction coefficient of the volume-fraction equation, distinguished from the side indices \(K\in\{L,R\}\) used below.
The mixture coefficients \(\Kc\) and \(G\) are evaluated from cell-centered primitive data in the numerical method.
The Kapila-type coefficient is \(\Kc=\alpha_1\alpha_2(B_2-B_1)/(\alpha_1B_2+\alpha_2B_1)\), with \(B_j\) the phase bulk modulus of Eq.~\eqref{eq:eos}.

For the \(x\)-directed Riemann problem the \(y\)-derivative terms are dropped and the system reduces to
\begin{equation}
\label{eq:quasi1d}
\partial_t U + \partial_x F(W) = N^u(W)\,\partial_x u + N^v(W)\,\partial_x v,
\end{equation}
with conservative variables
\begin{equation}
\label{eq:fullstate}
U=
\begin{bmatrix}
\alpha_1 & m_1 & m_2 & \rho u & \rho v & \rho E & \rho\tau_{xx} & \rho\tau_{yy} & \rho\tau_{xy}
\end{bmatrix}^T,
\end{equation}
conservative flux
\begin{equation}
\label{eq:flux}
F(W)=
\begin{bmatrix}
\alpha_1 u \\
m_1 u \\
m_2 u \\
\rho u^2 + p - \tau_{xx} \\
\rho uv - \tau_{xy} \\
(\rho E+p)u - u\tau_{xx} - v\tau_{xy} \\
\rho u\tau_{xx} \\
\rho u\tau_{yy} \\
\rho u\tau_{xy}
\end{bmatrix},
\end{equation}
and nonconservative-coefficient column vectors
\begin{equation}
\label{eq:nc}
N^u(W) =
\begin{bmatrix}
\alpha_1+\Kc \\
0 \\
0 \\
0 \\
0 \\
0 \\
\rho\bigl(\tfrac43 G+\tau_{xx}\bigr) \\
-\rho\bigl(\tfrac23 G+\tau_{yy}\bigr) \\
0
\end{bmatrix},
\qquad
N^v(W) =
\begin{bmatrix}
0 \\
0 \\
0 \\
0 \\
0 \\
0 \\
0 \\
2\rho\tau_{xy} \\
\rho\bigl(G+\tau_{xx}\bigr)
\end{bmatrix}.
\end{equation}
We call the vector \(U\) of Eq.~\eqref{eq:fullstate} the conservative variables, as opposed to the primitive variables \(W\) of Eq.~\eqref{eq:Wdef}, though the volume fraction and the stress components carry the nonconservative sources \(N^u, N^v\) of Eq.~\eqref{eq:nc} and are not conserved at the governing-equation level, a distinction made precise in \ref{sec:analysis}.

The \(y\)-directed subsystem follows by the permutation
\begin{equation}
\label{eq:yperm}
(u,\tau_{xx}) \leftrightarrow (v,\tau_{yy}),
\qquad
\tau_{xy}\ \text{invariant}.
\end{equation}

The three-dimensional system of~\ref{app:3d} adds the velocity component \(w\) and the stress components \(\tau_{xz}\), \(\tau_{yz}\), and \(\tau_{zz}\).
The two-dimensional system~\eqref{eq:full2d} is the in-plane form of the model, following Rodriguez and Johnsen~\cite{RodriguezJohnsen2019}: the strain rate~\eqref{eq:dij} keeps the three-dimensional deviatoric factor \(\tfrac13\), the out-of-plane stress \(\tau_{zz}\) is not tracked, and the elastic energy~\eqref{eq:caloric} is correspondingly in-plane.
In the linear regime the in-plane dynamics coincide with the plane-strain response, which serves as the semi-analytic reference in Section~\ref{subsec:s61-acoustic-scattering}.
Three dimensions add no new wave family.
For the \(x\)-directed quasi-one-dimensional subsystem, the tangential pairs \((v,\tau_{xy})\) and \((w,\tau_{xz})\) share the same shear-wave speed \(c_s=\sqrt{(G+\tau_{xx})/\rho}\) and are resolved independently but simultaneously by the inner HLLD waves.
The cross-stress \(\tau_{yz}\) is transported at the contact speed \(u\), with nonconservative corrections coupling it to both tangential pairs through the coefficients \(\tau_{xz}\) and \(\tau_{xy}\).
The \(y\)- and \(z\)-directed subsystems follow by index permutation, and the full three-dimensional system and directional mappings are given in~\ref{app:3d}.

\subsection{Characteristic structure}\label{subsec:char}

Introducing the primitive-like vector
\begin{equation}
\label{eq:Wdef}
W=
\begin{bmatrix}
\alpha_1 & m_1 & m_2 & u & v & p & \tau_{xx} & \tau_{yy} & \tau_{xy}
\end{bmatrix}^T,
\end{equation}
Eq.~\eqref{eq:quasi1d} may be recast in quasi-linear form,
\begin{equation}
\label{eq:quasilinear}
\partial_t W + \widetilde A(W)\,\partial_x W = 0,
\end{equation}
with the characteristic matrix
\begin{equation}
\label{eq:jacobian}
\widetilde A(W)=
\begin{bmatrix}
u & 0 & 0 & -\Kc & 0 & 0 & 0 & 0 & 0 \\
0 & u & 0 & m_1 & 0 & 0 & 0 & 0 & 0 \\
0 & 0 & u & m_2 & 0 & 0 & 0 & 0 & 0 \\
0 & 0 & 0 & u & 0 & 1/\rho & -1/\rho & 0 & 0 \\
0 & 0 & 0 & 0 & u & 0 & 0 & 0 & -1/\rho \\
0 & 0 & 0 & \rho a^2 & 0 & u & 0 & 0 & 0 \\
0 & 0 & 0 & -\left(\tfrac{4G}{3}+\tau_{xx}\right) & 0 & 0 & u & 0 & 0 \\
0 & 0 & 0 & \tfrac{2G}{3}+\tau_{yy} & -2\tau_{xy} & 0 & 0 & u & 0 \\
0 & 0 & 0 & 0 & -\left(G+\tau_{xx}\right) & 0 & 0 & 0 & u
\end{bmatrix}.
\end{equation}
This is the characteristic matrix used by the solver, following Rodriguez and Johnsen~\cite{RodriguezJohnsen2019}.
The tilde marks an approximation in the pressure row, quantified below.
In this quasi-linear form the nonconservative source of Eq.~\eqref{eq:quasi1d} is absorbed into the characteristic matrix.
The conservative variables \(U\) of~\eqref{eq:fullstate} and the primitive variables \(W\) of~\eqref{eq:Wdef} are two representations of the same state, related by the closure of Eqs.~\eqref{eq:Edef}--\eqref{eq:mixrules}.
Each expression uses whichever representation it requires.
The eigenvalues of \(\widetilde A\) are
\begin{equation}
\label{eq:eigs}
\lambda_1=\cdots=\lambda_5=u,
\qquad
\lambda_{s,\pm}=u\pm c_s,
\qquad
\lambda_{f,\pm}=u\pm c_f,
\end{equation}
with
\begin{equation}
\label{eq:wavespeeds}
c_s=\sqrt{\frac{G+\tau_{xx}}{\rho}},
\qquad
c_f=\sqrt{a^2+(4G/3+\tau_{xx})/\rho}.
\end{equation}
The pressure row of \(\widetilde A\) is an approximation.
With the elastic energy of Eq.~\eqref{eq:caloric} carried in the conserved energy, a direct chain-rule computation for a single material with constant \((\gamma,\pinf,G)\) gives the exact pressure row
\begin{equation}
\label{eq:exact-prow}
\partial_t p+u\,\partial_x p+\bigl[\gamma(p+\pinf)+b_u\bigr]\,\partial_x u+b_v\,\partial_x v=0,
\end{equation}
with
\begin{equation}
\label{eq:BuBv}
b_u=(\gamma-1)\left[-\frac{\tau_{xx}+\tau_{yy}}{3}+\frac{3\tau_{xx}^2-\tau_{yy}^2+2\tau_{xy}^2}{4G}\right],
\qquad
b_v=(\gamma-1)\,\frac{\tau_{xy}\,(\tau_{xx}+\tau_{yy})}{G}.
\end{equation}
Following Rodriguez and Johnsen~\cite{RodriguezJohnsen2019}, \(\widetilde A\) drops \(b_u\) and \(b_v\), on the basis that they are small relative to \(\gamma(p+\pinf)\) in the small-deformation regime targeted here.
The approximation affects only the pressure row.
The shear speed \(c_s\) is therefore exact.
The fast speed \(c_f\) is an estimate, with the exact single-material fast speed satisfying \(c_{f,\mathrm{exact}}^2=c_f^2+b_u/\rho\), so \(c_f\) may slightly differ from the exact fast speed.
For admissible elastic states \(G+\tau_{xx}>0\) and \(c_s\) is real.
A concise set of right eigenvectors, normalized to expose the variable jumps, is given in Table~\ref{tab:eigenvectors}.
The rows of the table correspond to the primitive components of \(W\) in the ordering of~\eqref{eq:Wdef} and each column is one eigenvector, so the nonzero entries of a column indicate directly which variables jump on the associated wave.
The shear and fast pairs are tabulated with \(\pm\) (resp.\ \(\mp\)) signs that correspond to the upper and lower eigenvalues \(\lambda_{s,\pm}=u\pm c_s\) and \(\lambda_{f,\pm}=u\pm c_f\).

\begin{table}[htbp]
\centering
\caption{Right eigenvectors of \(\widetilde A(W)\) in the primitive ordering~\eqref{eq:Wdef}. Each column is one eigenvector, denoted \(r_\bullet\); each row is one primitive variable. Signs \(\pm\) and \(\mp\) correspond to the upper and lower eigenvalues of the shear pair \(\lambda_{s,\pm}\) and the fast pair \(\lambda_{f,\pm}\).}
\label{tab:eigenvectors}
\renewcommand{\arraystretch}{1.3}
\setlength{\tabcolsep}{6pt}
\begin{tabular}{lccccccc}
\toprule
 & \multicolumn{5}{c}{Contact (\(\lambda=u\))} & Shear (\(\lambda_{s,\pm}\)) & Fast (\(\lambda_{f,\pm}\)) \\
\cmidrule(lr){2-6}\cmidrule(lr){7-7}\cmidrule(lr){8-8}
Variable & \(r_{\alpha}\) & \(r_{m_1}\) & \(r_{m_2}\) & \(r_{p_T}\) & \(r_{\tau_{yy}}\) & \(r_{s,\pm}\) & \(r_{f,\pm}\) \\
\midrule
\(\alpha_1\)  & 1 & 0 & 0 & 0 & 0 & 0 & \(\mp\,\Kc/c_f\) \\
\(m_1\)       & 0 & 1 & 0 & 0 & 0 & 0 & \(\pm\,m_1/c_f\) \\
\(m_2\)       & 0 & 0 & 1 & 0 & 0 & 0 & \(\pm\,m_2/c_f\) \\
\(u\)         & 0 & 0 & 0 & 0 & 0 & 0 & 1 \\
\(v\)         & 0 & 0 & 0 & 0 & 0 & 1 & 0 \\
\(p\)         & 0 & 0 & 0 & 1 & 0 & 0 & \(\pm\,\rho a^2/c_f\) \\
\(\tau_{xx}\) & 0 & 0 & 0 & 1 & 0 & 0 & \(\mp\,(\tfrac43 G+\tau_{xx})/c_f\) \\
\(\tau_{yy}\) & 0 & 0 & 0 & 0 & 1 & \(\mp\,2\tau_{xy}/c_s\) & \(\pm\,(\tfrac23 G+\tau_{yy})/c_f\) \\
\(\tau_{xy}\) & 0 & 0 & 0 & 0 & 0 & \(\mp\,\rho c_s\) & 0 \\
\bottomrule
\end{tabular}
\end{table}

Thus the five-fold contact family carries the material variables \((\alpha_1,m_1,m_2)\) and the transverse stress \(\tau_{yy}\).
Across it, \(p\) and \(\tau_{xx}\) jump while the total pressure \(p_T = p - \tau_{xx}\) stays invariant.
The shear pair at \(u\pm c_s\) carries the tangential velocity \(v\), the shear stress \(\tau_{xy}\), and, when \(\tau_{xy}\ne0\), a coupled jump in \(\tau_{yy}\).
The fast pair at \(u\pm c_f\) carries the longitudinal variables \((u,p,\tau_{xx})\), a coupled jump in \(\tau_{yy}\), and the material-compression response in \(\alpha_1,m_1,m_2\).
The shear eigenvectors depend explicitly on \(G+\tau_{xx}\) through the shear speed \(c_s\) of Eq.~\eqref{eq:wavespeeds}.
The wave speed and impedance are thus controlled by the same coefficient that appears in the nonconservative stress-production term of Eq.~\eqref{eq:nc}.
This is the algebraic reason that the HLLD shear subfan cannot be treated as a passive post-processing correction, since the nonconservative coefficient is part of the eigenstructure of a resolved wave family.

\subsection{Limiting structure}\label{subsec:limits}

Two limiting cases clarify the model's relationship to simpler systems and motivate the solver design.

In the fluid limit, with \(G=0\) and \(\bm{\tau}=\bm{0}\), the shear speed vanishes (\(c_s=0\)), the fast speed reduces to the acoustic speed (\(c_f=a\)), and the deviatoric stress equations are trivially satisfied.
The remaining system is the Kapila-type five-equation model \cite{Kapila2001} for compressible two-phase flow:
\begin{equation}
\label{eq:kapila}
\begin{gathered}
\partial_t\alpha_1+u\,\partial_x\alpha_1 = \Kc\,\partial_x u,
\qquad
\partial_t m_j + \partial_x(m_j u)=0,
\\[2pt]
\partial_t(\rho u)+\partial_x(\rho u^2+p)=0,
\qquad
\partial_t(\rho v)+\partial_x(\rho u v)=0,
\\[2pt]
\partial_t(\rho E)+\partial_x\bigl((\rho E+p)\,u\bigr)=0,
\end{gathered}
\end{equation}
with the mixture sound speed given by Wood's relation, Eq.~\eqref{eq:eos}, at \(G_j=0\).
The numerical method must reduce to a standard diffuse-interface update in this limit.

In the single-material elastic limit, with \(\alpha_1=1\) and \(\Kc=0\), the volume-fraction equation becomes trivial and the system reduces to a single-material hypoelastic solid.
The wave structure retains all five families, and the solver should behave as an HLLD solver for a single elastic material, with the anchoring retained because \(G+\tau_{xx}\) still varies from cell to cell.

The shear pair at \(u\pm c_s\) carries the tangential velocity \(v\), the shear stress \(\tau_{xy}\), and, when \(\tau_{xy}\ne0\), the coupled transverse stress \(\tau_{yy}\).
A three-wave HLLC solver does not resolve this pair.
It carries the tangential velocity as a single averaged state across the outer fast waves, much as HLL averages every field, and transports the shear stress with the contact.
The shear waves are then smeared over the fast-wave fan through the averaged velocity, rather than resolved at the slower shear speed, so HLLC remains more diffusive than HLLD on the shear waves.
The HLLD extension resolves the shear pair explicitly at the cost of a more complex star-state construction and the anchored dual-pass face evaluation discussed in Section~\ref{subsec:hlld}.

\FloatBarrier
\section{Numerical method}\label{sec:method}

The numerical method is organized around the three-tier treatment of the nonconservative products described in the introduction.
Section~\ref{subsec:ic-nc-update} states the interface-consistent nonconservative update, a conservative-form finite-volume update of the full state \(U\) that combines the standard HLL or HLLC numerical flux with an out-of-solver nonconservative source built from face-trace velocities supplied by the chosen wave fan.
The HLL baseline used for uniform comparisons follows in Section~\ref{subsec:hll}, and the HLLC single-pass member in Section~\ref{subsec:hllc}, whose out-of-solver source uses the HLLC face-trace velocities of Eq.~\eqref{eq:Psi-hllc}.
Section~\ref{subsec:hlld} develops the anchored HLLD production solver, in which the nonconservative coefficients enter the Riemann solve itself, and Section~\ref{subsec:reductions} records the reductions among these members.
Section~\ref{subsec:algorithms} summarizes the update algorithmically, and Section~\ref{subsec:cost} records the measured solver cost.
The construction is presented for the two-dimensional system of Section~\ref{sec:governing}.
The full three-dimensional system is given in~\ref{app:3d}.

\subsection{Interface-consistent nonconservative update}\label{subsec:ic-nc-update}

The HLL and HLLC schemes of Sections~\ref{subsec:hll}--\ref{subsec:hllc} update the conservative variables \(U\) of Eq.~\eqref{eq:fullstate} through a finite-volume scheme combining the standard HLL or HLLC numerical flux for the conservative-flux part \(\partial_x F(W)\) of Eq.~\eqref{eq:quasi1d} with an out-of-solver source carrying the nonconservative contribution from Eq.~\eqref{eq:nc}.
Cell \(i\)'s update reads
\begin{equation}
\label{eq:oos-update}
\begin{aligned}
 U_i^{n+1} = U_i^n
 &- \frac{\Delta t}{\Delta x}\bigl(\mathcal{F}^c_{i+1/2}-\mathcal{F}^c_{i-1/2}\bigr) \\
 &+ \frac{\Delta t}{\Delta x}\Bigl[N^u_i\bigl(\Psi_{u,i+1/2}-\Psi_{u,i-1/2}\bigr) + N^v_i\bigl(\Psi_{v,i+1/2}-\Psi_{v,i-1/2}\bigr)\Bigr],
\end{aligned}
\end{equation}
where \(\mathcal{F}^c\) is the HLL or HLLC numerical conservative flux (defined in Sections~\ref{subsec:hll}--\ref{subsec:hllc} below), \(N^u_i := N^u(W_i),\, N^v_i := N^v(W_i)\) are the conservative-form nonconservative-coefficient column vectors of Eq.~\eqref{eq:nc} evaluated at the cell-\(i\) primitive variables, and the face-trace velocities \(\Psi_u, \Psi_v\) are supplied by the wave fan of the chosen scheme: HLL traces in Section~\ref{subsec:hll}, Eq.~\eqref{eq:Psi-hll}; HLLC traces in Section~\ref{subsec:hllc}, Eq.~\eqref{eq:Psi-hllc}.

The face-trace velocities \(\Psi_u, \Psi_v\) are not independent reconstructed quantities but the velocity values the fan presents at the face, constructed for each scheme in Sections~\ref{subsec:hll} and~\ref{subsec:hllc}.

The construction is structurally the same idea as the familiar \(\Kc\,\partial_x u\) and \(\alpha\,\partial_x u\) treatments in Kapila-type five-equation schemes: a cell-centered coefficient multiplies a velocity gradient built from an interface-consistent velocity trace supplied by the Riemann fan.
The hypoelastic terms require the same structure for the coefficients \(G+\tau_{xx},\, \tfrac43 G+\tau_{xx},\, \tau_{xy}\) that appear as entries of \(N^u, N^v\) in Eq.~\eqref{eq:nc}.
The analytical statements of \ref{sec:analysis} concern Eq.~\eqref{eq:oos-update} with Eqs.~\eqref{eq:Psi-hll} and~\eqref{eq:Psi-hllc}, and the anchored face operator introduced in Section~\ref{subsec:hlld}.

The interface-consistent nonconservative update of Eq.~\eqref{eq:oos-update} expresses each face contribution to cell~\(i\) as an anchor-independent conservative flux \(\mathcal F^c\) plus an out-of-solver source built from face-trace velocities \((\Psi_u, \Psi_v)\).
In terms of an anchored face operator \(\mathcal R(W_L, W_R; \widehat W)\), this update corresponds to a face flux of the form
\begin{equation}
\label{eq:ic-representation}
\mathcal R(W_L, W_R; \widehat W)
= \mathcal F^c(W_L, W_R)
- N^u(\widehat W)\,\Psi_u(W_L, W_R)
- N^v(\widehat W)\,\Psi_v(W_L, W_R),
\end{equation}
with \(\mathcal F^c, \Psi_u, \Psi_v\) independent of the anchor \(\widehat W\).
We call the anchored face flux \emph{representable in interface-consistent form} on a branch when such a triple exists for every admissible anchor sampling that branch.
On branches that admit this form, the anchored dual-pass cell update collapses to the single-pass update of Eq.~\eqref{eq:oos-update} (Lemma~\ref{lem:dual-pass-to-bb}).
On branches that do not, no single-pass face representation of this form exists.
\ref{sec:analysis} gives a sufficient condition for a branch to admit no such representation (Proposition~\ref{prop:anchoring-criterion}) and shows that the active shear branches of anchored HLLD satisfy it.

\subsection{HLL baseline with interface-consistent traces}\label{subsec:hll}

The HLL member uses the two-wave fan \(S_L<S_R\) with the fast-wave estimates used throughout this paper,
\begin{equation}
\label{eq:SLSR}
S_L=\min\bigl(u_L-c_{f,L},\,u_R-c_{f,R}\bigr),
\qquad
S_R=\max\bigl(u_L+c_{f,L},\,u_R+c_{f,R}\bigr).
\end{equation}
As in Section~\ref{subsec:char}, these are signal-speed estimates of the reduced characteristic model.
For a scalar conservation law with conserved variable \(U\) and flux \(F\), \(\partial_t U+\partial_x F=0\), the HLL flux is
\begin{equation}
\label{eq:hll-generic}
\mathcal{F}_{\mathrm{HLL}}(U,F)=
\begin{cases}
F_L, & 0\le S_L,\\[2mm]
\dfrac{S_R F_L-S_L F_R+S_LS_R(U_R-U_L)}{S_R-S_L}, & S_L\le0\le S_R,\\[4mm]
F_R, & S_R\le0.
\end{cases}
\end{equation}
We use the standard interface-consistent velocity traces~\cite{JohnsenColonius2006}, obtained by setting the conserved variable to unity and the flux to the velocity in Eq.~\eqref{eq:hll-generic}, \(\mathcal{F}_{\mathrm{HLL}}(1,u)\) and \(\mathcal{F}_{\mathrm{HLL}}(1,v)\).
These traces give the velocity gradients in the nonconservative update the same HLL wave-speed weighting as the conservative flux.
The HLL traces are
\begin{equation}
\label{eq:Psi-hll}
\begin{aligned}
\Psi_u^{\mathrm{HLL}}=\mathcal{F}_{\mathrm{HLL}}(1,u)&=
\begin{cases}
u_L, & 0\le S_L,\\[1mm]
\dfrac{S_Ru_L-S_Lu_R}{S_R-S_L}, & S_L\le0\le S_R,\\[3mm]
u_R, & S_R\le0,
\end{cases}
\\[2mm]
\Psi_v^{\mathrm{HLL}}=\mathcal{F}_{\mathrm{HLL}}(1,v)&=
\begin{cases}
v_L, & 0\le S_L,\\[1mm]
\dfrac{S_Rv_L-S_Lv_R}{S_R-S_L}, & S_L\le0\le S_R,\\[3mm]
v_R, & S_R\le0.
\end{cases}
\end{aligned}
\end{equation}
The same traces serve the volume fraction and the stresses alike, supplying the velocity gradients in the columns \(N^u\) and \(N^v\) of Eq.~\eqref{eq:nc}.
The HLL conservative-core flux is obtained by applying Eq.~\eqref{eq:hll-generic} componentwise to the conservative variables and flux, and the conservative-form update is completed through Eq.~\eqref{eq:oos-update} with the velocity traces~\eqref{eq:Psi-hll}.\footnote{A finite-difference nonconservative update, in which \(\partial_x u\) is replaced by a centered cell-difference of \(u\), is equally usable for HLL in practice and is the form used by the original Rodriguez and Johnsen implementation in MFC: because HLL smears the material interface immediately through its two-wave dissipation, the sharp-contact pathology that forces interface-consistent transport for HLLC does not arise in the same way.
The present paper uses the interface-consistent form everywhere so that HLL, HLLC, and HLLD are compared within a common split-update framework.}

\subsection{Three-wave HLLC}\label{subsec:hllc}

The HLLC construction uses the three-wave fan
\begin{equation}
\label{eq:hllc-fan}
S_L < S_M < S_R,
\end{equation}
with outer bounds given by Eq.~\eqref{eq:SLSR}.
Introduce the total pressure (also called the normal traction)
\begin{equation}
\label{eq:pTdef}
p_T = p - \tau_{xx}
\end{equation}
and the left/right impedances
\begin{equation}
\label{eq:Adef}
A_L=\rho_L(S_L-u_L),
\qquad
A_R=\rho_R(S_R-u_R).
\end{equation}
Continuity of \(u\) and \(p_T\) across the contact gives the contact speed and common total pressure
\begin{equation}
\label{eq:SM-hllc}
S_M=
\frac{p_{T,R}-p_{T,L}+A_L u_L-A_R u_R}{A_L-A_R},
\end{equation}
\begin{equation}
\label{eq:Tstar-hllc}
p_T^* = p_{T,L} + A_L(S_M-u_L)
    = p_{T,R} + A_R(S_M-u_R).
\end{equation}
The compression factors are
\begin{equation}
\label{eq:xi}
\xi_K = \frac{S_K-u_K}{S_K-S_M}
= \frac{\rho_K^*}{\rho_K},
\qquad K\in\{L,R\}.
\end{equation}
The material-mass and normal-momentum star states are
\begin{equation}
\label{eq:star-masses}
m_{j,K}^*=m_{j,K}\,\xi_K,
\qquad
\rho_K^*=\rho_K\,\xi_K,
\qquad j=1,2,
\end{equation}
\begin{equation}
\label{eq:star-mom}
(\rho u)_K^* = \rho_K^* S_M.
\end{equation}
The shear waves of Table~\ref{tab:eigenvectors} carry the jumps in \(v\) and \(\tau_{xy}\).
HLLC omits this pair, so it must place these jumps on its three waves.
We let \(v\) jump across the outer waves \(S_L\) and \(S_R\) and remain continuous across \(S_M\).
Toro's HLLC integral consistency~\cite{ToroSpruceSpeares1994} then determines the common \(v^*\) by matching the fan-averaged tangential momentum \(\rho v\), with the conservative flux \(\rho uv-\tau_{xy}\) of Eq.~\eqref{eq:flux}, to its HLL average,
\begin{equation}
\label{eq:vstar-hllc}
v^*=\frac{A_R v_R-A_L v_L+\tau_{xy,R}-\tau_{xy,L}}{A_R-A_L}.
\end{equation}
The tangential momentum is \((\rho v)_K^* = \rho_K^*\,v^*\).
The stress production \(\rho(G+\tau_{xx})\partial_x v\) of Eq.~\eqref{eq:nc} is handled by the interface-consistent nonconservative source of Section~\ref{subsec:ic-nc-update} and does not enter the HLLC core.
The conservative shear-stress row, with flux \(\rho u\tau_{xy}\), is then a passive advection equation, and the outer-wave Rankine--Hugoniot conditions give
\begin{equation}
\label{eq:tauxystar-hllc}
\tau_{xy,K}^* = \tau_{xy,K}, \qquad K=L,R,
\end{equation}
so the transported shear stress jumps only at \(S_M\).
The energy star state, matching the energy flux \((\rho E + p)u - u\tau_{xx} - v\tau_{xy}\) of Eq.~\eqref{eq:flux}, is
\begin{equation}
\label{eq:star-energy}
(\rho E)_K^* =
\frac{(S_K-u_K)(\rho E)_K - p_{T,K} u_K + p_T^* S_M - v^*\bigl[\tau_{xy,K}-A_K(v^*-v_K)\bigr] + v_K\tau_{xy,K}}{S_K-S_M}.
\end{equation}
The trailing terms are the tangential shear-work contribution from the energy Rankine--Hugoniot balance.
The bracket is the shear traction that the tangential-momentum Rankine--Hugoniot condition gives across the same outer wave.
It is common to both sides of \(S_M\) by Eq.~\eqref{eq:vstar-hllc}, while the transported stress of Eq.~\eqref{eq:tauxystar-hllc} keeps its own value on each side.

Let
\begin{equation}
\label{eq:Ucstar}
U_K^{c,*} =
\begin{bmatrix}
\alpha_{1,K}^* & m_{1,K}^* & m_{2,K}^* & (\rho u)_K^* & (\rho v)_K^* & (\rho E)_K^* & (\rho\tau_{xx})_K^* & (\rho\tau_{yy})_K^* & (\rho\tau_{xy})_K^*
\end{bmatrix}^T.
\end{equation}
The HLLC numerical conservative-core flux is
\begin{equation}
\label{eq:hllc-coreflux}
\mathcal{F}^c_{\mathrm{HLLC}}=
\begin{cases}
F(W_L), & 0\le S_L,\\[2mm]
F(W_L)+S_L\bigl(U_L^{c,*}-U_L^c\bigr), & S_L\le 0\le S_M,\\[2mm]
F(W_R)+S_R\bigl(U_R^{c,*}-U_R^c\bigr), & S_M\le 0\le S_R,\\[2mm]
F(W_R), & S_R\le 0,
\end{cases}
\end{equation}
where \(F(W)\) is the conservative flux of Eq.~\eqref{eq:flux}, \(U_K^{c}\) collects the conservative variables of the reconstructed state \(W_K\), and \(U_K^{c,*}\) is the HLLC star state of Eq.~\eqref{eq:Ucstar}.

The volume-fraction and stress rows of the conservative-core flux are advected passively.
The volume fraction and all stresses reduce to the upwind conserved value scaled by the compression \(\xi_K\):
\begin{equation}
\label{eq:rhoz-star}
\begin{aligned}
\alpha_{1,K}^* &= \alpha_{1,K}\,\xi_K, &\qquad (\rho\tau_{xx})_K^* &= \rho_K^*\,\tau_{xx,K},\\
(\rho\tau_{yy})_K^* &= \rho_K^*\,\tau_{yy,K}, &\qquad (\rho\tau_{xy})_K^* &= \rho_K^*\,\tau_{xy,K}.
\end{aligned}
\end{equation}

The HLLC conservative flux of Eq.~\eqref{eq:hllc-coreflux} supplies the flux-divergence term of the update~\eqref{eq:oos-update}.
The out-of-solver source in~\eqref{eq:oos-update} needs, in addition, the face-trace velocities \(\Psi_u, \Psi_v\) of the HLLC fan.
As for HLL, \(\Psi_u\) is the HLLC flux of a unit passively advected scalar.
Its star value is \(\xi_K\) by Eq.~\eqref{eq:xi}, so its star flux is \(S_M\xi_K\) rather than the bare contact speed \(S_M\)~\cite{CoralicColonius2014}.
This is the velocity that carries the volume fraction in Eq.~\eqref{eq:hllc-coreflux}, \(\mathcal F^c_{\mathrm{HLLC}}[\alpha_1]=\alpha_{1,K}\Psi_u\) on each branch, so the conservative flux and the source of Eq.~\eqref{eq:oos-update} act on \(\alpha_1\) with one and the same face velocity.
The stress flux uses the same face velocity, \(\mathcal F^c_{\mathrm{HLLC}}[\rho\tau_{ij}]=\rho_K\tau_{ij,K}\Psi_u\).
With \(v^{*}\) from Eq.~\eqref{eq:vstar-hllc}, the traces are
\begin{equation}
\label{eq:Psi-hllc}
\begin{aligned}
\Psi_u^{\mathrm{HLLC}}&=
\begin{cases}
u_L, & 0\le S_L,\\[2mm]
S_M\,\xi_L, & S_L\le 0\le S_M,\\[2mm]
S_M\,\xi_R, & S_M\le 0\le S_R,\\[2mm]
u_R, & S_R\le 0,
\end{cases}
\\[2mm]
\Psi_v^{\mathrm{HLLC}}&=
\begin{cases}
v_L, & 0\le S_L,\\[2mm]
v^*, & S_L< 0< S_R,\\[2mm]
v_R, & S_R\le 0.
\end{cases}
\end{aligned}
\end{equation}

The HLLC scheme is then Eq.~\eqref{eq:oos-update} with the HLLC conservative flux of Eq.~\eqref{eq:hllc-coreflux} and the face-trace velocities of Eq.~\eqref{eq:Psi-hllc}.
This is a single-pass Riemann solve together with the out-of-solver nonconservative source, and no anchoring is required at this wave level.
Both schemes apply a single conservative flux $\mathcal{F}^c$ at each face, so they conserve $\{m_1, m_2, \rho u, \rho v, \rho E\}$ exactly.
The volume fraction and stresses are nonconservative because their governing equations are not in conservation-law form.

\subsection{Anchored HLLD solver}\label{subsec:hlld}

The production solver, the central construction of this paper, resolves the full five-wave structure and uses an anchored dual-pass update.\footnote{The term \emph{multi-pass} appears in some HLLD-type literature in an unrelated sense: L\'opez Ortega and coworkers~\cite{LopezOrtegaLombardiniPullinMeiron2014} and Wang, Adami, and Adams~\cite{WangAdamiAdams2024} use it to describe iterative refinement of a single intermediate state until the contact-wave speeds converge.
That construction yields one shared interface state, and hence one flux, per face.
The anchored dual-pass construction below is structurally different: it produces two distinct fluxes per face, anchored to the cell-centered states of the two adjacent cells, and each cell uses the flux anchored to its own state.}
At interface \(x_{i+1/2}\), reconstructed states \(W_{i+1/2}^L\) and \(W_{i+1/2}^R\) are fed to the approximate Riemann solver twice,
\begin{equation}
\label{eq:duplicate-solve}
\mathcal{F}_{i+1/2}^{\widehat L} = \mathcal{R}_{\mathrm{HLLD}}\bigl(W_{i+1/2}^L,W_{i+1/2}^R;\widehat W_i\bigr),
\qquad
\mathcal{F}_{i+1/2}^{\widehat R} = \mathcal{R}_{\mathrm{HLLD}}\bigl(W_{i+1/2}^L,W_{i+1/2}^R;\widehat W_{i+1}\bigr),
\end{equation}
where \(\widehat W_i\) and \(\widehat W_{i+1}\) are the cell-centered primitive variables on the left and right of the interface.

A hat marks a cell-centered value, taken from a cell rather than from a reconstructed face state, for example \(\widehat W_i\) or \(\widehat G^{\widehat D}\) as opposed to \(W_L\) or \(G_L\).
The anchor is tracked by the superscript \(\widehat D\in\{\widehat L,\widehat R\}\), which marks a quantity whose value depends on the anchor side, whether a cell-centered value such as \(\widehat G^{\widehat D}\) or an intermediate state such as the inner speed \(S_{L*}^{\widehat D}\) and the star volume fraction \(\alpha_{1,L}^{*,\widehat D}\).
A computed quantity without \(\widehat D\), for example \(S_M\), \(p_T^{*}\), or \(\rho_L^{*}\), is the same for both anchors.
\ref{sec:analysis} relies on this anchor-independence.

The first-order finite-volume update is
\begin{equation}
\label{eq:anchored-update}
U_i^{n+1}=U_i^n-\frac{\Delta t}{\Delta x}\left(\mathcal{F}_{i+1/2}^{\widehat L}-\mathcal{F}_{i-1/2}^{\widehat R}\right),
\end{equation}
that is, cell \(i\) uses the left-anchored flux on its right face and the right-anchored flux on its left face, so each face-flux contribution is evaluated with the cell's own material data.
Because \(\mathcal{F}_{i+1/2}^{\widehat L}\neq \mathcal{F}_{i+1/2}^{\widehat R}\) in general, the method is intentionally not pairwise conservative at an interface for some variables.
The subset of components for which pairwise conservation is retained is characterized in \ref{sec:analysis} (Proposition~\ref{prop:pairwise}).

\subsubsection{Anchored augmented flux}
The Rankine--Hugoniot balances that define the HLLD star states include an anchored contribution from the nonconservative terms of Eq.~\eqref{eq:quasi1d}.
With $N^u, N^v$ frozen at $\widehat W^{\widehat D}$, moving those terms to the LHS of~\eqref{eq:quasi1d}, with the expected sign flip, defines the \emph{anchored augmented flux}
\begin{equation}
\label{eq:anchored-flux}
F^{\widehat D}(W) := F(W) - N^u(\widehat W^{\widehat D})\,u(W) - N^v(\widehat W^{\widehat D})\,v(W),
\end{equation}
with $N^u, N^v$ the column vectors of Eq.~\eqref{eq:nc} and anchored coefficient values $\widehat\alpha_1^{\widehat D}, \widehat{\Kc}^{\widehat D}, \widehat\rho^{\widehat D}, \widehat G^{\widehat D}, \widehat\tau_{xx}^{\widehat D}, \widehat\tau_{yy}^{\widehat D}, \widehat\tau_{xy}^{\widehat D}$ in their evaluation.
Writing $[k]$ for the $k$th row of a state or flux vector, and since $N^u, N^v$ are nonzero only on the volume-fraction and stress rows, $F^{\widehat D}[k] = F[k]$ on the mass, momentum, and energy rows.
The $y$-sweep follows by~\eqref{eq:yperm}.
$F^{\widehat D}$ is the flux sampled by the HLLD star-state construction below.
The nonconservative terms thus enter through the flux alone, and the update~\eqref{eq:anchored-update} carries no out-of-solver source.

\subsubsection{Five-wave fan}
For fixed anchor \(\widehat D\), consider the five-wave fan
\begin{equation}
\label{eq:hlld-fan}
S_L < S_{L*}^{\widehat D} < S_M < S_{R*}^{\widehat D} < S_R,
\end{equation}
with star and double-star states
\begin{equation}
\label{eq:hlld-states}
U_L,
\quad
U_L^{*,\widehat D},
\quad
U_L^{**,\widehat D},
\quad
U_R^{**,\widehat D},
\quad
U_R^{*,\widehat D},
\quad
U_R.
\end{equation}
The outer bounds are those of Eq.~\eqref{eq:SLSR}.
Following the HLLD construction for ideal magnetohydrodynamics~\cite{MiyoshiKusano2005}, we take the wave-speed estimates to satisfy the ordering~\eqref{eq:hlld-fan}.
One degeneracy, a vanishing anchored impedance \(\widehat C^{\widehat D}\to 0\), arises in the fluid limit and is handled by the fallback of Section~\ref{subsec:reductions}.
The contact conditions
\begin{equation}
\label{eq:hlld-contact}
u_L^{*,\widehat D}=u_L^{**,\widehat D}=u_R^{**,\widehat D}=u_R^{*,\widehat D}=S_M,
\qquad
p_{T,L}^{*,\widehat D}=p_{T,L}^{**,\widehat D}=p_{T,R}^{**,\widehat D}=p_{T,R}^{*,\widehat D}=p_T^*,
\end{equation}
are inherited from the outer HLLC construction with \(S_M\) and \(p_T^*\) given by Eqs.~\eqref{eq:SM-hllc}--\eqref{eq:Tstar-hllc}.

\subsubsection{Outer star states}
For \(K\in\{L,R\}\), the anchored outer star states are
\begin{equation}
\label{eq:hlld-partialmasses}
m_{j,K}^{*}=m_{j,K}\,\xi_K,
\qquad
\rho_{K}^{*}=\rho_K\,\xi_K,
\qquad j=1,2,
\end{equation}
\begin{equation}
\label{eq:hlld-alpha}
\alpha_{1,K}^{*,\widehat D}=
\frac{\alpha_{1,K}(S_K-u_K)-\bigl(\widehat\alpha_1^{\widehat D}+\widehat\Kc^{\widehat D}\bigr)(S_M-u_K)}{S_K-S_M},
\end{equation}
\begin{equation}
\label{eq:hlld-tauxx}
\tau_{xx,K}^{*,\widehat D}=\tau_{xx,K}
- \frac{\widehat\rho^{\widehat D}\left(\tfrac43\widehat G^{\widehat D}+\widehat\tau_{xx}^{\widehat D}\right)(S_M-u_K)}{\rho_K(S_K-u_K)},
\end{equation}
\begin{equation}
\label{eq:hlld-tauyy}
\tau_{yy,K}^{*,\widehat D}=\tau_{yy,K}
+ \frac{\widehat\rho^{\widehat D}\left(\tfrac23\widehat G^{\widehat D}+\widehat\tau_{yy}^{\widehat D}\right)(S_M-u_K)}{\rho_K(S_K-u_K)},
\end{equation}
\begin{equation}
\label{eq:hlld-outer-tangential}
v_L^{*}=v_L,
\qquad
v_R^{*}=v_R,
\qquad
\tau_{xy,L}^{*}=\tau_{xy,L},
\qquad
\tau_{xy,R}^{*}=\tau_{xy,R},
\end{equation}
\begin{equation}
\label{eq:hlld-pstar}
p_K^{*,\widehat D}=p_T^*+\tau_{xx,K}^{*,\widehat D},
\end{equation}
\begin{equation}
\label{eq:hlld-energy-star}
(\rho E)_K^{*}=
\frac{(S_K-u_K)(\rho E)_K - p_{T,K} u_K + p_T^* S_M}{S_K-S_M}.
\end{equation}
Unlike the HLLC tangential velocity of Eq.~\eqref{eq:vstar-hllc}, which jumps across the outer waves, the HLLD outer star leaves the tangential fields unchanged across the outer wave \(S_K\).
They do not jump there, as their entries in the fast eigenvector vanish, as shown by Table~\ref{tab:eigenvectors}.
The tangential jump is resolved instead by the inner shear subfan.
The tangential work contribution of the HLLC energy star state cancels across the outer HLLD waves, so it does not enter Eq.~\eqref{eq:hlld-energy-star} and is instead accounted for in the double-star energy below.

\subsubsection{Inner shear subfan}
The shear subfan is what HLLD adds to HLLC, derived as the exact two-wave Riemann solution of the frozen tangential subsystem in the contact region.
Define the anchored shear impedance
\begin{equation}
\label{eq:Chat}
\widehat C^{\widehat D}=\sqrt{\widehat\rho^{\widehat D}\bigl(\widehat G^{\widehat D}+\widehat\tau_{xx}^{\widehat D}\bigr)}.
\end{equation}
Across the inner waves the material variables \(\alpha_1, m_j\), the density, the normal velocity \(u\), the pressure \(p\), and the normal stress \(\tau_{xx}\) do not jump and remain at their outer-star values.
This is the defining choice of the frozen tangential subsystem.
It matches the shear eigenvectors of the reduced matrix \(\widetilde A\), which carry jumps only in \(v\), \(\tau_{xy}\), and the coupled \(\tau_{yy}\) (Table~\ref{tab:eigenvectors}).
Density invariance gives
\begin{equation}
\label{eq:hlld-rho-dbl}
\rho_L^{**}=\rho_L^{*},
\qquad
\rho_R^{**}=\rho_R^{*},
\end{equation}
and the inner wave speeds are
\begin{equation}
\label{eq:inner-speeds}
S_{L*}^{\widehat D}=S_M-\frac{\widehat C^{\widehat D}}{\rho_L^{*}},
\qquad
S_{R*}^{\widehat D}=S_M+\frac{\widehat C^{\widehat D}}{\rho_R^{*}}.
\end{equation}
The inner speeds follow from the jump relations below and match the form of the ideal-MHD HLLD speeds~\cite{MiyoshiKusano2005}, with the anchored impedance in place of the Alfv\'en impedance.
The active variables across the inner waves are therefore \((v,\tau_{xy})\), together with the dependent shear-work increments to \(\tau_{yy}\) and the energy that follow from them.
The contact eigenvectors of Table~\ref{tab:eigenvectors} carry no jump in \(v\) or \(\tau_{xy}\), so the tangential state is continuous across the contact, \(v_L^{**}=v_R^{**}=:v^{**,\widehat D}\) and \(\tau_{xy,L}^{**}=\tau_{xy,R}^{**}=:\tau_{xy}^{**,\widehat D}\).

The tangential-momentum and shear-stress Rankine--Hugoniot relations across the left inner wave reduce, with the non-jumping variables at their outer-star values, to
\begin{equation}
\label{eq:inner-rh-L}
\begin{aligned}
\bigl(S_{L*}^{\widehat D}-S_M\bigr)\rho_L^{*}
\bigl(v^{**,\widehat D}-v_L\bigr)
+
\bigl(\tau_{xy}^{**,\widehat D}-\tau_{xy,L}\bigr)
&=0, \\
\bigl(S_{L*}^{\widehat D}-S_M\bigr)\rho_L^{*}
\bigl(\tau_{xy}^{**,\widehat D}-\tau_{xy,L}\bigr)
+
\bigl(\widehat C^{\widehat D}\bigr)^2
\bigl(v^{**,\widehat D}-v_L\bigr)
&=0,
\end{aligned}
\end{equation}
with the analogous pair across the right inner wave.
For a nontrivial jump, each pair in Eq.~\eqref{eq:inner-rh-L} forces \(\rho_K^{*}\bigl(S_{K*}^{\widehat D}-S_M\bigr)=\mp\widehat C^{\widehat D}\), recovering Eq.~\eqref{eq:inner-speeds}.
The relations are then equivalent to the matching of the two shear invariants
\begin{equation}
\label{eq:shear-invariants}
v_L - \frac{\tau_{xy,L}}{\widehat C^{\widehat D}}
=
v^{**,\widehat D} - \frac{\tau_{xy}^{**,\widehat D}}{\widehat C^{\widehat D}},
\qquad
v_R + \frac{\tau_{xy,R}}{\widehat C^{\widehat D}}
=
v^{**,\widehat D} + \frac{\tau_{xy}^{**,\widehat D}}{\widehat C^{\widehat D}},
\end{equation}
across the left and right inner waves respectively.
Solving the resulting \(2\times 2\) linear system yields the common inner state
\begin{equation}
\label{eq:vstarstar}
v^{**,\widehat D}=
\frac12\left(v_L+v_R+\frac{\tau_{xy,R}-\tau_{xy,L}}{\widehat C^{\widehat D}}\right),
\end{equation}
\begin{equation}
\label{eq:tauxystarstar}
\tau_{xy}^{**,\widehat D}=
\frac12\left(\tau_{xy,L}+\tau_{xy,R}+\widehat C^{\widehat D}(v_R-v_L)\right).
\end{equation}
The inner pair is therefore exact for the frozen tangential subsystem rather than an empirical correction to HLLC, carrying the anchored impedance \(\widehat C^{\widehat D}\).
Once \((v^{**,\widehat D},\tau_{xy}^{**,\widehat D})\) are fixed, the remaining inner Rankine--Hugoniot relations determine the transverse-stress and energy double-star increments uniquely.

Across the shear subfan, the material variables, \(u\), \(p\), and \(\tau_{xx}\) remain unchanged from the outer-star values:
\begin{equation}
\label{eq:const-shear}
\alpha_{1,K}^{**,\widehat D}=\alpha_{1,K}^{*,\widehat D},
\quad
m_{j,K}^{**}=m_{j,K}^{*},
\quad
p_K^{**,\widehat D}=p_K^{*,\widehat D},
\quad
\tau_{xx,K}^{**,\widehat D}=\tau_{xx,K}^{*,\widehat D},
\end{equation}
while the transverse stress and the energy receive the double-star increments
\begin{equation}
\label{eq:tauyy-Lstarstar}
\tau_{yy,L}^{**,\widehat D}=
\tau_{yy,L}^{*,\widehat D}
+\frac{2\widehat\rho^{\widehat D}\widehat\tau_{xy}^{\widehat D}}{\widehat C^{\widehat D}}
\left(v^{**,\widehat D}-v_L\right),
\end{equation}
\begin{equation}
\label{eq:tauyy-Rstarstar}
\tau_{yy,R}^{**,\widehat D}=
\tau_{yy,R}^{*,\widehat D}
-\frac{2\widehat\rho^{\widehat D}\widehat\tau_{xy}^{\widehat D}}{\widehat C^{\widehat D}}
\left(v^{**,\widehat D}-v_R\right),
\end{equation}
\begin{equation}
\label{eq:energy-Lstarstar}
(\rho E)_L^{**,\widehat D}=
(\rho E)_L^{*}
+\frac{\rho_L^{*}}{\widehat C^{\widehat D}}
\left(v^{**,\widehat D}\,\tau_{xy}^{**,\widehat D}-v_L\,\tau_{xy,L}\right),
\end{equation}
\begin{equation}
\label{eq:energy-Rstarstar}
(\rho E)_R^{**,\widehat D}=
(\rho E)_R^{*}
-\frac{\rho_R^{*}}{\widehat C^{\widehat D}}
\left(v^{**,\widehat D}\,\tau_{xy}^{**,\widehat D}-v_R\,\tau_{xy,R}\right).
\end{equation}

\subsubsection{Anchored sampled flux}\label{subsubsec:sampled-flux}
The conservative star vectors \(U_K^{*,\widehat D}\) and \(U_K^{**,\widehat D}\) are assembled from the scalar relations above, with each stress row assembled as \((\rho\tau_{\bullet})^{*}=\rho^{*}\,\tau_{\bullet}^{*}\), as in Eq.~\eqref{eq:rhoz-star}.
Their fluxes are defined recursively from the anchored augmented flux~\eqref{eq:anchored-flux} evaluated at the outer reconstructed states,
\begin{equation}
\label{eq:flux-outer}
\mathcal{F}_L^{\widehat D}:=F^{\widehat D}(W_L),
\qquad
\mathcal{F}_R^{\widehat D}:=F^{\widehat D}(W_R),
\end{equation}
through
\begin{equation}
\label{eq:flux-left-star}
\mathcal{F}_L^{*,\widehat D}=\mathcal{F}_L^{\widehat D}+S_L\bigl(U_L^{*,\widehat D}-U_L\bigr),
\qquad
\mathcal{F}_R^{*,\widehat D}=\mathcal{F}_R^{\widehat D}+S_R\bigl(U_R^{*,\widehat D}-U_R\bigr),
\end{equation}
\begin{equation}
\label{eq:flux-double-star}
\mathcal{F}_L^{**,\widehat D}=\mathcal{F}_L^{*,\widehat D}+S_{L*}^{\widehat D}\bigl(U_L^{**,\widehat D}-U_L^{*,\widehat D}\bigr),
\qquad
\mathcal{F}_R^{**,\widehat D}=\mathcal{F}_R^{*,\widehat D}+S_{R*}^{\widehat D}\bigl(U_R^{**,\widehat D}-U_R^{*,\widehat D}\bigr).
\end{equation}
Here \(U_L, U_R\) denote the conservative variables of the reconstructed outer states \(W_L, W_R\).
For a fixed anchor, the HLLD interface contribution is then
\begin{equation}
\label{eq:hlld-flux}
\mathcal{F}_{\mathrm{HLLD}}^{\widehat D}(W_L,W_R)=
\begin{cases}
\mathcal{F}_{L}^{\widehat D}, & 0\le S_L,\\[2mm]
\mathcal{F}_{L}^{*,\widehat D}, & S_L\le 0\le S_{L*}^{\widehat D},\\[2mm]
\mathcal{F}_{L}^{**,\widehat D}, & S_{L*}^{\widehat D}\le 0\le S_M,\\[2mm]
\mathcal{F}_{R}^{**,\widehat D}, & S_M\le 0\le S_{R*}^{\widehat D},\\[2mm]
\mathcal{F}_{R}^{*,\widehat D}, & S_{R*}^{\widehat D}\le 0\le S_R,\\[2mm]
\mathcal{F}_{R}^{\widehat D}, & S_R\le 0.
\end{cases}
\end{equation}
Combined with the dual-pass update~\eqref{eq:anchored-update}, Eq.~\eqref{eq:hlld-flux} defines the first-order anchored HLLD solver.

When the anchored shear impedance vanishes, the inner-wave formulas are singular in \(\bigl(\widehat C^{\widehat D}\bigr)^{-1}\), and the solver falls back to the three-wave form of Section~\ref{subsec:reductions} by collapsing the inner pair onto the contact, \(S_{L*}^{\widehat D}=S_{R*}^{\widehat D}=S_M\).

\begin{lemma}[Cell-update reduction]
\label{lem:dual-pass-to-bb}
Suppose at each face \(x_{i\pm 1/2}\) adjacent to cell \(i\) the anchored face flux \(\mathcal R(W_L, W_R; \widehat W)\) admits the interface-consistent representation of Eq.~\eqref{eq:ic-representation} with anchor-independent \((\mathcal F^c_{i\pm 1/2}, \Psi_{u, i\pm 1/2}, \Psi_{v, i\pm 1/2})\).
Then the anchored dual-pass cell update of Eq.~\eqref{eq:anchored-update} with anchor \(\widehat W_i = W_i\) coincides with the interface-consistent update of Eq.~\eqref{eq:oos-update}, so that the dual pass is not required on such branches.
\end{lemma}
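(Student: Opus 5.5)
The argument is a direct substitution of the interface-consistent representation into the dual-pass update. It is essentially bookkeeping, with one point of care: which anchor each face uses. At the right face \(x_{i+1/2}\), cell \(i\) uses the left-anchored flux \(\mathcal F^{\widehat L}_{i+1/2}=\mathcal R(W^L_{i+1/2},W^R_{i+1/2};\widehat W_i)\). At the left face \(x_{i-1/2}\), it uses the right-anchored flux \(\mathcal F^{\widehat R}_{i-1/2}=\mathcal R(W^L_{i-1/2},W^R_{i-1/2};\widehat W_i)\), since the cell to the right of \(x_{i-1/2}\) is cell \(i\). The key observation is that both fluxes entering the update of cell \(i\) in Eq.~\eqref{eq:anchored-update} carry the same anchor \(\widehat W_i\). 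With the hypothesis \(\widehat W_i=W_i\), this anchor is the cell-\(i\) primitive state, so \(N^u(\widehat W_i)=N^u_i\) and \(N^v(\widehat W_i)=N^v_i\) in the notation of Eq.~\eqref{eq:oos-update}.

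Next I insert Eq.~\eqref{eq:ic-representation} at each of the two faces. By hypothesis, the triple \((\mathcal F^c_{i\pm1/2},\Psi_{u,i\pm1/2},\Psi_{v,i\pm1/2})\) does not depend on the anchor, so
\[
\mathcal F^{\widehat L}_{i+1/2}-\mathcal F^{\widehat R}_{i-1/2}
=\bigl(\mathcal F^c_{i+1/2}-\mathcal F^c_{i-1/2}\bigr)
-N^u_i\bigl(\Psi_{u,i+1/2}-\Psi_{u,i-1/2}\bigr)
-N^v_i\bigl(\Psi_{v,i+1/2}-\Psi_{v,i-1/2}\bigr).
\]
The coefficient vectors factor out because they are evaluated at the common anchor \(\widehat W_i\). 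Multiplying by \(-\Delta t/\Delta x\) and adding \(U_i^n\) reproduces Eq.~\eqref{eq:oos-update} term by term.

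The only step needing care is this factorisation. It depends on both faces using the cell's own anchor, and it would fail for the mismatched pairing \(\mathcal F^{\widehat R}_{i+1/2}\), which would introduce \(N^u(\widehat W_{i+1})\). I will also note why a single pass then suffices: the remaining face data \(\mathcal F^c,\Psi_u,\Psi_v\) are shared by the two adjacent cells. Each cell therefore needs only one solve per face, with its own cell-centred \(N^u,N^v\) applied afterwards, which is exactly the single-pass update.
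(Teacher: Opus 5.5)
Your proposal is correct and follows the paper's proof: substitute the representation of Eq.~\eqref{eq:ic-representation} at both faces into Eq.~\eqref{eq:anchored-update}. The anchor-independent \(\mathcal F^c\) terms give the flux divergence, and the \(N^u(\widehat W_i),N^v(\widehat W_i)\) terms, which share the anchor \(\widehat W_i=W_i\) at both faces, factor out into the out-of-solver source of Eq.~\eqref{eq:oos-update}.
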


\begin{proof}
Substituting Eq.~\eqref{eq:ic-representation} at both faces into Eq.~\eqref{eq:anchored-update} with \(\widehat W_i = W_i\), the anchor-independent \(\mathcal F^c\) contributions form the conservative-flux divergence, and the \(-N^u(\widehat W_i)\Psi_{u, i\pm 1/2} - N^v(\widehat W_i)\Psi_{v, i\pm 1/2}\) contributions assemble into the out-of-solver source of Eq.~\eqref{eq:oos-update}.
\end{proof}

\subsection{Reductions}\label{subsec:reductions}

The anchored HLLD fan admits a three-wave reduction when the reconstructed tangential velocity and shear stress agree on both sides of the face.
With \(v_L=v_R\) and \(\tau_{xy,L}=\tau_{xy,R}\), Eqs.~\eqref{eq:vstarstar}--\eqref{eq:tauxystarstar} give \(v^{**,\widehat D}=v_L=v_R\) and \(\tau_{xy}^{**,\widehat D}=\tau_{xy,L}=\tau_{xy,R}\), and the double-star increments~\eqref{eq:tauyy-Lstarstar}--\eqref{eq:energy-Rstarstar} for \(\tau_{yy}\) and the energy vanish identically.
The inner shear pair is then inactive, and every sampled branch of Eq.~\eqref{eq:hlld-flux} takes the corresponding value of the reduced HLLD flux
\begin{equation}
\label{eq:degenerate-anchor-flux}
\mathcal{F}_{\mathrm{HLLD,reduction}}^{\widehat D}(W_L,W_R)=
\begin{cases}
\mathcal{F}_{L}^{\widehat D}, & 0\le S_L,\\[2mm]
\mathcal{F}_{L}^{*,\widehat D}, & S_L\le 0\le S_M,\\[2mm]
\mathcal{F}_{R}^{*,\widehat D}, & S_M\mathrel{<}0\le S_R,\\[2mm]
\mathcal{F}_{R}^{\widehat D}, & S_R\le 0.
\end{cases}
\end{equation}
The outer-star branches retain the HLLD tangential velocities \(v_K^{*}=v_K\) (Eq.~\eqref{eq:hlld-outer-tangential}), whereas HLLC uses the common velocity \(v^{*}\) of Eq.~\eqref{eq:vstar-hllc}.
Both constructions retain the transported stress \(\tau_{xy,K}^{*}=\tau_{xy,K}\).
For general tangential data, this three-wave prescription therefore differs from the hypoelastic HLLC construction.

For fluid states with \(\bm{\tau}=\bm{0}\), taking \(G\to0\) removes the shear contribution.
The resulting three-wave flux gives a conventional single-pass fluid HLLC update with the interface-consistent nonconservative source for the volume-fraction terms.
The inactive shear pair leaves the outer stars to transport the tangential velocities per side, \(v_K^*=v_K\).
In contrast, the hypoelastic HLLC member of Section~\ref{subsec:hllc} does not reduce to this fluid HLLC, as it keeps the HLL-like averaged tangential velocity across the fast waves.

For general face data, the solver also adopts Eq.~\eqref{eq:degenerate-anchor-flux} as a fallback whenever \(\widehat G+\widehat\tau_{xx}\) vanishes to floating-point tolerance in an anchored evaluation.
An anchored state with \(\widehat G+\widehat\tau_{xx}\) negative beyond this tolerance, where the shear speed of Eq.~\eqref{eq:wavespeeds} ceases to be real, takes the HLL flux instead.\footnote{The HLL flux is also taken if the assembled inner speeds fall outside the outer fan, \(S_{L*}^{\widehat D}<S_L\) or \(S_{R*}^{\widehat D}>S_R\).
This did not occur in any of the computations of Sections~\ref{sec:results-1d} and~\ref{sec:results-2d}.}
The local structural properties established in \ref{sec:analysis} concern the anchored HLLD construction with these HLL fallbacks inactive.

\subsection{Algorithmic summary}\label{subsec:algorithms}

The directional update of Sections~\ref{subsec:ic-nc-update} to~\ref{subsec:hlld} is summarized by Algorithms~\ref{alg:hll-hllc} and~\ref{alg:anchored-hlld}.
Both compute the semi-discrete \emph{residual} \(\mathcal L\), the stage right-hand side of \(\mathrm dU/\mathrm dt=\mathcal L\).
Starting from the stage state \(U^n\), each accumulates the \(x\)-direction contribution of one Runge--Kutta stage into \(\mathcal L\) (zeroed before the first pass).
Algorithms~\ref{alg:hll-hllc} and~\ref{alg:anchored-hlld} share Phase~1 and differ only in Phase~2 and in how Phase~3 uses its outputs.

\begin{algorithm}[htbp]
\caption{HLL/HLLC interface-consistent update: \(x\)-direction pass of one Runge--Kutta stage.}
\label{alg:hll-hllc}
\begin{algorithmic}[1]
\Phase{Phase 1 -- reconstruct (once per face):}{}
\ForAll{\(x\)-face \(i+\tfrac12\)}
  \State \(W^L_{i+1/2},\;W^R_{i+1/2}\gets\textsc{Reconstruct}(U^n)\)
\EndFor
\Statex
\Phase{Phase 2 -- one Riemann solve per face:}{face-trace velocities, Eqs.~\eqref{eq:Psi-hll},\,\eqref{eq:Psi-hllc}}
\ForAll{\(x\)-face \(i+\tfrac12\)}
  \State \(\mathcal F^c_{i+1/2},\,\Psi_{u,i+1/2},\,\Psi_{v,i+1/2}\gets\textsc{HLL}/\textsc{HLLC}\bigl(W^L_{i+1/2},\,W^R_{i+1/2}\bigr)\)
\EndFor
\Statex
\Phase{Phase 3 -- accumulate residual:}{conservative flux + out-of-solver nonconservative source, Eqs.~\eqref{eq:oos-update},\,\eqref{eq:nc}}
\ForAll{cell \(i\)}
  \State \(N^u_i,\,N^v_i\gets N^u(W_i),\,N^v(W_i)\)
  \State \(\mathcal L_i\gets\mathcal L_i-\dfrac{1}{\Delta x}\bigl(\mathcal F^c_{i+1/2}-\mathcal F^c_{i-1/2}\bigr)\)
  \State \(\phantom{\mathcal L_i\gets{}}+\dfrac{1}{\Delta x}\Bigl[N^u_i\bigl(\Psi_{u,i+1/2}-\Psi_{u,i-1/2}\bigr)+N^v_i\bigl(\Psi_{v,i+1/2}-\Psi_{v,i-1/2}\bigr)\Bigr]\)
\EndFor
\end{algorithmic}
\end{algorithm}

\begin{algorithm}[htbp]
\caption{Anchored HLLD update: \(x\)-direction pass of one Runge--Kutta stage.}
\label{alg:anchored-hlld}
\begin{algorithmic}[1]
\Phase{Phase 1 -- reconstruct (once per face):}{shared with Alg.~\ref{alg:hll-hllc}}
\ForAll{\(x\)-face \(i+\tfrac12\)}
  \State \(W^L_{i+1/2},\;W^R_{i+1/2}\gets\textsc{Reconstruct}(U^n)\)
\EndFor
\Statex
\PhaseTwoLine{Phase 2 -- dual Riemann solve per face:}{nonconservative terms built into the solve, two anchored fluxes per face, Eq.~\eqref{eq:hlld-flux}}
\ForAll{\(x\)-face \(i+\tfrac12\)}
  \State \(\mathcal F^{\widehat L}_{i+1/2}\gets\textsc{HLLD}\bigl(W^L_{i+1/2},\,W^R_{i+1/2};\,\widehat W_i\bigr)\) \Comment{left cell anchor}
  \State \(\mathcal F^{\widehat R}_{i+1/2}\gets\textsc{HLLD}\bigl(W^L_{i+1/2},\,W^R_{i+1/2};\,\widehat W_{i+1}\bigr)\) \Comment{right cell anchor}
\EndFor
\Statex
\Phase{Phase 3 -- accumulate residual:}{each cell uses the flux anchored to its own state, Eq.~\eqref{eq:anchored-update}}
\ForAll{cell \(i\)}
  \State \(\mathcal L_i\gets\mathcal L_i-\dfrac{1}{\Delta x}\bigl(\mathcal F^{\widehat L}_{i+1/2}-\mathcal F^{\widehat R}_{i-1/2}\bigr)\)
\EndFor
\end{algorithmic}
\end{algorithm}

A full stage repeats the same three phases for the \(y\)-pass under the permutation~\eqref{eq:yperm}, accumulating into the \emph{same} \(\mathcal L\).
In three dimensions a \(z\)-pass follows, and for the HLL/HLLC update each pass additionally carries the \(w\)-velocity trace \(N^w\,\Psi_w\), with \(N^w\) given in~\ref{app:3d} and \(\Psi_w\) defined analogously to \(\Psi_v\).
The SSP-RK3 integrator then advances \(U\) from the accumulated residual.
For forward Euler, \(U_i^{n+1}=U_i^n+\Delta t\,\mathcal L_i\).

\subsection{Computational cost}\label{subsec:cost}

Relative to HLL, one might expect the anchored HLLD update to more than double the cost, since the five-wave fan involves more arithmetic per solve and the dual pass evaluates it twice at each face.
The measured increase stays below a factor of two on GPUs and far below it on CPU nodes.
The reconstruction, primitive conversions, and halo exchange are shared across the three solvers and performed once per stage, and the operations shared between the two anchored evaluations are likewise performed once, so only the anchor-dependent arithmetic is computed twice.
On application benchmarks in MFC's GPU- and CPU-parallel configurations~\cite{MFC2026,Radhakrishnan2024MFC}, measured for a three-dimensional water--epoxy problem on Frontier (AMD MI250X GPUs) and Bridges-2 (AMD EPYC CPU nodes), the per-stage HLLD/HLL wall-time ratio was $1.7$--$1.9$ on 1--128 GPUs and $1.1$--$1.2$ on 1--32 CPU nodes.
The cost increase is workload and machine dependent rather than a universal factor.

\subsection{AI-assisted computational workflow}\label{subsec:ai-workflow}

ChatGPT and Codex (OpenAI) and Claude Code (Anthropic) were used to support software development, code review, and post-processing.
These tools assisted with implementing and refining features within the open-source Multi-component Flow Code (MFC), identifying potential software issues, and developing scripts to process simulation outputs and generate scientific visualizations.
AI-assisted code modifications and analysis scripts were reviewed and verified by the authors prior to use.

\FloatBarrier
\section{Verification and solver comparison in one dimension}
\label{sec:results-1d}

These one-dimensional cases serve three roles: to \emph{verify} the schemes against exact and reference solutions, to \emph{compare} HLL, HLLC, and anchored HLLD on a common footing, and to \emph{demonstrate} the analytical propositions of \ref{sec:analysis}.
Exact solutions or closed-form predictions are available for many of these cases, so accuracy and convergence can be measured quantitatively.

HLL results throughout the results sections (this and following sections) follow the Rodriguez and Johnsen method~\cite{RodriguezJohnsen2019} implemented in the production MFC code~\cite{SprattThesis,MFC2026}, and serve as the reference baseline, together with analytic solutions where available.
HLLC and anchored HLLD are introduced in the present work.
The MFC solver~\cite{MFC2026} is used for all results in this paper.

\subsection{Common setup}
\label{subsec:results-1d-common}

All cases use the stiffened-gas equation of state extended with an elastic shear modulus $G$ (the hypoelastic constitutive model of Section~\ref{sec:governing}).
The materials used in the suite are listed in Table~\ref{tab:materials}.

\begin{table}[H]
\centering
\caption{Material catalog. \emph{Water} and \emph{epoxy} are the two
primary materials. The remaining two (\emph{epoxy-stiff},
\emph{water-stiff}) appear in specific cases.}
\label{tab:materials}
\begin{tabular}{l c c c c}
\toprule
Material      & $\gamma$ & $\pinf$ (Pa)       & $\rho$ (kg/m$^3$) & $G$ (Pa)            \\
\midrule
water         & $4.4$    & $6.0\times 10^8$   & $1000$            & $0$                 \\
epoxy         & $2.94$   & $2.53\times 10^9$  & $1185$            & $1.5\times 10^9$    \\
epoxy-stiff   & $2.94$   & $2.53\times 10^9$  & $1185$            & $6.0\times 10^9$    \\
water-stiff   & $4.4$    & $6.0\times 10^8$   & $1000$            & $1.0\times 10^9$    \\
\bottomrule
\end{tabular}
\end{table}

First-order upwind reconstruction uses forward Euler time stepping, second-order MUSCL uses the van Leer limiter with SSP-RK3, and WENO5 uses SSP-RK3.
Upwind is used primarily where a clear cross-solver comparison is the priority.
All cases are run with a fixed $\Delta t$ at a maximum CFL of 0.5, unless otherwise stated.
All quasi-one-dimensional cases (one-dimensional discretization with tangential and transverse variables $v$, $\tau_{xy}$, and $\tau_{yy}$) are implemented on a two-dimensional strip uniform in $y$.
The rotated-eigenmode case (Figure~\ref{fig:eigenmode-conv}(c)) uses a fully two-dimensional grid.
The eigenmode and drift cases are periodic in $x$, and the remaining cases use zero-gradient extrapolation at the $x$ boundaries.
The test configurations are listed in Table~\ref{tab:verification-suite}.

\begin{table}[H]
\centering
\caption{Test-suite configurations.
}
\label{tab:verification-suite}
\setlength{\tabcolsep}{3.5pt}
\begin{tabular}{l l l l}
\toprule
\S & Test & Materials & Reconstruction \\
\midrule
\ref{subsec:eigenmode-conv}     & Eigenmode convergence   & (a,b) epoxy            & MUSCL; WENO5 \\
                                &                         & (c) epoxy; (d) water/epoxy         & MUSCL             \\
\ref{subsubsec:acoustic-fresnel}& Acoustic Fresnel        & water/epoxy     & MUSCL \\
\ref{subsubsec:contact}         & Contact preservation    & epoxy/epoxy-stiff   & MUSCL \\
\ref{subsubsec:rodriguez}       & Riemann problem         & water-stiff         & Upwind; MUSCL \\
\ref{subsubsec:shear-release}   & Shear release           & water/epoxy     & Upwind; MUSCL \\
\ref{subsubsec:shear-fresnel}   & Shear Fresnel           & epoxy/epoxy-stiff   & MUSCL \\
\ref{subsubsec:drift-smooth}    & Drift convergence (smooth)   & water/epoxy        & Upwind; MUSCL \\
\ref{subsubsec:drift-discont}   & Drift convergence (discont.) & water/epoxy; epoxy & MUSCL \\
\bottomrule
\end{tabular}
\end{table}

\subsection{Smooth eigenmode convergence}
\label{subsec:eigenmode-conv}

Four smooth-IC eigenmode tests measure the spatial convergence of the discretized hypoelastic system by propagating an eigenwave across a periodic domain.
Each test initializes its eigenmode at a velocity amplitude of $10^{-3}$~m/s.
These tests are the numerical counterpart of Propositions~\ref{prop:smooth-hllc} (HLL/HLLC) and~\ref{prop:smooth-hlld} (HLLD), showing the solution error converging at the reconstruction order~$r$ on smooth solutions.
The $\theta = 30^\circ$ rotated two-dimensional shear case uses a rectangular domain, periodic in both directions, with $L_y = \sqrt{3}\,L_x$ so the rotated shear eigenmode is exactly periodic.
Cells are near-square with $N_y/N_x = 8/5$ and $\Delta y \approx 1.08\,\Delta x$.
The $45^\circ$ case, shown alongside in Figure~\ref{fig:eigenmode-conv}(c), uses a square grid with $L_y = L_x$.
The bulk-mixture shear case applies the eigenmode to a smoothly varying epoxy volume fraction $\alpha_1(x) = 0.5 + 0.4\sin(2\pi x)$ in water, a genuinely mixed-cell setting with no pure-material region.
The WENO5 simulations use CFL\,$= 0.05$ instead of $0.5$ so that the fifth-order WENO5 spatial error dominates the third-order SSP-RK3 temporal error at the resolutions shown.
The final times are fractions of an eigenmode period for the acoustic, shear, and bulk-mixture cases ($35.4$, $88.6$, and $177\,\mu$s) and one period for the rotated cases ($770$ and $628\,\mu$s at $\theta=30^\circ$ and $45^\circ$).

The implementation uses the finite-volume cell average itself as the cell-centered state $W_i$ and anchor $\widehat W_i$ of Propositions~\ref{prop:smooth-hllc} and~\ref{prop:smooth-hlld}, with no order-$r$ quadrature on the averages.
The cell average is second-order accurate as a point value, which meets the hypothesis on $W_i$ for first-order upwind and MUSCL but not for WENO5.
For WENO5 the resulting contribution to the residual is quadratic in the eigenmode amplitude, since the coefficient deviation and the gradients it multiplies are each proportional to it.
At the amplitudes used here this contribution lies below the measured error, and the fifth-order rates below are unaffected.

The reported error is the $L_1$ deviation of the active primitive: for Figure~\ref{fig:eigenmode-conv}(a)--(c) it is measured against the analytic eigenmode.
For (d), which has no closed-form solution on the smoothly varying mixture, it is against a high-resolution HLLD reference ($N = 1600$).
The curves change by at most $2\%$ when the high-resolution reference is computed with HLL or HLLC instead.

\begin{figure}[htbp]
\centering
\begin{subfigure}[t]{0.32\textwidth}
  \centering
  \includegraphics[width=\textwidth]{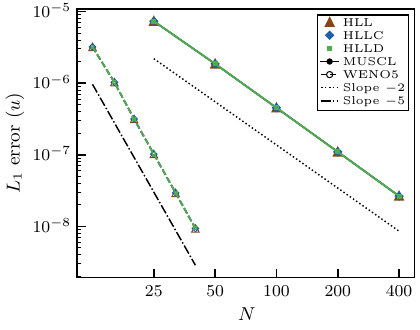}
  \caption{Acoustic (longitudinal)}
\end{subfigure}%
\hspace{0.01\textwidth}%
\begin{subfigure}[t]{0.32\textwidth}
  \centering
  \includegraphics[width=\textwidth]{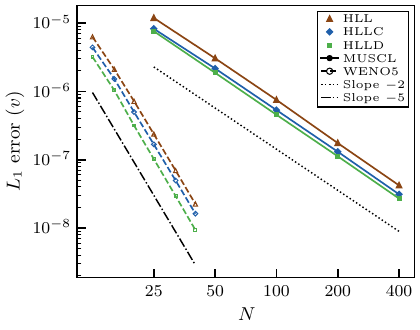}
  \caption{Shear}
\end{subfigure}%
\\[6pt]%
\begin{subfigure}[t]{0.32\textwidth}
  \centering
  \includegraphics[width=\textwidth]{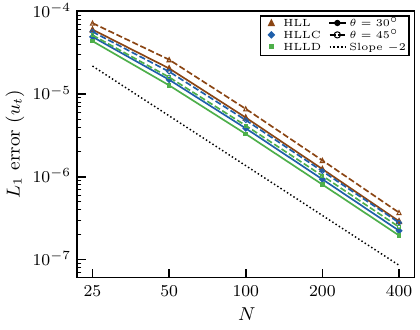}
  \caption{Rotated 2D shear}
\end{subfigure}%
\hspace{0.01\textwidth}%
\begin{subfigure}[t]{0.32\textwidth}
  \centering
  \includegraphics[width=\textwidth]{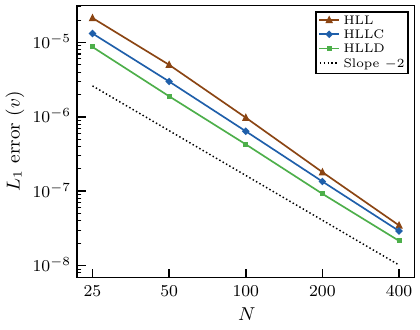}
  \caption{Bulk-mixture shear}
\end{subfigure}
\caption{Smooth eigenmode convergence: single-material epoxy (a) acoustic, (b) shear, (c) rotated 2D shear, 
and (d) bulk-mixture shear on smoothly varying $\alpha_1(x)$ between
epoxy and water. Panel (c) shows both $\theta=30^\circ$ and $\theta=45^\circ$. $N$ is points per wavelength.}
\label{fig:eigenmode-conv}
\end{figure}

HLL, HLLC, and anchored HLLD all recover the predicted reconstruction order~$r$ (Figure~\ref{fig:eigenmode-conv}): second order with MUSCL ($r=2$) and fifth order with WENO5 ($r=5$).
The rotated two-dimensional shear (c) and bulk-mixture (d) cases confirm that MUSCL retains second order under grid rotation and within a smoothly varying mixture, where $G$ (and hence the HLLD anchor) varies in space.
On the three shear-driven cases (b)--(d), HLLD reaches a lower error than HLL and HLLC at the same order, and on the longitudinal case (a) the three solvers coincide.

\subsection{Inactive-shear cases}
\label{subsec:degenerate}

In the next two cases, the elastic shear contribution that distinguishes HLLD from HLLC is inactive, and HLLD results match HLLC.

\subsubsection{Acoustic Fresnel transmission}
\label{subsubsec:acoustic-fresnel}

A Gaussian acoustic pressure pulse of peak amplitude $10^3$~Pa transmits across a sharp water--epoxy interface.
On the unit domain the interface sits at $x=0.5$, and the pulse $\delta p = 10^3\exp\bigl(-((x-0.2)/0.02)^2\bigr)$~Pa is initialized centered at $x_0=0.2$ with the matched right-moving velocity and density perturbations $\delta u=\delta p/(\rho c_f)$ and $\delta\rho=\delta p/c_f^2$.
The right-moving pulse is initiated in the water, which cannot support shear.
Figure~\ref{fig:acoustic-fresnel}(b) measures the $L_1$ error against the analytic solution using the pressure reflection and transmission coefficients $R = (Z_2 - Z_1)/(Z_2 + Z_1)$ and $T=2Z_2a_2^2/[(Z_1+Z_2)c_{f,2}^2]$, with $Z_j=\rho_jc_{f,j}$.
The final time is $250\,\mu$s.

\begin{figure}[htbp]
\centering
\begin{subfigure}[t]{0.32\textwidth}
  \centering
  \includegraphics[width=\textwidth]{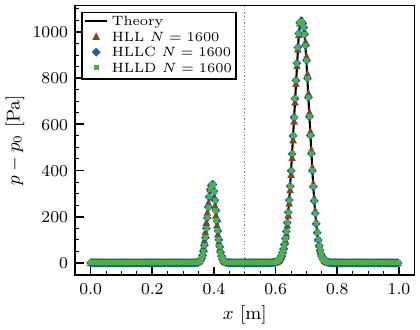}
  \caption{Waveform (every fifth point)}
\end{subfigure}%
\hspace{0.01\textwidth}%
\begin{subfigure}[t]{0.32\textwidth}
  \centering
  \includegraphics[width=\textwidth]{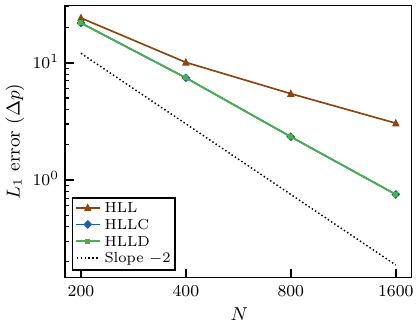}
  \caption{Convergence}
\end{subfigure}
\caption{Acoustic Fresnel transmission.}
\label{fig:acoustic-fresnel}
\end{figure}

Because the pulse is launched in the fluid, no shear is excited in either phase: the shear stress $\tau_{xy}$ and tangential velocity $v$ remain identically zero in both phases, and the HLLD and HLLC solutions coincide accordingly.
In Figure~\ref{fig:acoustic-fresnel}(b), HLLC and HLLD converge near second order, at a measured slope of about $1.6$ at the resolutions shown, whereas HLL converges sublinearly because it smears the material interface.

\subsubsection{Solid--solid material contact}
\label{subsubsec:contact}

This case demonstrates Proposition~\ref{prop:contact} on a contact discontinuity between two solids advecting at uniform velocity.
Following the proposition assumptions, four variables are uniform across the interface: $u = 100$~m/s, $v = 50$~m/s, $\tau_{xy} = 10^3$~Pa, and $p_T \equiv p - \tau_{xx} = 10^5$~Pa.
The material parameters and the remaining stress components are discontinuous across the interface: left side (epoxy) has $G = 1.5\times 10^9$~Pa, $p = 1.1\times 10^5$~Pa, $\tau_{xx} = 10^4$~Pa, $\tau_{yy} = -5\times 10^3$~Pa; right side (epoxy-stiff) has $G = 6.0\times 10^9$~Pa, $p = 10^5$~Pa, $\tau_{xx} = \tau_{yy} = 0$.
Both phases share the density $\rho = 1185$~kg/m$^3$.
The final time is $5.0$~ms.

\begin{figure}[htbp]
\centering
\begin{subfigure}[t]{0.32\textwidth}
  \centering
  \includegraphics[width=\textwidth]{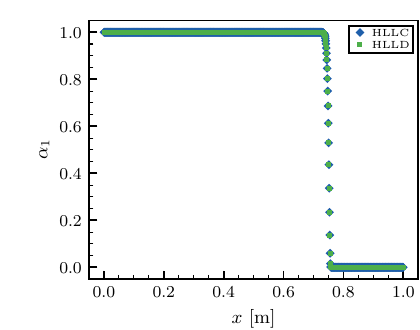}
  \caption{$\alpha_1$}
\end{subfigure}%
\hspace{0.01\textwidth}%
\begin{subfigure}[t]{0.32\textwidth}
  \centering
  \includegraphics[width=\textwidth]{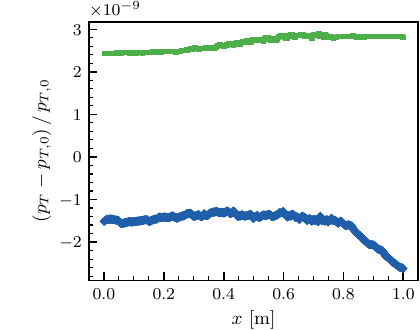}
  \caption{$(p_T - p_{T,0})/p_{T,0}$}
\end{subfigure}%
\\[6pt]%
\begin{subfigure}[t]{0.32\textwidth}
  \centering
  \includegraphics[width=\textwidth]{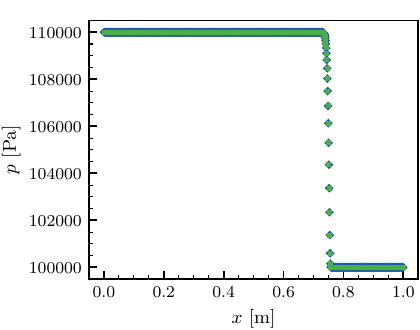}
  \caption{$p$}
\end{subfigure}%
\hspace{0.01\textwidth}%
\begin{subfigure}[t]{0.32\textwidth}
  \centering
  \includegraphics[width=\textwidth]{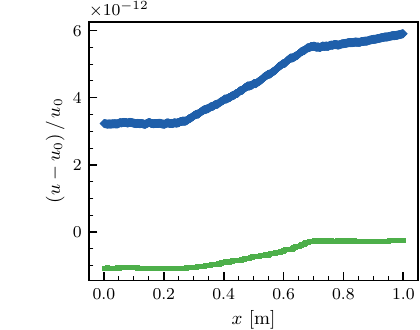}
  \caption{$(u - u_0)/u_0$}
\end{subfigure}
\caption{Solid--solid material contact: Proposition~\ref{prop:contact} demonstration. HLLC and anchored HLLD, the two members the proposition concerns, are shown.}
\label{fig:contact}
\end{figure}

Because the two phases have equal density, the density/normal-stress term of Proposition~\ref{prop:contact} vanishes, and the four values of the proposition are preserved across the advecting contact up to the elastic-energy blend residual of Eq.~\eqref{eq:pT-onestep} (Figure~\ref{fig:contact}): panels (b) and (d) show the total pressure $p_T$ and normal velocity $u$ remaining constant, and the unplotted tangential velocity $v$ and shear stress $\tau_{xy}$ behave the same way.
The velocities $u$ and $v$ are held to machine precision, with relative deviations below $10^{-11}$.
The stress-derived quantities $\tau_{xy}$ and $p_T$ are held near the recovery floor, of order $10^{-9}$ in relative terms for $p_T$, caused by both the conserved-to-primitive cancellation of the stiffened-gas closure and the elastic-energy blend residual quantified in Proposition~\ref{prop:contact}.
The pressure itself is not invariant, as demonstrated by panel (c), since the preserved quantity is the total pressure $p_T = p - \tau_{xx}$.
HLLC and HLLD coincide here, as the contact carries no $\tau_{xy}$ jump for the inner pair to act on.

\subsection{Shear-active comparison cases}
\label{subsec:shear-active}

In the next three cases, the shear waves are active, so we can compare how the three solvers resolve the shear, longitudinal, and contact waves.

\subsubsection{Rodriguez and Johnsen Riemann problem}
\label{subsubsec:rodriguez}

This case tests a single-material Riemann problem from Fig.~7 of Rodriguez and Johnsen (2019)~\cite{RodriguezJohnsen2019}, adapted from Gavrilyuk, Favrie, and Saurel (2008)~\cite{GavrilyukFavrieSaurel2008}.
The initial condition has tangential velocities of $\pm 100$~m/s, a pressure ratio of $10^5$ ($p_L = 10^{10}$~Pa, $p_R = 10^5$~Pa), and zero initial stress.
The case drives large stresses relative to the shear modulus, so it functions as a severe nonlinear stress test at the edge of the small-deformation regime, following its use in~\cite{RodriguezJohnsen2019}.
The final time is $64\,\mu$s.

\begin{figure}[htbp]
\centering
\begin{subfigure}[t]{0.32\textwidth}
  \centering
  \includegraphics[width=\textwidth]{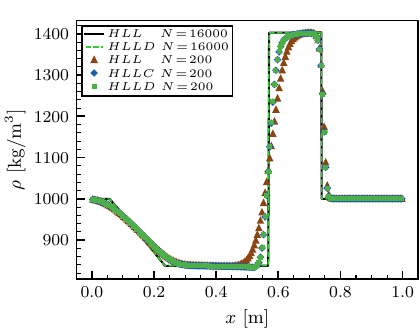}
  \caption{$\rho$}
\end{subfigure}%
\hspace{0.01\textwidth}%
\begin{subfigure}[t]{0.32\textwidth}
  \centering
  \includegraphics[width=\textwidth]{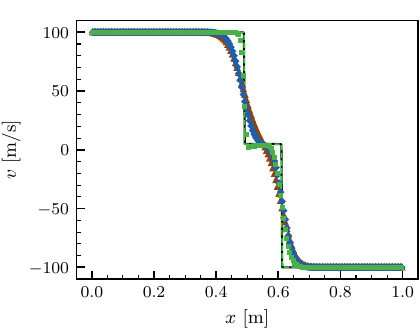}
  \caption{$v$}
\end{subfigure}%
\hspace{0.01\textwidth}%
\begin{subfigure}[t]{0.32\textwidth}
  \centering
  \includegraphics[width=\textwidth]{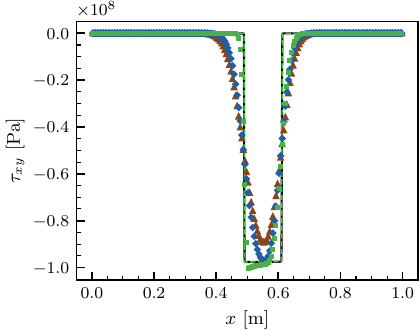}
  \caption{$\tau_{xy}$}
\end{subfigure}
\caption{Rodriguez and Johnsen (2019) Fig.~7 configuration. Markers show the $N = 200$ solutions and lines the $N = 16{,}000$ HLL and HLLD solutions.}
\label{fig:rodriguez}
\end{figure}

The low-resolution $N = 200$ results (Figure~\ref{fig:rodriguez}) show how the solvers resolve the waves differently.
In the density panel (a), HLLC and HLLD nearly coincide and capture the contact discontinuity more sharply than HLL.
In the tangential-velocity (b) and $\tau_{xy}$ (c) panels, which carry the shear waves, HLLD resolves the shear waves much more sharply than HLLC, which is sharper than HLL.
Thus HLLD resolves both the contact and the shear waves, HLLC the contact but not the shear, and HLL neither sharply.

In the density panel the $N = 16{,}000$ HLL and HLLD solutions overlay closely and both agree with~\cite{RodriguezJohnsen2019}, consistent with the schemes approaching a common limit for this problem despite their differing low-resolution profiles.

\subsubsection{Shear release wave}
\label{subsubsec:shear-release}

A sharp water--epoxy interface sits at $x=0.25$ on the unit domain, with an initial discontinuity in the shear stress ($\tau_{xy} = 0$ in water, $\tau_{xy,0} = 10^4$~Pa in epoxy) and uniform tangential $v_0 = 10$~m/s on both sides.
Theoretically, the release wave advects rightwards at the shear-wave speed $c_s = \sqrt{G/\rho}$ while remaining sharp, producing a velocity jump $\Delta v = \tau_{xy,0} / (\rho c_s)$.
The final time is $50\,\mu$s.

\begin{figure}[htbp]
\centering
\includegraphics[width=\textwidth]{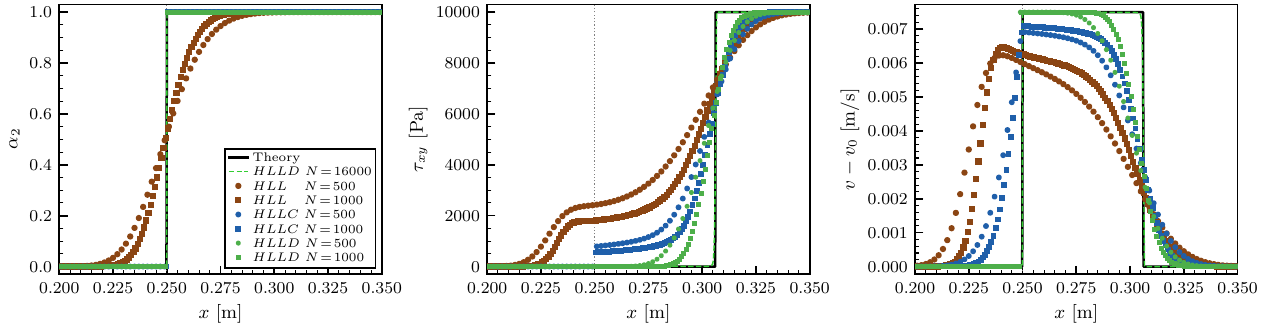}
\caption{Shear release at a water--epoxy interface.
Panels overlay the theoretical jump (\emph{Theory}), solutions at $N=500$ and $N=1000$, and an HLLD MUSCL solution at $N=16{,}000$.
The dotted line in the $\tau_{xy}$ panel marks the initial interface at $x=0.25$.}
\label{fig:shear-release}
\end{figure}

The three solvers differ strongly in this case (Figure~\ref{fig:shear-release}).
In the $\alpha_2$ panel, HLLC and anchored HLLD keep the material interface within one cell, while HLL smears it.
In the $\tau_{xy}$ panel, HLL smears the shear front and carries stress across the interface into the water.
HLLC also diffuses the shear front over many cells, but its stress drops to zero at the interface, across the contact that it resolves.
Anchored HLLD diffuses the shear much less, so the stress drops to zero across the release front, before it reaches the interface.
In the $v - v_0$ panel, HLL diffuses the velocity jump into the water, where it forms a smooth peak.
HLLC drops sharply once across the interface, with much less spreading into the water than HLL.
Anchored HLLD steps from $v_0$ to $v_0 + \Delta v$ across a single cell.
An HLLD solution computed with MUSCL at $N = 16{,}000$ demonstrates convergence to the theoretical jump $\Delta v = \tau_{xy,0}/(\rho c_s)$.

\subsubsection{Shear Fresnel transmission}
\label{subsubsec:shear-fresnel}

This case is the shear-wave analog of the acoustic Fresnel case: a Gaussian tangential-velocity pulse of peak amplitude $v = 1$~m/s, $v=\exp\bigl(-((x-0.2)/0.02)^2\bigr)$~m/s on the unit domain, with the shear stress matched for a right-moving wave, $\tau_{xy} = -\rho c_s v$ (peak $\approx 1.3\times 10^6$~Pa in the epoxy), transmits across an interface at $x=0.5$ between two solids of different $G$.
Convergence is measured against the analytic shear-Fresnel coefficients with $Z_s = \rho c_s$.
The final time is $450\,\mu$s.

\begin{figure}[htbp]
\centering
\begin{subfigure}[t]{0.32\textwidth}
  \centering
  \includegraphics[width=\textwidth]{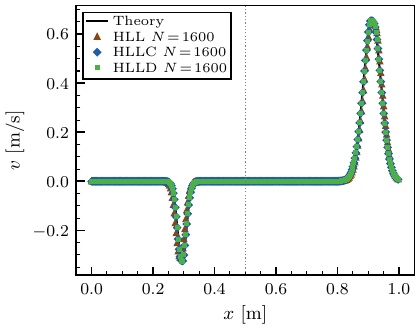}
  \caption{Waveform (every fifth point)}
\end{subfigure}%
\hspace{0.01\textwidth}%
\begin{subfigure}[t]{0.32\textwidth}
  \centering
  \includegraphics[width=\textwidth]{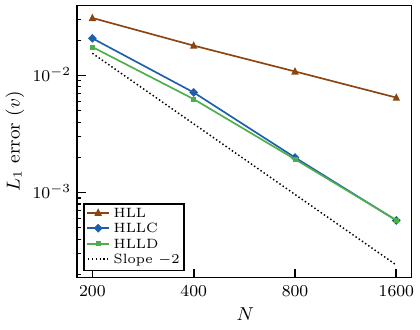}
  \caption{Convergence}
\end{subfigure}
\caption{Shear Fresnel transmission.}
\label{fig:shear-fresnel}
\end{figure}

As in the acoustic case, the transmitted and reflected shear amplitudes (Figure~\ref{fig:shear-fresnel}(a)) match the analytic Fresnel prediction, and panel (b) shows HLLC and HLLD are close to second order while HLL converges sublinearly due to interface smearing.

\subsection{Conservation drift convergence}
\label{subsec:drift-rate}

The cases below demonstrate that the drift converges at least at the reconstruction's face-jump order on smooth fields and at first order in the discontinuous cases.

\subsubsection{Smooth fields}
\label{subsubsec:drift-smooth}

This case demonstrates Proposition~\ref{prop:drift-rate} on smooth fields with face-jump orders $q = 1$ (upwind) and $q = 2$ (MUSCL).
The drift in tangential momentum and total energy is measured as the largest deviation of each spatial integral from its initial value over the run.

The smooth initial condition applies a tangential-velocity perturbation $v = 0.1\sin(2\pi x + 0.7)$ to the epoxy/water mixture $\alpha_1(x) = 0.5 + 0.4\sin(2\pi x)$ of Section~\ref{subsec:eigenmode-conv}, with zero initial stress and uniform pressure.
The final time is $4.5$~ms.

\begin{figure}[htbp]
\centering
\begin{subfigure}[t]{0.32\textwidth}
  \centering
  \includegraphics[width=\textwidth]{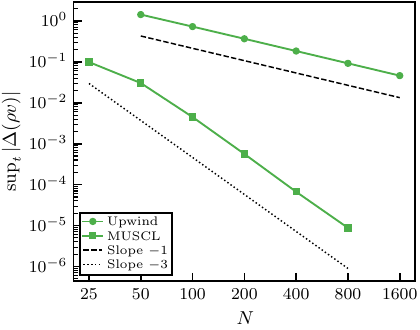}
  \caption{Tangential momentum}
\end{subfigure}%
\hspace{0.01\textwidth}%
\begin{subfigure}[t]{0.32\textwidth}
  \centering
  \includegraphics[width=\textwidth]{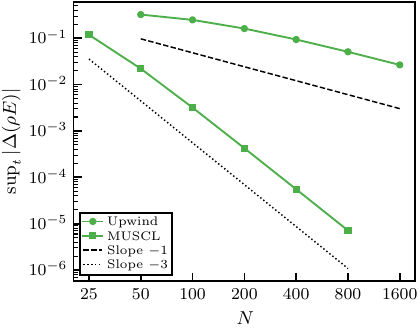}
  \caption{Total energy}
\end{subfigure}
\caption{HLLD conservation drift in tangential momentum and total energy on smooth fields.}
\label{fig:drift-rate}
\end{figure}

Both tangential momentum and total energy converge at least at the rates guaranteed by Proposition~\ref{prop:drift-rate}, with slope $-1$ for upwind and about $-3$ for MUSCL (Figure~\ref{fig:drift-rate}).

\subsubsection{Discontinuous fields}
\label{subsubsec:drift-discont}

The following two cases demonstrate first-order drift convergence in the presence of material interfaces and shocks.
Both use MUSCL on the unit periodic domain with zero initial stress and $p = 10^5$~Pa, and run to $t = 4.5$~ms.

The contact case has stationary material interfaces at $x = 0.25$ and $0.75$, with the epoxy volume fraction $\alpha_1 = 0.9$ for $0.25 \leq x \leq 0.75$ and $0.1$ otherwise.
The epoxy/water materials and tangential perturbation are unchanged from the smooth case of Section~\ref{subsubsec:drift-smooth}.

The shock case starts from uniform epoxy with $u = -0.2c_{f,0}\cos(2\pi x)$ and $v = 10\sin(2\pi x + 0.7)$~m/s, where $c_{f,0}$ is the initial longitudinal wave speed.
The normal velocity profile steepens into two shocks that propagate through the sheared field.

\begin{figure}[htbp]
\centering
\begin{subfigure}[t]{0.32\textwidth}
  \centering
  \includegraphics[width=\textwidth]{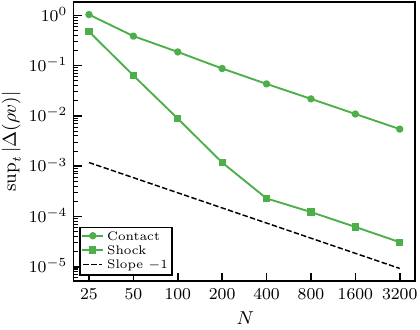}
  \caption{Tangential momentum}
\end{subfigure}%
\hspace{0.01\textwidth}%
\begin{subfigure}[t]{0.32\textwidth}
  \centering
  \includegraphics[width=\textwidth]{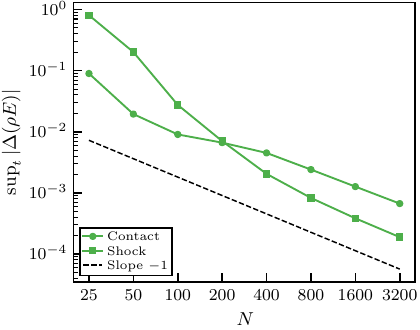}
  \caption{Total energy}
\end{subfigure}
\caption{HLLD conservation drift in tangential momentum and total energy for stationary material contacts (\emph{Contact}) and shocks propagating through a sheared field (\emph{Shock}).}
\label{fig:drift-discont}
\end{figure}

Both tangential momentum and total energy approach the slope $-1$ on the fine grids in Figure~\ref{fig:drift-discont}.
In the proof of Proposition~\ref{prop:drift-rate}, each face contribution is bounded by the magnitude of the anchor-impedance difference multiplied by a weighted sum of $|\Delta v|$ and $|\Delta\tau|$.
On smooth data, the tangential jumps are $\mathcal O(h^q)$ and the impedance difference is $\mathcal O(h)$.
Summing over $\mathcal O(h^{-1})$ faces gives an $\mathcal O(h^q)$ drift.
Material interfaces and shocks can produce $\mathcal O(1)$ impedance differences, while shear waves can carry $\mathcal O(1)$ tangential jumps.
For separated one-dimensional waves, an $\mathcal O(1)$ jump in one factor can be accompanied by an $\mathcal O(h)$ variation in the other.
For discontinuities spanning a bounded number of faces, the resulting drift is $\mathcal O(h)$.

\FloatBarrier
\section{Production-scale demonstration in two and three dimensions}
\label{sec:results-2d}

This section presents production-scale computations (high-resolution, fifth-order WENO) on three configurations: transient acoustic scattering by an elastic epoxy cylinder with a semi-analytic reference (Section~\ref{subsec:s61-acoustic-scattering}), a two-dimensional shock--cylinder interaction (Section~\ref{subsec:s62-shock-cylinder}), and a three-dimensional shock--ellipsoid interaction (Section~\ref{subsec:s63-3d-ellipsoid}) that engages a fully coupled three-dimensional stress field.
The transient-cylinder case is compared with a semi-analytic elastic-cylinder reference, while the shock--cylinder and shock--ellipsoid cases have no exact solution.
For the shock--cylinder case, the comparison rests on the stress response, its sensitivity to resolution, and cross-solver differences, quantified through the integrated diagnostics below.
HLL, following the Rodriguez and Johnsen method~\cite{RodriguezJohnsen2019}, is the reference baseline for HLLC and HLLD.
The shock--cylinder case also compares qualitatively with the analogous problem of Rodriguez and Johnsen.

\subsection{Common setup}
\label{subsec:results-2d-common}

All cases use the water and epoxy materials of Section~\ref{sec:results-1d} (Table~\ref{tab:materials}): water with $G = 0$ and epoxy with $G = 1.5\times 10^9$~Pa.
The spatial discretization is fifth-order WENO with mapped weights~\cite{HenrickAslamPowers2005} and the time integrator is SSP-RK3 throughout, at a CFL of approximately $0.5$.
Both two-dimensional cases compare HLL, HLLC, and anchored HLLD.
The transient-cylinder case uses characteristic non-reflecting boundaries~\cite{Thompson1987,Thompson1990} on all sides, and the shock--cylinder case uses the same treatment away from its symmetry plane on $y = 0$.
The three-dimensional case is a single anchored HLLD production run, included as a capability demonstration rather than a solver comparison.

\subsection{Transient acoustic scattering by an epoxy cylinder}
\label{subsec:s61-acoustic-scattering}

A planar Gaussian acoustic pulse propagates through water and scatters from a single epoxy cylinder.
The domain is $[0,1.5]\times[0,1.0]$~m, and the cylinder has radius $0.15$~m and center $(0.75,0.5)$.
The source is a planar right-going Gaussian pulse injected as a source term~\cite{MaedaColonius2017} near the left face at $x=0.05$, spanning the full height of the domain, with peak amplitude $1.0$~Pa, temporal width $\sigma_t=2\times 10^{-5}$~s, and delay $8\times10^{-5}$~s.
The case is run to $t=1000\,\mu$s, and results are shown on the $1200\times800$ and $2400\times1600$ grids.
HLL is initialized with a sharp interface to isolate solver-induced broadening, while HLLC and anchored HLLD use the one-cell regularized interface.
Even with this sharper initial data, HLL broadens the material transition into a wide mixed layer by the final time (Figure~\ref{fig:s61-alpha}).
Anchored HLLD preserves the one-cell initial interface, and the right panel of Figure~\ref{fig:s61-alpha} shows its change in $\alpha_2$ over the run to be of order $10^{-8}$.
HLLC (not shown) behaves the same, consistent with the role of the contact wave in preserving the water--epoxy interface.

This geometry has a semi-analytic plane-strain elastic-cylinder reference, obtained by writing the incident Gaussian pulse as a Fourier superposition of plane waves~\cite{Faran1951,Huang1970Transient,Geers1972Transient,VekslerKorsunski1980}.
For each wavenumber, the water pressure is expanded as the incident field plus outgoing Hankel modes, and the cylinder displacement is expanded with longitudinal and shear elastic potentials.
The modal coefficients are found by enforcing continuity of normal displacement, normal-stress balance, and zero tangential traction at the cylinder surface.
The summed modes give the thermodynamic pressure and deviatoric stresses used in the comparisons.

\begin{figure}[htbp]
\centering
\includegraphics[width=\textwidth]{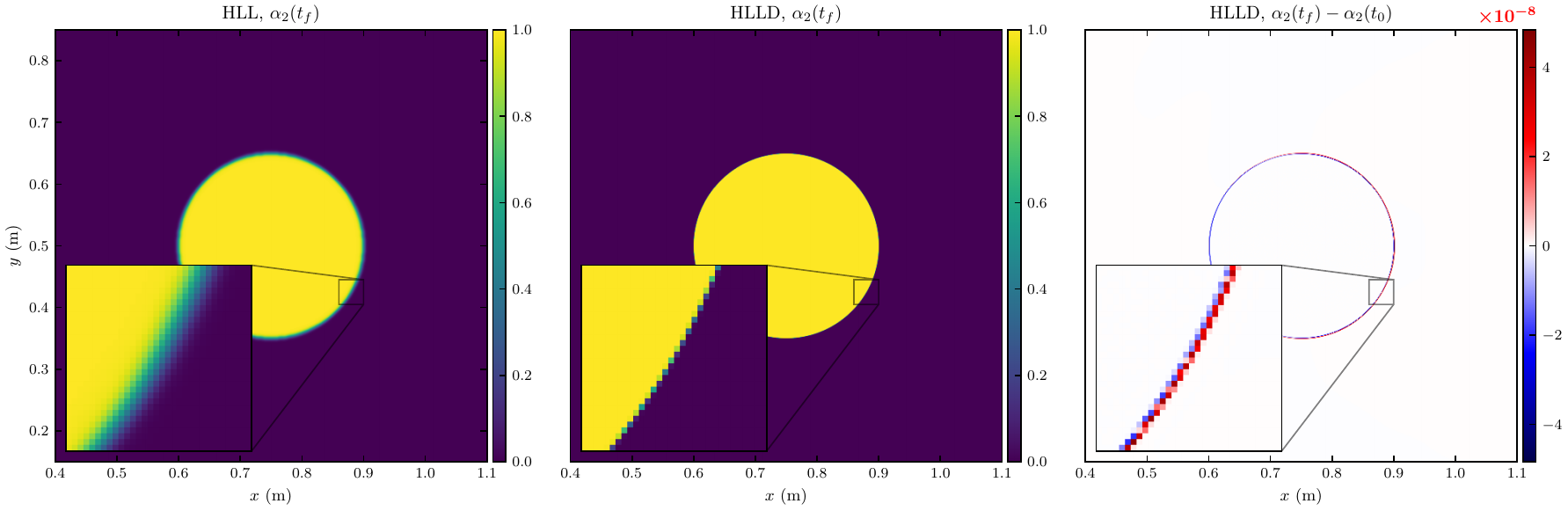}
\caption{Volume fraction $\alpha_2$ (epoxy) for the transient cylinder case on the $1200\times800$ grid.
The left and center panels show the final-time HLL and anchored HLLD fields.
The right panel shows the anchored HLLD change in $\alpha_2$ from the initial to the final time, $\alpha_2(t_f)-\alpha_2(t_0)$.
Insets zoom the lower-right cylinder interface.}
\label{fig:s61-alpha}
\end{figure}

\begin{figure}[htbp]
\centering
\includegraphics[width=\textwidth]{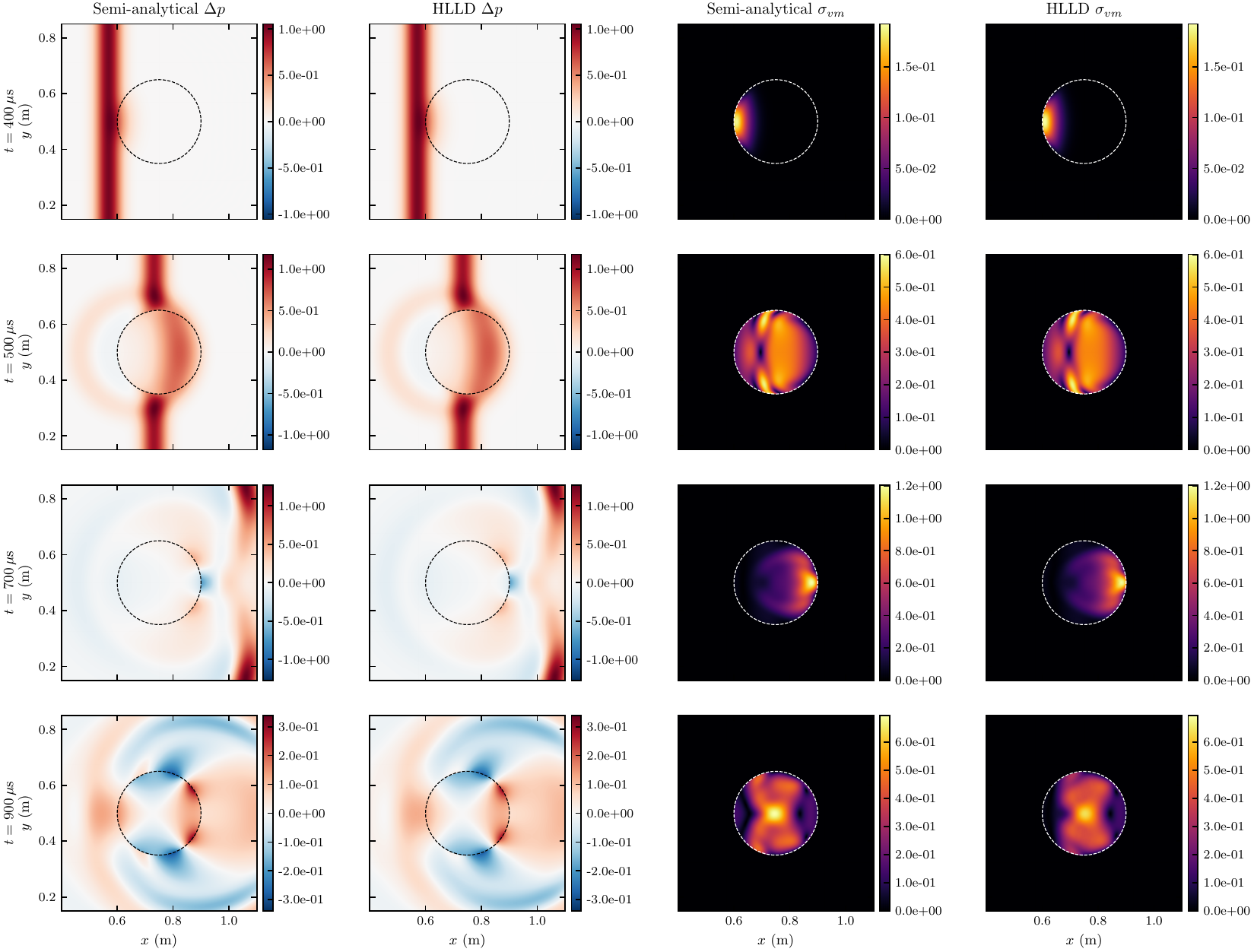}
\caption{Transient cylinder scattering on the finest, $2400\times1600$ grid.
Rows show $t = 400$, $500$, $700$, and $900\,\mu$s.
Panels show the window $0.40\le x\le 1.10$~m, $0.15\le y\le 0.85$~m of the full $[0,1.5]\times[0,1]$~m domain.
Columns compare the semi-analytic reference and anchored HLLD for pressure perturbation $\Delta p$ and von Mises stress $\sigma_{vm}$, the in-plane form $\sigma_{vm}=\sqrt{\tau_{xx}^2-\tau_{xx}\tau_{yy}+\tau_{yy}^2+3\tau_{xy}^2}$ used for all two-dimensional cases.
Dashed contours mark the cylinder interface.}
\label{fig:s61-contour}
\end{figure}

Figure~\ref{fig:s61-contour} compares the fields at representative times.
Anchored HLLD matches the reference pressure for the wave transmitted through the cylinder and for the scattered wave in the water.
The von Mises panels show the shear response confined to the solid, in qualitative agreement with the reference at each time.

\begin{figure}[htbp]
\centering
\includegraphics[width=\textwidth]{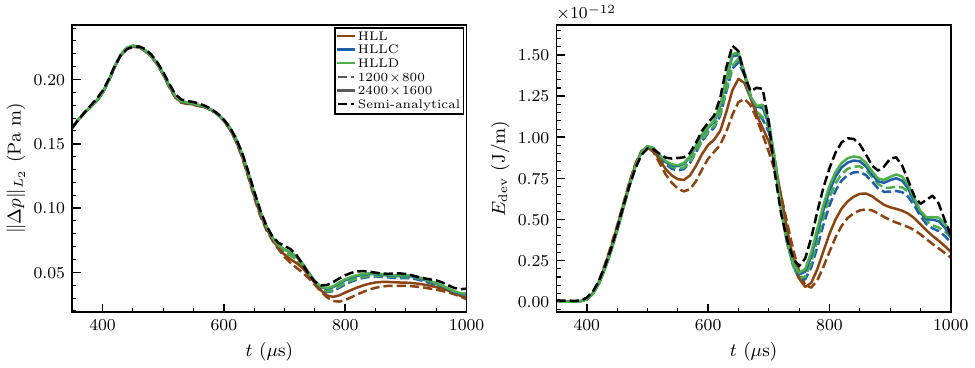}
\caption{Time evolution of $L_2|\Delta p|$ (left) and of the in-plane deviatoric strain energy $E_{\mathrm{dev}}(t) = \int_{\text{solid}} (\tau_{xx}^2 + \tau_{yy}^2 + 2\tau_{xy}^2)/(4G)\,\mathrm{d}A$ (right) for the transient cylinder case on the $1200\times800$ (dashed) and $2400\times1600$ (solid) grids.
The black dashed line is the semi-analytic elastic-cylinder reference.
$L_2|\Delta p|$ is evaluated over the cylinder-centered disc $r\le0.35$~m with an identical cell mask for simulation and reference, and the reference is defined on the $350$--$1000\,\mu$s comparison window.}
\label{fig:s61-histories}
\end{figure}

The integrated diagnostics compare the three solvers against the semi-analytic reference in time (Figure~\ref{fig:s61-histories}).
During pulse transit, $L_2|\Delta p|$ is dominated by the transmitted field inside the cylinder, and HLLC and anchored HLLD are already close to the reference over the plotted resolutions.
HLL has a much larger pressure error on these grids because interface smearing replaces the sharp water--epoxy impedance jump with a mixed layer, which transmits more of the wave both into and out of the cylinder.
This increases the early transmission into the cylinder and weakens the later pressure radiated back into the water.
The deviatoric-energy diagnostic shows the same effect.
HLL stores less elastic energy in the cylinder, while HLLC and anchored HLLD remain closer to the modal reference.
Anchored HLLD lies closest at both resolutions because it additionally resolves the shear waves generated in the solid.
The difference is most visible in $E_{\mathrm{dev}}$ at later times, after the shear waves have developed in the cylinder.

\subsection{Shock interaction with an epoxy half-cylinder}
\label{subsec:s62-shock-cylinder}

This case extends the comparison of HLL, HLLC, and anchored HLLD from the subsonic regime to a supersonic shock.
It follows the shock--cylinder problem of Rodriguez and Johnsen~\cite{RodriguezJohnsen2019} in geometry and Mach number, with different material properties and constitutive law, so that the present anchored HLLD solution can be compared with their published results qualitatively.
A Mach $1.006$ water shock, with a pressure ratio of approximately 119 across the shock, impacts an epoxy half-cylinder.
The domain is $[0, 1.2]\times[0, 0.6]$~m, with the half-cylinder of radius $R_0 = 0.2$~m centered at $(0.6, 0)$ on the $y = 0$ symmetry plane.
The initial condition comprises three regions: quiescent water with $p = 10^5$~Pa ahead of the shock, a post-shock strip at $x \in [0, 0.19]$ ($u = 7.20$~m/s, $p = 1.187\times 10^7$~Pa, $\rho \approx 1004.4$~kg/m$^3$, giving $M \approx 1.006$), and the epoxy half-cylinder.
Resolution is quoted in cells per cylinder radius, and results are shown at $192$ and $384$ cells per radius.
The finer case is a $2304\times 1152$ grid, beyond the $32$ to $256$ cells per radius of Rodriguez and Johnsen, and the case is run to $t = 1500\,\mu$s.

\begin{figure}[htbp]
\centering
\includegraphics[width=\textwidth]{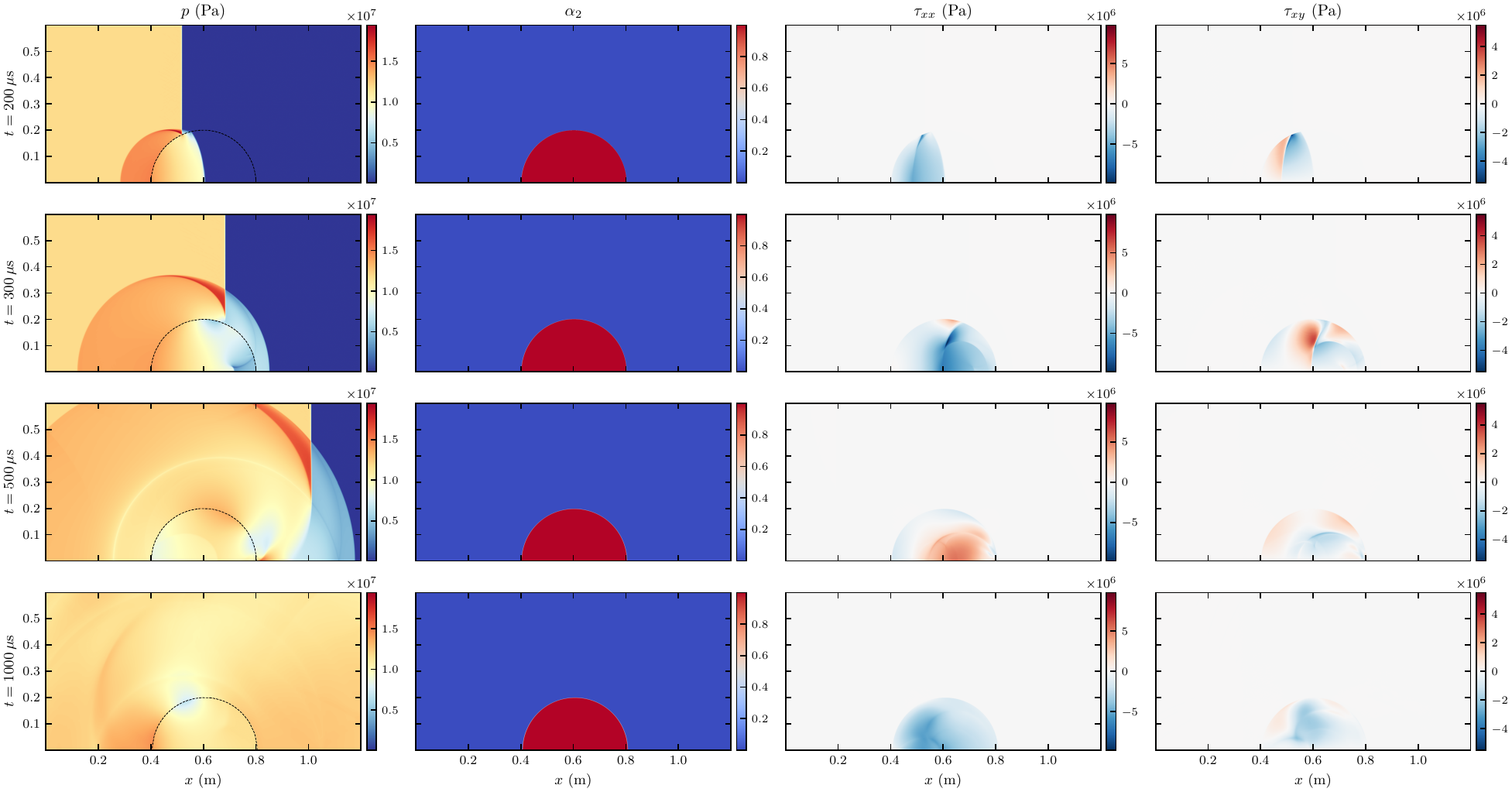}
\caption{Shock--cylinder interaction: $p$, $\alpha_2$, $\tau_{xx}$, $\tau_{xy}$ at
$t = 200$, $300$, $500$, $1000\,\mu$s (HLLD, $384$ cells per radius, $2304\times 1152$).
The dashed contour in the $p$ column marks the instantaneous $\alpha_2 = 0.5$ interface.}
\label{fig:s62-contour}
\end{figure}

As in the acoustic case, the stresses stay cleanly contained within the epoxy half-cylinder (Figure~\ref{fig:s62-contour}).
Unlike the linear acoustic problem, the finite-amplitude shock displaces the inclusion slightly downstream, visible in the $\alpha_2$ column, while the interface itself remains clean.

\begin{figure}[htbp]
\centering
\includegraphics[width=\textwidth]{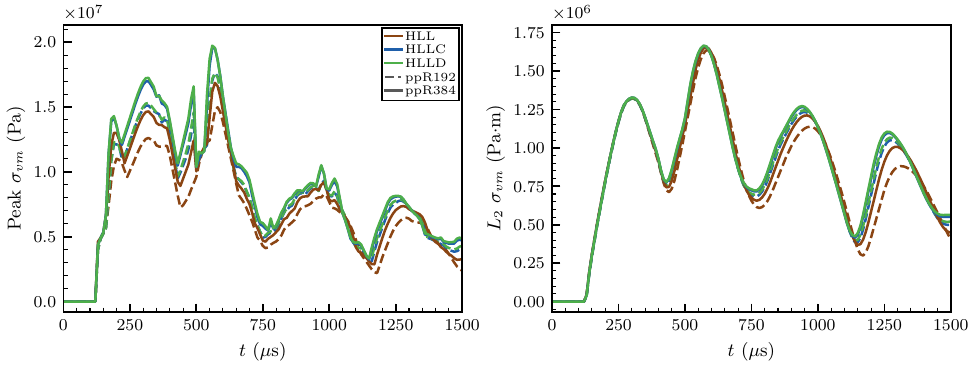}
\caption{Peak (left) and $L_2$ (right) von Mises stress $\sigma_{vm}(t)$, with $\sigma_{vm} = \sqrt{\tau_{xx}^2 - \tau_{xx}\tau_{yy} + \tau_{yy}^2 + 3\tau_{xy}^2}$, for the shock--cylinder case at $192$ (dashed) and $384$ (solid) cells per cylinder radius.}
\label{fig:s62-histories}
\end{figure}

Two measures of the von Mises stress track the resolution dependence (Figure~\ref{fig:s62-histories}).
The three solvers agree at early times and separate later, with HLL reaching lower maxima than HLLC and anchored HLLD at both resolutions.
HLLC and anchored HLLD remain close, and anchored HLLD retains the highest peak stress.
The pointwise peak $\sigma_{vm}$ has not converged at the resolutions shown for any solver.
Pointwise stress maxima are a stringent convergence diagnostic in finite-volume schemes: capturing schemes retain $O(1)$ pointwise errors near shocks even at formally higher order~\cite{EngquistSjogreen1998} and converge sublinearly at the linearly degenerate contacts that mark material interfaces~\cite{BanksAslamRider2008}, so pointwise maxima can converge more slowly than integrated norms, consistent with the published Rodriguez and Johnsen HLL results for the analogous configuration.
The integrated $L_2\,\sigma_{vm}$, by contrast, changes far less between the two resolutions for all three solvers.

\subsection{Three-dimensional shock interaction with a tilted ellipsoid}
\label{subsec:s63-3d-ellipsoid}

A Mach $1.006$ water shock interacts with a tilted epoxy ellipsoid in three dimensions.
The domain is $[0, 1.5]\times[0, 1]\times[0, 1]$~m and the ellipsoid has semi-axes $(0.25, 0.12, 0.10)$~m centered at $(0.8, 0.5, 0.5)$, rotated $-30^\circ$ about the $y$-axis and a further $-10^\circ$ about the $z$-axis, which breaks the spatial symmetries.
The shock is initialized left of the ellipsoid at $x = 0.2$, the production grid is $1200\times 800\times 800$, and the case is run to $t = 500\,\mu$s.
The three-dimensional von Mises stress is \(\sigma_{vm}=\sqrt{\tfrac12\bigl[(\tau_{xx}-\tau_{yy})^2+(\tau_{yy}-\tau_{zz})^2+(\tau_{zz}-\tau_{xx})^2\bigr]+3\bigl(\tau_{xy}^2+\tau_{xz}^2+\tau_{yz}^2\bigr)}\).

\begin{figure}[htbp]
\centering
\includegraphics[width=\textwidth]{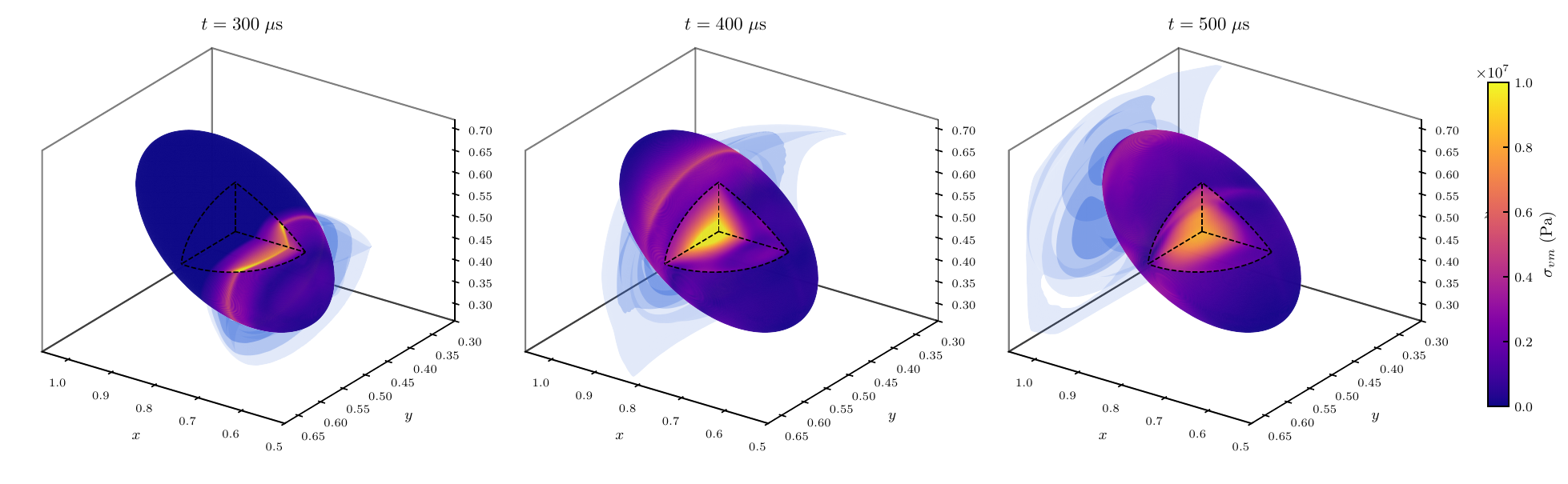}
\caption{The $\alpha_2 = 0.5$ material interface at three successive times, with
one octant cut away for visualization only (the full ellipsoid is simulated).
The von Mises stress is represented by $\sigma_{vm}$.
Translucent blue shells are iso-levels of the scattered pressure, $p$ minus its
planar $y,z$-average at each $x$ (the incident planar wave).}
\label{fig:3d-evolution}
\end{figure}

\begin{figure}[htbp]
\centering
\includegraphics[width=\textwidth]{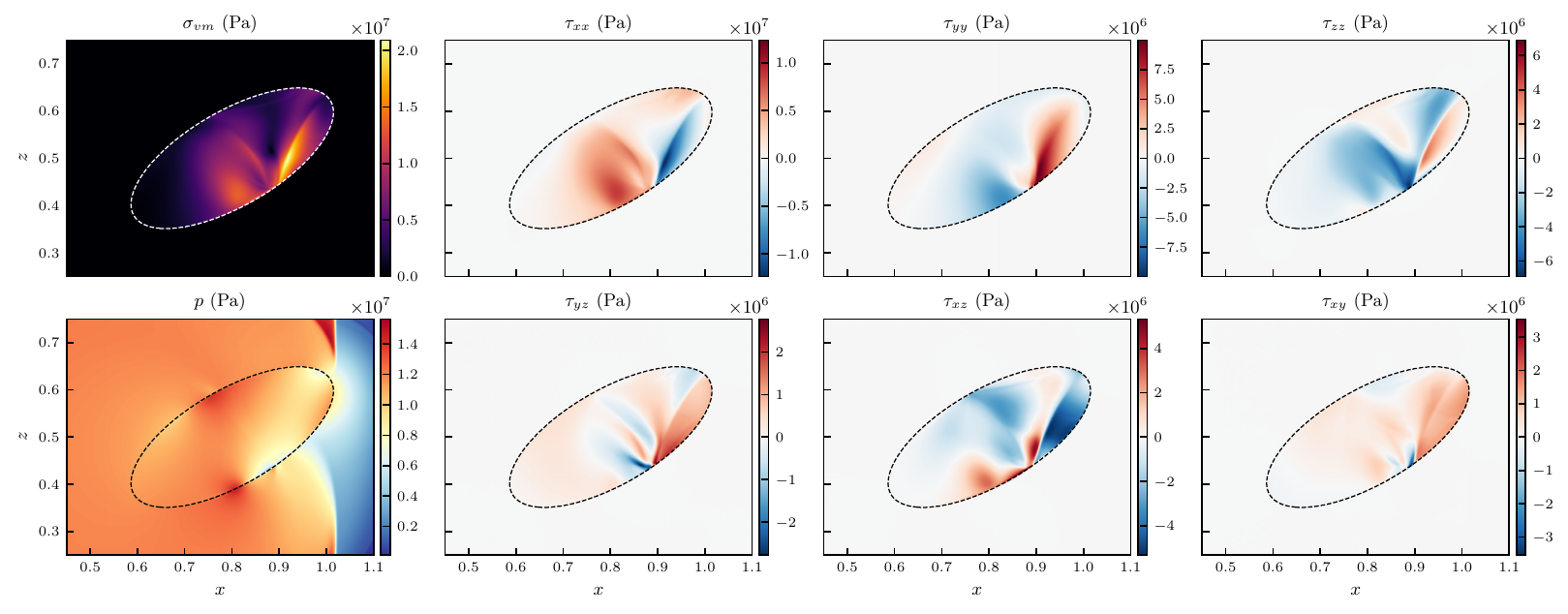}
\caption{Stress-tensor decomposition on the $y = 0.5$ midplane at $t = 500\,\mu$s. Dashed contours mark the
$\alpha_2 = 0.5$ material interface.}
\label{fig:3d-stress}
\end{figure}

The shock passes over the ellipsoid and excites all six stress components, each cleanly contained within the solid (Figures~\ref{fig:3d-evolution} and~\ref{fig:3d-stress}).
In particular, the three shear components $\tau_{xy}$, $\tau_{xz}$, and $\tau_{yz}$ stay confined to the solid across the interface, consistent with the shear-resolving capability of the three-dimensional anchored HLLD formulation detailed in~\ref{app:3d}.

\FloatBarrier
\section{Conclusions}
\label{sec:conclusions}

We have developed an anchored HLLD Riemann solver for the multicomponent hypoelastic diffuse-interface model of compressible fluid--solid interaction, whose deviatoric stress evolves by an objective hypoelastic rate law, outside conservation-law form.
The difficulty specific to this model is that the nonconservative shear coefficient $\rho(G+\tau_{xx})$ also sets the impedance of the resolved shear waves, part of the eigenstructure the solver must capture (Section~\ref{sec:governing}).
The Kapila-type treatment that pairs such a coefficient with an interface-consistent velocity trace from the Riemann fan yields the HLLC member, but it acts outside the solve and does not reach the shear pair.
Anchored HLLD carries the coefficient inside the solve: the same reconstructed left and right states are passed to the HLLD solver twice, anchored in turn to each adjacent cell-centered state, and every cell advances with the flux anchored to its own state (Section~\ref{sec:method}).
To our knowledge, no previous HLLD solver has been formulated for a multicomponent hypoelastic diffuse-interface fluid--solid model with no separately computed interface.
The HLL two-wave baseline of Rodriguez and Johnsen~\cite{RodriguezJohnsen2019}, the HLLC adaptation, and anchored HLLD form a hierarchy of increasing wave resolution within one framework, implemented in the open-source exascale solver MFC~\cite{MFC2021,MFC2026}.

These constructions rest on local structural properties (\ref{sec:analysis}).
The central one shows that the dual pass is not merely a convenience: an irreducibility criterion (Proposition~\ref{prop:anchoring-criterion}) gives a sufficient condition under which an anchored face flux admits no single-pass interface-consistent form, met on the HLLD inner shear pair whenever the shear jump is active, and absent in the fluid limit, where the dual pass is no longer required and the construction reduces to a single pass.
On smooth data the HLLC and anchored HLLD members are consistent at their reconstruction order (Propositions~\ref{prop:smooth-hllc} and~\ref{prop:smooth-hlld}).
Under the material-contact hypotheses of Proposition~\ref{prop:contact}, one first-order forward Euler step preserves the normal velocity and the tangential fields exactly, and the total pressure $p_T = p - \tau_{xx}$ up to a quantified elastic-energy blend residual when the density and normal stress do not both jump.
It conserves the species masses exactly, and within each directional sweep the normal momentum, globally under periodic or isolated boundaries (Proposition~\ref{prop:pairwise}).
The tangential momentum and total energy are conserved quantities of the hypoelastic model that are not conserved by the scheme, and their drift vanishes under refinement at the reconstruction's face-jump order on smooth fields (Proposition~\ref{prop:drift-rate}).

The one-dimensional tests (Section~\ref{sec:results-1d}) confirm several of the analytical results.
All three solvers recover their formal order on smooth eigenmodes, including under rotation and within a smoothly varying mixture.
The inactive-shear cases reproduce the reduction of anchored HLLD to HLLC.
With the shear waves active, anchored HLLD resolves the contact and shear pair together, HLLC the contact alone, and HLL neither sharply.
Anchored HLLD attains the lowest error on most of the shear-bearing problems tested here.
The conservation drift in tangential momentum and total energy converges at least at the reconstruction's face-jump order on smooth fields, consistent with the analysis, and at first order in the presence of material interfaces and shocks.

At production scale in two dimensions (Section~\ref{sec:results-2d}), the shear- and contact-resolution advantage persists.
In transient acoustic scattering by an epoxy cylinder, HLLC and anchored HLLD preserve the material contact almost exactly while HLL diffuses it.
The solutions converge toward the semi-analytic reference in pressure and stored elastic energy.
Anchored HLLD lies closest to the reference in the stored elastic energy.
Under shock loading of an epoxy half-cylinder, anchored HLLD again gives the sharpest stress response, HLLC follows closely, and HLL gives the most diffused response.
A three-dimensional shock over a tilted epoxy ellipsoid, which excites all six stress components (Section~\ref{subsec:s63-3d-ellipsoid}), and a complex-inclusion configuration with sharp corners and narrow gaps (\ref{app:complex}) test robustness at production scale.
In each case the anchored HLLD fields show clean material interfaces and deviatoric stresses confined to the solid.

The analytical results are local and structural: we do not claim path-conservative convergence in the Dal Maso--LeFloch--Murat sense~\cite{DalMasoLeFlochMurat1995}.
For future work, two extensions follow from the construction.
First, the underlying principle, placing a nonconservative coefficient inside the Riemann solve when a resolved wave depends on it, is not specific to hypoelasticity.
The irreducibility criterion of Proposition~\ref{prop:anchoring-criterion} is stated for a general nonconservative system, and the construction may apply to other hyperbolic systems where a nonconservative coefficient enters a resolved inner wave.
Second, because the scheme applies the same operations to mixture and pure-component cells, it is a candidate for genuinely mixed media such as saturated porous solids, and resolves the shear waves in the mixture and at its boundaries alike.

\clearpage
\appendix
\setcounter{figure}{0}%
\renewcommand{\theproposition}{\Alph{section}.\arabic{proposition}}
\renewcommand{\theremark}{\Alph{section}.\arabic{remark}}
\section{Analytical properties}
\label{sec:analysis}

Having constructed the schemes in Section~\ref{sec:method} from their Rankine--Hugoniot conditions, we now establish several further properties they possess.
These are smooth consistency, a criterion for when the dual pass is required, material-contact equilibrium, pairwise conservation, and a bounded conservation drift.
Recall that a hat denotes a quantity at the anchor cell and a superscript \(\widehat D\) denotes dependence on the anchor side.

\subsection{Smooth consistency at order \texorpdfstring{$r$}{r}}
\label{subsec:smooth-consistency}

For smooth exact solutions, both the HLL/HLLC scheme of Section~\ref{subsec:hllc} and the anchored HLLD member of Section~\ref{subsec:hlld} produce a numerical cell-residual that converges, at the order of the spatial reconstruction, to the exact face-difference residual defined in Eq.~\eqref{eq:smooth-L} below.
The face-difference target agrees with the PDE cell-average residual to second order in the nonconservative term, so consistency beyond second order is stated with respect to~\eqref{eq:smooth-L}.
We establish the result in two propositions, one per scheme.
The HLL/HLLC case is treated first, and the HLLD case follows in Proposition~\ref{prop:smooth-hlld} below.

Let $h := \Delta x$ and let $W: x \mapsto W(x)$ be a smooth solution of~\eqref{eq:quasi1d}.
Choose a cell-average quadrature rule of order at least $r$ that includes the cell center with a fixed nonzero weight.
Reconstruct the conservative values at the other quadrature nodes to order $r$ from the conserved cell averages.
Solve the quadrature relation for the center value and convert it to primitive variables, giving $W_i=W(x_i)+\mathcal O(h^r)$.
This cell-centered state serves as the \emph{anchor} $\widehat W_i$ in the HLLD dual-pass evaluation~\eqref{eq:anchored-update}.
The reconstruction satisfies
\begin{equation}
\label{eq:smooth-recon}
 W^{L,R}_{j+1/2} = W(x_{j+1/2}) + h^r\, A_\pm(x_{j+1/2}) + \mathcal{O}(h^{r+1}),
\end{equation}
with $A_\pm$ Lipschitz in $x$.
Let $\mathcal{A}$ be a compact admissible neighborhood of the local smooth state path $\{W(x) : x \in [x_{i-1/2}, x_{i+1/2}]\}$.
We assume that each cell's two reconstructed face pairs use the same nondegenerate algebraic branches, including fixed choices in the wave-speed min/max formulas, with all denominators bounded away from zero.
The fixed-branch formulas have $C^{1,1}$ bounds uniform under refinement and, for HLLD, over the admissible anchors.

Both schemes target the same \emph{exact face-difference residual}
\begin{multline}
\label{eq:smooth-L}
 \mathcal{L}_i = -\frac{F(W(x_{i+1/2})) - F(W(x_{i-1/2}))}{h} \\
 + N^u_i\, \frac{u(W(x_{i+1/2})) - u(W(x_{i-1/2}))}{h}
 + N^v_i\, \frac{v(W(x_{i+1/2})) - v(W(x_{i-1/2}))}{h},
\end{multline}
where $N^u(W), N^v(W)$ are the nonconservative-coefficient column vectors of the system~\eqref{eq:quasi1d}, and $N^u_i := N^u(W_i),\, N^v_i := N^v(W_i)$.

\begin{proposition}[HLL/HLLC scheme, smooth consistency at order $r$]
\label{prop:smooth-hllc}
The semi-discrete numerical residual of the HLL/HLLC scheme with HLLC transport traces is
\begin{equation}
\label{eq:smooth-bb-Lh}
 \mathcal{L}^h_i = -\frac{\mathcal{F}^c_{i+1/2} - \mathcal{F}^c_{i-1/2}}{h}
 + N^u_i\, \frac{\Psi_{u,i+1/2} - \Psi_{u,i-1/2}}{h}
 + N^v_i\, \frac{\Psi_{v,i+1/2} - \Psi_{v,i-1/2}}{h}.
\end{equation}
Under these assumptions,
\begin{equation}
\label{eq:smooth-bb-claim}
 \mathcal{L}^h_i = \mathcal{L}_i + \mathcal{O}(h^r).
\end{equation}
\end{proposition}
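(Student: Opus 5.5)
The plan is to compare the numerical residual \eqref{eq:smooth-bb-Lh} with the target \eqref{eq:smooth-L} term by term. Both carry the same cell coefficients $N^u_i, N^v_i$, so the difference reduces to
\[
\mathcal{L}^h_i-\mathcal{L}_i = -\frac{\delta\mathcal{F}_{i+1/2}-\delta\mathcal{F}_{i-1/2}}{h} + N^u_i\,\frac{\delta\Psi_{u,i+1/2}-\delta\Psi_{u,i-1/2}}{h} + N^v_i\,\frac{\delta\Psi_{v,i+1/2}-\delta\Psi_{v,i-1/2}}{h},
\]
with $\delta\mathcal F_{j+1/2} := \mathcal F^c(W^L_{j+1/2},W^R_{j+1/2}) - F(W(x_{j+1/2}))$, and analogously $\delta\Psi_u$ and $\delta\Psi_v$ measured against $u(W(x_{j+1/2}))$ and $v(W(x_{j+1/2}))$. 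Since $N^u_i, N^v_i$ are bounded on $\mathcal A$, it suffices to show that each face defect, viewed as a function of the face position, is $h^r$ times a Lipschitz function plus $\mathcal O(h^{r+1})$. The difference of two such defects across a cell of width $h$ is then $\mathcal O(h^{r+1})$, and dividing by $h$ gives \eqref{eq:smooth-bb-claim}.

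The first ingredient is \emph{consistency} of the fixed-branch face functions on the diagonal. For any admissible $W$ on the fixed branch:
\begin{itemize}
\item $\mathcal F^c_{\mathrm{HLLC}}(W,W)=F(W)$. With $W_L=W_R$, Eq.~\eqref{eq:SM-hllc} gives $S_M=u$, so $\xi_K=1$; Eqs.~\eqref{eq:vstar-hllc}, \eqref{eq:star-energy} and \eqref{eq:rhoz-star} then return $U^{c,*}_K=U_K$ in every row, and each branch of \eqref{eq:hllc-coreflux} reduces to $F(W)$.
\item $\Psi^{\mathrm{HLLC}}_u(W,W)=S_M\xi_K=u$ and $\Psi^{\mathrm{HLLC}}_v(W,W)=v^*=v$.
\item The HLL case is immediate from \eqref{eq:hll-generic} and \eqref{eq:Psi-hll}.
\end{itemize}

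The second ingredient is a Taylor expansion. Let $\Phi$ denote any of $\mathcal F^c,\Psi_u,\Psi_v$ on a fixed branch. By the $C^{1,1}$ assumption,
\[
\Phi(W^L,W^R) - \Phi(W,W) = \partial_1\Phi(W,W)\,h^rA_- + \partial_2\Phi(W,W)\,h^rA_+ + \mathcal O(h^{r+1}) + \mathcal O(h^{2r}),
\]
evaluated at $W=W(x_{j+1/2})$ and using \eqref{eq:smooth-recon}. The quadratic remainder is $\mathcal O(h^{2r})\le\mathcal O(h^{r+1})$ for $r\ge1$. The coefficient $a(x):=\partial_1\Phi(W(x),W(x))A_-(x)+\partial_2\Phi(W(x),W(x))A_+(x)$ is Lipschitz in $x$, since it is a product of Lipschitz factors: $\partial\Phi$ is Lipschitz by $C^{1,1}$ and $W(x)$ is smooth. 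Hence $a(x_{i+1/2})-a(x_{i-1/2})=\mathcal O(h)$.

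The main obstacle is that the $C^{1,1}$ bound and the diagonal consistency are only available on a single algebraic branch. The fixed-branch hypothesis is what rules out the $S_L,S_M,S_R$ sign tests and the min/max in \eqref{eq:SLSR} switching between the two faces of cell $i$. I would also check carefully that the diagonal point $(W(x_{j+1/2}),W(x_{j+1/2}))$ lies on that same branch, or at least in its closure where the formulas extend continuously. Any branch on which $S_M-S_K$ stays away from zero evaluates consistently at the diagonal, because all branch fluxes coincide there. The nonconservative coefficient $N^u_i$ uses $W_i=W(x_i)+\mathcal O(h^r)$, but this enters only as a bounded multiplier. Its error cancels exactly in the difference because \eqref{eq:smooth-L} is defined with the same $N_i$, so no extra order condition on $W_i$ is needed beyond keeping it in $\mathcal A$.
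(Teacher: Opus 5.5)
Your proposal is correct and follows essentially the same route as the paper. Both arguments use the equal-state identity $\mathcal G(W,W)=g(W)$ for $(\mathcal F^c,\Psi_u,\Psi_v)$, then a fixed-branch Taylor expansion about the smooth diagonal that gives face defects $h^r B(x_{j+1/2})+\mathcal O(h^{r+1})$ with $B$ Lipschitz, so the adjacent-face difference is $\mathcal O(h^{r+1})$, and finally substitution with $N^u_i,N^v_i$ bounded. Your explicit checks of the $v^*$ and energy star states, and your remark that the accuracy of $W_i$ plays no role here because $\mathcal L_i$ uses the same $N_i$, are accurate expansions of steps the paper states more tersely.
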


\begin{proof}
The argument has two ingredients: an equal-state identity for the face quantities, and the standard finite-volume smooth-perturbation mechanism that converts equal-state consistency into cell-residual consistency at order~$r$.

We first establish the equal-state identity.
Collect the face quantities and their smooth-state counterparts into
\begin{equation*}
 \mathcal{G}(W_L, W_R) := \begin{pmatrix}\mathcal{F}^c(W_L, W_R) \\ \Psi_u(W_L, W_R) \\ \Psi_v(W_L, W_R)\end{pmatrix},
 \qquad
 g(W) := \begin{pmatrix}F(W) \\ u(W) \\ v(W)\end{pmatrix}.
\end{equation*}
At $W_L = W_R = W$, $\mathcal{G}(W, W) = g(W)$.
The HLL traces~\eqref{eq:Psi-hll} reduce to $u(W),\, v(W)$ by inspection.
The HLLC star-branch traces~\eqref{eq:Psi-hllc} reduce through $S_M = u(W),\, \xi_K = 1,\, v^{*} = v(W)$ from Eqs.~\eqref{eq:SM-hllc}--\eqref{eq:vstar-hllc}, and the full nine-row conservative flux collapses to $F(W)$.

We then convert this to cell-residual consistency.
$\mathcal{G}$ is $C^{1,1}$ in a neighborhood of the diagonal set $\{(W(x), W(x)) : x \in [x_{i-1/2}, x_{i+1/2}]\}$ (smooth on the fixed branch, with bounded denominators).
Taylor expansion at the smooth-state diagonal, using the equal-state identity established above and the reconstruction~\eqref{eq:smooth-recon}, gives
\begin{equation}
\label{eq:smooth-bb-face-error}
 \mathcal{G}(W^L_{j+1/2}, W^R_{j+1/2})
 = g(W(x_{j+1/2})) + h^r\, B(x_{j+1/2}) + \mathcal{O}(h^{r+1}),
\end{equation}
where \[ B(x) := D_L\mathcal{G}(W(x), W(x))\, A_-(x) + D_R\mathcal{G}(W(x), W(x))\, A_+(x).
\] $B$ is Lipschitz in $x$ on $\mathcal{A}$ (composition of $C^{1,1}$ derivatives with Lipschitz $A_\pm$), so $B(x_{i+1/2}) - B(x_{i-1/2}) = \mathcal{O}(h)$ and the cell-$i$ divided difference of~\eqref{eq:smooth-bb-face-error} reads
\begin{equation}
\label{eq:smooth-bb-divided-diff}
 \frac{\mathcal{G}(W^L_{i+1/2}, W^R_{i+1/2}) - \mathcal{G}(W^L_{i-1/2}, W^R_{i-1/2})}{h}
 = \frac{g(W(x_{i+1/2})) - g(W(x_{i-1/2}))}{h} + \mathcal{O}(h^r).
\end{equation}
Reading~\eqref{eq:smooth-bb-divided-diff} component-by-component ($\mathcal{F}^c, \Psi_u, \Psi_v$) and substituting into~\eqref{eq:smooth-bb-Lh} with $N^u_i, N^v_i$ bounded matches~\eqref{eq:smooth-L} up to $\mathcal{O}(h^r)$, giving~\eqref{eq:smooth-bb-claim}.
\end{proof}

\begin{proposition}[Anchored HLLD, smooth consistency at order $r$]
\label{prop:smooth-hlld}
The semi-discrete numerical residual of the anchored HLLD member of Section~\ref{subsec:hlld} is
\begin{equation}
\label{eq:smooth-hlld-Lh}
 \mathcal{L}^h_i = -\frac{
 \mathcal{F}_{\mathrm{HLLD}}(W^L_{i+1/2}, W^R_{i+1/2};\, \widehat W_i)
 -
 \mathcal{F}_{\mathrm{HLLD}}(W^L_{i-1/2}, W^R_{i-1/2};\, \widehat W_i)
 }{h},
\end{equation}
where $\mathcal{F}_{\mathrm{HLLD}}(W_L,W_R;\widehat W)$ is the anchored face flux of Eq.~\eqref{eq:hlld-flux} with the anchor written as an explicit argument, and the dual-pass update~\eqref{eq:anchored-update} uses cell $i$'s anchor $\widehat W_i$ on both of its adjacent faces.
Under these assumptions, with $\widehat C^{\widehat D}$ from~\eqref{eq:Chat} bounded below on $\mathcal A$,
\begin{equation}
\label{eq:smooth-hlld-claim}
 \mathcal{L}^h_i = \mathcal{L}_i + \mathcal{O}(h^r).
\end{equation}
\end{proposition}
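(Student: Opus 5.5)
The proof follows the two-ingredient structure of Proposition~\ref{prop:smooth-hllc}. The difference is the anchor: the face map now takes it as a third argument, and the target~\eqref{eq:smooth-L} carries the cell-centered coefficients $N^u_i,N^v_i$. The natural equal-state object is therefore not $F(W)$ but the anchored augmented flux $F^{\widehat D}$ of Eq.~\eqref{eq:anchored-flux}. Define
\[
\mathcal G(W_L,W_R;\widehat W):=\mathcal F_{\mathrm{HLLD}}(W_L,W_R;\widehat W),
\qquad
g(W;\widehat W):=F(W)-N^u(\widehat W)\,u(W)-N^v(\widehat W)\,v(W).
\]
The first step is the anchored equal-state identity $\mathcal G(W,W;\widehat W)=g(W;\widehat W)$ for every admissible anchor $\widehat W$ with $\widehat C$ bounded below.

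To verify the identity at $W_L=W_R=W$, use Eqs.~\eqref{eq:SM-hllc}--\eqref{eq:xi}, which give $S_M=u$, $p_T^*=p_T$, and $\xi_K=1$. In Eqs.~\eqref{eq:hlld-alpha}--\eqref{eq:hlld-tauyy} the anchored corrections carry the factor $S_M-u_K=0$, so the outer stars equal $W$. Eq.~\eqref{eq:hlld-energy-star} reduces to $(\rho E)$. Eqs.~\eqref{eq:vstarstar}--\eqref{eq:tauxystarstar} give $v^{**}=v$ and $\tau_{xy}^{**}=\tau_{xy}$ for any anchor, so the double-star increments~\eqref{eq:tauyy-Lstarstar}--\eqref{eq:energy-Rstarstar} vanish. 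Hence every star and double-star state equals $U$. The recursive fluxes~\eqref{eq:flux-left-star}--\eqref{eq:flux-double-star} then all equal $F^{\widehat D}(W)=g(W;\widehat W)$, whichever branch of~\eqref{eq:hlld-flux} is sampled. No division by $\widehat C$ degenerates, by the lower bound on $\widehat C$.

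The second step is the smooth-perturbation argument. By the fixed-branch assumption, $\mathcal G$ is $C^{1,1}$ jointly in $(W_L,W_R,\widehat W)$ near the diagonal, with bounds uniform over anchors in $\mathcal A$. Fix the anchor at $\widehat W_i$ and Taylor-expand in $(W_L,W_R)$ only, using~\eqref{eq:smooth-recon}. This gives
\[
\mathcal G(W^L_{j+1/2},W^R_{j+1/2};\widehat W_i)=g(W(x_{j+1/2});\widehat W_i)+h^r B(x_{j+1/2};\widehat W_i)+\mathcal O(h^{r+1}),\qquad j=i\pm\tfrac12,
\]
with $B(x;\widehat W)=D_L\mathcal G\,A_-+D_R\mathcal G\,A_+$ evaluated at $(W(x),W(x);\widehat W)$. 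Because the same anchor $\widehat W_i$ is used on both faces, by the update~\eqref{eq:anchored-update}, $B(\cdot;\widehat W_i)$ is Lipschitz in $x$. The $h^r B$ terms therefore cancel to $\mathcal O(h^{r+1})$ in the cell difference, and dividing by $h$ leaves an $\mathcal O(h^r)$ error.

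Finally, $g$ is affine in the anchor-frozen coefficients, so the difference of $g(W(x_{i\pm1/2});\widehat W_i)$ divided by $h$ is exactly $-\mathcal L_i$'s flux part plus $N^u(\widehat W_i)$ and $N^v(\widehat W_i)$ times the divided differences of $u$ and $v$. The anchor $\widehat W_i$ is the state $W_i$ of the quadrature construction, so this is~\eqref{eq:smooth-L} and~\eqref{eq:smooth-hlld-claim} follows.

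The main obstacle is the uniformity claim. The $C^{1,1}$ bounds must hold jointly in the anchor, even though the inner speeds $S_{K*}^{\widehat D}$ and branch selection depend on $\widehat W$. This is exactly why the statement adds the hypothesis that $\widehat C$ is bounded below on $\mathcal A$, together with the fixed-branch assumption. I would check that every $\widehat C^{-1}$ factor is bounded there, and that the only anchor-dependent denominators are $S_K-S_M$ and $\rho_K(S_K-u_K)$, which are anchor-independent and bounded away from zero by hypothesis.
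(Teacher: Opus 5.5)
Your proposal is correct and follows the paper's proof essentially step for step. It uses the same three steps: the anchored equal-state identity $\mathcal F_{\mathrm{HLLD}}(W,W;\widehat W)=F^{\widehat D}(W)$, then the fixed-anchor Taylor and Lipschitz-$B$ cancellation of Proposition~\ref{prop:smooth-hllc} with no anchor derivatives, then recovery of $\mathcal L_i$ by factoring the shared $N^u(\widehat W_i),N^v(\widehat W_i)$ out of the adjacent-face difference using $\widehat W_i=W_i$.
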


\begin{proof}
HLLD absorbs the nonconservative contribution into the single anchored augmented flux $F^{\widehat D}(W) = F(W) - N^u(\widehat W^{\widehat D})\,u(W) - N^v(\widehat W^{\widehat D})\,v(W)$ of~\eqref{eq:anchored-flux}.
Below we write $F(W;\widehat W):=F^{\widehat D}(W)$ to show the anchor explicitly.
The proof has three ingredients: an equal-state identity returning $F^{\widehat D}(W)$ (HLLD-specific); the Lipschitz-$B$ adjacent-face cancellation in the cell-residual-consistency argument of Proposition~\ref{prop:smooth-hllc}, invoked with the anchor as a fixed parameter; and an algebraic recovery converting the augmented-flux face-difference to $\mathcal L_i$.

We first establish the equal-state identity.
At $W_L = W_R = W$ and any admissible anchor $\widehat W^{\widehat D}$, every sampled branch of~\eqref{eq:hlld-flux} returns the augmented flux:
\begin{equation}
\label{eq:smooth-hlld-equalstate}
 \mathcal{F}_{\mathrm{HLLD}}(W, W;\, \widehat W^{\widehat D}) = F^{\widehat D}(W).
\end{equation}
The HLLC outer-contact relations~\eqref{eq:SM-hllc}--\eqref{eq:Tstar-hllc} give $S_M = u(W),\, p_T^* = p_T(W),\, \xi_K = 1$.
The outer-star increments~\eqref{eq:hlld-alpha}--\eqref{eq:hlld-tauyy} and the energy star~\eqref{eq:hlld-energy-star} carry the prefactor $(S_M - u_K) = 0$ and vanish.
The outer tangential~\eqref{eq:hlld-outer-tangential} copies the physical values.
The inner double-star pair~\eqref{eq:vstarstar}--\eqref{eq:tauxystarstar} and the corresponding $\tau_{yy}$ and energy increments~\eqref{eq:tauyy-Lstarstar}--\eqref{eq:energy-Rstarstar} reduce by the same mechanism applied to $(v_R - v_L) = (\tau_{xy,R} - \tau_{xy,L}) = 0$.

We then carry out the adjacent-face cancellation at fixed anchor.
With $\widehat W_i$ held fixed, the fixed-branch formula for $\mathcal F_{\mathrm{HLLD}}(\cdot,\cdot;\widehat W_i)$ is $C^{1,1}$ near the smooth-state diagonal (HLL/HLLC denominators and $\widehat C^{\widehat D}$ bounded away from zero on $\mathcal A$).
The cell-residual-consistency argument of Proposition~\ref{prop:smooth-hllc} applies with $\mathcal{G} \mapsto \mathcal{F}_{\mathrm{HLLD}}(\cdot, \cdot;\, \widehat W_i)$ and $g \mapsto F(\cdot\,;\widehat W_i)$.
The anchor is a fixed parameter and no anchor-derivative enters.
This yields the intermediate identity
\begin{equation}
\label{eq:smooth-hlld-intermediate}
 \mathcal{L}^h_i = -\frac{
 F(W(x_{i+1/2});\widehat W_i) - F(W(x_{i-1/2});\widehat W_i)
 }{h}
 + \mathcal{O}(h^r).
\end{equation}

Finally, we recover the residual~\eqref{eq:smooth-L}.
The dual-pass design uses the same anchor $\widehat W_i$ at both adjacent faces of cell $i$.
Linearity of the anchor-dependent part of $F(\cdot\,;\widehat W_i)$ in $(u, v)$ at fixed anchor then lets $N^u(\widehat W_i), N^v(\widehat W_i)$ factor out of the adjacent-face difference:
\begin{equation}
\label{eq:smooth-hlld-recovery}
\begin{aligned}
 F(W(x_{i+1/2});\widehat W_i) - F(W(x_{i-1/2});\widehat W_i)
 &= F(W(x_{i+1/2})) - F(W(x_{i-1/2})) \\
 &\quad - N^u_i\,\bigl[u(W(x_{i+1/2})) - u(W(x_{i-1/2}))\bigr] \\
 &\quad - N^v_i\,\bigl[v(W(x_{i+1/2})) - v(W(x_{i-1/2}))\bigr],
\end{aligned}
\end{equation}
using $\widehat W_i = W_i$.
Substituting into~\eqref{eq:smooth-hlld-intermediate} and comparing to~\eqref{eq:smooth-L} gives the claim.
\end{proof}

\begin{remark}[Same face-difference target $\mathcal L_i$ for both schemes]
\label{rem:smooth-shared-target}
Propositions~\ref{prop:smooth-hllc} and~\ref{prop:smooth-hlld} converge at order $r$ to the same face-difference residual $\mathcal L_i$ of~\eqref{eq:smooth-L} despite different discrete decompositions.
The algebraic bridge is the residual-recovery step of Proposition~\ref{prop:smooth-hlld}: had the dual-pass evaluation used face-varying anchors instead of cell $i$'s fixed $\widehat W_i$ at both adjacent faces, the factorization~\eqref{eq:smooth-hlld-recovery} would have acquired spurious $\partial_x N^u\cdot u + \partial_x N^v\cdot v$ contributions, breaking the identification.
\end{remark}

\subsection{Irreducibility criterion}
\label{subsec:anchoring-criterion}

Equation~\eqref{eq:ic-representation} of Section~\ref{subsec:ic-nc-update} defines when an anchored face flux admits an interface-consistent representation.
By Lemma~\ref{lem:dual-pass-to-bb} of Section~\ref{subsec:hlld}, such a representation lets the anchored dual-pass cell update collapse to the single-pass update of Eq.~\eqref{eq:oos-update}.
The proposition below gives a sufficient condition under which a branch fails to admit this form.

\begin{proposition}[Irreducibility criterion]
\label{prop:anchoring-criterion}
Fix reconstructed face data \((W_L,W_R)\) and one sampled branch of the anchored face operator \(\mathcal R(W_L,W_R;\widehat W)\), considered over the set of admissible anchors for which this branch is sampled.
Suppose there exists a fixed row vector \(\ell^T\), independent of the anchor, such that
\begin{equation}
\label{eq:source-null-row}
\ell^T N^u(\widehat W) = 0,
\qquad
\ell^T N^v(\widehat W) = 0
\end{equation}
for every admissible anchor on this branch, while
\begin{equation}
\label{eq:source-null-flux-varies}
\ell^T \mathcal R(W_L,W_R;\widehat W)
\end{equation}
depends nontrivially on \(\widehat W\).
Then this branch admits no representation of the form Eq.~\eqref{eq:ic-representation}.
\end{proposition}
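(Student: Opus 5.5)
The argument is by contradiction, and it is short because the representation~\eqref{eq:ic-representation} is affine in the anchor-dependent columns \(N^u(\widehat W)\) and \(N^v(\widehat W)\) and otherwise anchor-free. Assume that on the fixed branch there exist \(\mathcal F^c(W_L,W_R)\), \(\Psi_u(W_L,W_R)\) and \(\Psi_v(W_L,W_R)\), none depending on the anchor, such that
\[
\mathcal R(W_L,W_R;\widehat W)=\mathcal F^c(W_L,W_R)-N^u(\widehat W)\,\Psi_u(W_L,W_R)-N^v(\widehat W)\,\Psi_v(W_L,W_R)
\]
for every admissible anchor \(\widehat W\) that samples this branch. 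The face data \((W_L,W_R)\) are held fixed throughout, so \(\mathcal F^c\) is one fixed vector and \(\Psi_u,\Psi_v\) are fixed scalars.

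Next, left-multiply this identity by the anchor-independent row vector \(\ell^T\) of~\eqref{eq:source-null-row}. Since \(\Psi_u\) and \(\Psi_v\) are scalars, they commute past \(\ell^T\), and the null-row hypothesis removes both source terms:
\[
\ell^T\mathcal R(W_L,W_R;\widehat W)=\ell^T\mathcal F^c(W_L,W_R)-\bigl(\ell^T N^u(\widehat W)\bigr)\Psi_u-\bigl(\ell^T N^v(\widehat W)\bigr)\Psi_v=\ell^T\mathcal F^c(W_L,W_R).
\]
The right-hand side involves neither the anchor nor anything that varies with it. Hence \(\ell^T\mathcal R\) is constant over the admissible anchor set of the branch, which contradicts~\eqref{eq:source-null-flux-varies}. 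No such triple can exist, and this proves the proposition.

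The only point needing care is the quantifier structure, and that is where I expect the subtle part to be. The definition of representability requires a single triple \((\mathcal F^c,\Psi_u,\Psi_v)\) that works for all admissible anchors sampling the branch, not a triple chosen separately for each anchor. Likewise, \(\ell\) must be one vector for which~\eqref{eq:source-null-row} holds at every anchor in that set. I would state both requirements explicitly, because with anchor-wise choices the conclusion fails: one could always take \(\mathcal F^c=\mathcal R\) and \(\Psi_u=\Psi_v=0\) for each anchor separately.

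A remark on the intended use is worth adding, although it is not part of the proof. For HLLD, \(N^v\) is supported on the \(\rho\tau_{yy}\) and \(\rho\tau_{xy}\) rows and \(N^u\) on the \(\alpha_1\), \(\rho\tau_{xx}\) and \(\rho\tau_{yy}\) rows. Taking \(\ell\) to select the tangential-momentum row therefore annihilates both columns. On an inner branch, \(\ell^T\mathcal R\) depends on \(\widehat C^{\widehat D}\) through \(v^{**,\widehat D}\) and \(S_{L*}^{\widehat D}\) whenever \(v_L\neq v_R\) or \(\tau_{xy,L}\neq\tau_{xy,R}\), so the criterion applies there.
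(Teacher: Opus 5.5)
Your proof is correct and is exactly the paper's argument: assume a single anchor-independent triple $(\mathcal F^c,\Psi_u,\Psi_v)$ and left-multiply by $\ell^T$, so the null-row hypothesis removes both source terms and $\ell^T\mathcal R=\ell^T\mathcal F^c$ becomes anchor-independent, contradicting the hypothesis; your point that the triple must be one triple for all anchors is a worthwhile clarification. In your side remark on HLLD, the activity condition is slightly too broad: with $\Delta v=v_R-v_L$ and $\Delta\tau=\tau_{xy,R}-\tau_{xy,L}$, the $\rho v$ increment on $L^{**}$ equals $\tfrac12\rho_L^*S_M\Delta v-\tfrac12\Delta\tau-\tfrac12\Delta v\,\widehat C^{\widehat D}+\rho_L^*S_M\Delta\tau/(2\widehat C^{\widehat D})$, which is anchor-independent when $\Delta v=0$ and $S_M=0$ even if $\Delta\tau\neq0$, so the precise condition is $\Delta v\neq0$ or $S_M\,\Delta\tau\neq0$.
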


\begin{proof}
Assume, for contradiction, that Eq.~\eqref{eq:ic-representation} holds on this branch with some anchor-independent triple \(\mathcal F^c,\Psi_u,\Psi_v\).
Left-multiplying by~\(\ell^T\) and using~\eqref{eq:source-null-row} to annihilate the two source terms gives \[ \ell^T \mathcal R(W_L,W_R;\widehat W) = \ell^T \mathcal F^c(W_L,W_R), \] which is independent of the anchor.
This contradicts Eq.~\eqref{eq:source-null-flux-varies}.
\end{proof}

\begin{remark}[Scope of the criterion]
\label{rem:scope-of-criterion}
The criterion does not forbid anchor-dependent star states.
Their contribution to the sampled face flux is absorbed into \(-N^u(\widehat W)\Psi_u - N^v(\widehat W)\Psi_v\) with anchor-independent traces whenever such a representation exists.
The criterion applies only when anchor-dependence appears in a row annihilated by both source columns, a direction the out-of-solver source cannot reach.
\end{remark}

\begin{remark}[Application to anchored HLLD]
\label{rem:hlld-shear-irreducible}
We apply Proposition~\ref{prop:anchoring-criterion}, stated for the general nonconservative system of Eq.~\eqref{eq:quasi1d}, to the anchored HLLD construction of Section~\ref{subsec:hlld} on the hypoelastic system of Eq.~\eqref{eq:nc}.
The \(\rho v\) row is conservative-core: both \(N^u(\widehat W)\) and \(N^v(\widehat W)\) have a zero entry there for every anchor, so the choice \(\ell^T = e_{\rho v}^T\) satisfies the source-column hypothesis, \(e_{\rho v}^T N^u(\widehat W) = e_{\rho v}^T N^v(\widehat W) = 0\), identically in \(\widehat W\).

On the nondegenerate branch \(L^{**}\) of Eq.~\eqref{eq:hlld-flux}, sampled when \(S_{L*}^{\widehat D}\le0\le S_M\), the flux is the outer-star flux plus the inner-wave increment \(S_{L*}^{\widehat D}\bigl(U_L^{**,\widehat D}-U_L^{*}\bigr)\).
The outer-star flux is anchor-free on the \(\rho v\) row, so the anchor enters this row only through this increment.
Substituting Eqs.~\eqref{eq:inner-speeds} and~\eqref{eq:vstarstar}, its \(\rho v\) component reads
\begin{equation}
\label{eq:hlld-rhov-flux-active}
S_{L*}^{\widehat D}\,\rho_L^*\bigl(v^{**,\widehat D}-v_L\bigr)
=
\tfrac12\rho_L^* S_M\,\Delta v-\tfrac12\Delta\tau
\;-\;\tfrac12\Delta v\,\widehat C^{\widehat D}
\;+\;\frac{\rho_L^* S_M\,\Delta\tau}{2\,\widehat C^{\widehat D}},
\end{equation}
with \(\Delta v:=v_R-v_L\) and \(\Delta\tau:=\tau_{xy,R}-\tau_{xy,L}\).
The first two terms are anchor-independent.
The two \(\widehat C^{\widehat D}\)-bearing terms have coefficients \(-\tfrac12\Delta v\) and \(\tfrac12\rho_L^* S_M\,\Delta\tau\), so for an active shear jump (\(\Delta v\neq0\), or \(S_M\,\Delta\tau\neq0\)) at least one is nonzero and the \(\rho v\) flux varies with the anchor on the active branch.
A single anchor-independent flux therefore cannot reproduce it.
Proposition~\ref{prop:anchoring-criterion} then forbids the interface-consistent form on \(L^{**}\), and identically on \(R^{**}\).
\end{remark}

\subsection{Material-contact equilibrium}
\label{subsec:material-contact-equilibrium}

\begin{proposition}[Discrete material-contact equilibrium]
\label{prop:contact}
Consider a piecewise-constant material contact at a cell face between states $W_L,W_R$ with positive densities and shear moduli.
Suppose $u$, $p_T=p-\tau_{xx}$, $v$, and $\tau_{xy}$ have common values $u_0,p_{T,0},v_0,\tau_{xy,0}$ on both sides.
Assume common stiffened-gas parameters $\gamma,\pinf$ and $\Delta\rho\,\Delta\tau_{xx}=0$, where $\Delta(\cdot)=(\cdot)_R-(\cdot)_L$.
For $\nu=|u_0|\Delta t/\Delta x\in[0,1]$, one first-order forward Euler step of either anchored HLLD or HLLC preserves $u,v,\tau_{xy}$ exactly.
The total pressure changes by the elastic-energy blend residual in Eq.~\eqref{eq:pT-onestep}, which is locally quadratic in the face jumps for $\rho,G>0$.
\end{proposition}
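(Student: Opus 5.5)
I will work on a single isolated contact face between two piecewise-constant cells $i$ (state $W_L$) and $i+1$ (state $W_R$), with first-order reconstruction so that $W^L_{i+1/2}=\widehat W_i=W_L$ and $W^R_{i+1/2}=\widehat W_{i+1}=W_R$. I take $u_0>0$ without loss of generality; the case $u_0<0$ follows by reflection and $u_0=0$ is trivial. The first step evaluates the common outer fan. By Eq.~\eqref{eq:SM-hllc} with $u_L=u_R=u_0$ and $p_{T,L}=p_{T,R}=p_{T,0}$, the contact speed and total pressure are $S_M=u_0$ and $p_T^*=p_{T,0}$, so $\xi_K=1$ and $\rho_K^*=\rho_K$. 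Every anchored outer-star increment in Eqs.~\eqref{eq:hlld-alpha}--\eqref{eq:hlld-energy-star} carries the factor $(S_M-u_K)=0$, so $U_K^{*,\widehat D}=U_K$ for either anchor. For HLLC, Eq.~\eqref{eq:vstar-hllc} gives $v^*=v_0$ because $\Delta v=\Delta\tau_{xy}=0$. For HLLD, Eqs.~\eqref{eq:vstarstar}--\eqref{eq:tauxystarstar} give $v^{**,\widehat D}=v_0$ and $\tau_{xy}^{**,\widehat D}=\tau_{xy,0}$. The $\tau_{yy}$ and energy double-star increments then vanish, so the fan reduces to the three-wave form of Section~\ref{subsec:reductions} and $U_K^{**}=U_K$.

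The second step samples the branch. With $u_0>0$ the branch is $S_L\le0\le S_M$, or the $L^{**}$ branch for HLLD, and the sampled flux is simply $F^{\widehat D}(W_L)$ for HLLD or $F(W_L)$ for HLLC. I then write the update of cell $i+1$, the downwind cell (cell $i$ is already in equilibrium with its upwind neighbour). Its left face is sampled with the right anchor $\widehat W_{i+1}=W_R$, so the face flux is $F(W_L)-N^u(W_R)u_0-N^v(W_R)v_0$. The right face of the cell sits inside uniform $W_R$ data, so its flux is $F(W_R)-N^u(W_R)u_0-N^v(W_R)v_0$. The anchored terms therefore cancel exactly, and the update becomes the pure upwind blend $U^{n+1}=(1-\nu)U_R+\nu U_L$. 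For HLLC the out-of-solver source vanishes for the same reason, because $\Psi_u=u_0\xi_L=u_0$ and $\Psi_v=v_0$ on both faces.

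The third step reads off the primitives from this blend. Since $\rho u$, $\rho v$, $\rho\tau_{xy}$ and $\rho$ are all blended linearly and $u,v,\tau_{xy}$ are common to both sides, the new values are $u=u_0$, $v=v_0$ and $\tau_{xy}=\tau_{xy,0}$ exactly. The step I expect to be the main obstacle is the total pressure. Here $\tau_{xx}^{n+1}=\overline{\rho\tau_{xx}}/\bar\rho$ is a density-weighted average, while $\rho e$ is blended linearly. With common $(\gamma,\pinf)$ the mixture rules give common $\gamma,\pinf$, so $(p+\gamma\pinf)/(\gamma-1)=\rho E-\tfrac12\rho|\bm u|^2-e_{\mathrm{el}}$. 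Because the kinetic energy blends exactly, I get
\begin{equation*}
p^{n+1}=\overline{p}-(\gamma-1)\bigl[e_{\mathrm{el}}(\tau^{n+1},\bar G)-\overline{e_{\mathrm{el}}}\bigr],
\end{equation*}
where $\bar G$ is the blended $\alpha$-weighted modulus (the $\alpha_1$ row also blends linearly since $\xi_K=1$).

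The final step subtracts $\tau_{xx}^{n+1}$. The hypothesis $\Delta\rho\,\Delta\tau_{xx}=0$ makes the density-weighted and linear averages of $\tau_{xx}$ coincide, so $\overline{p}-\tau_{xx}^{n+1}=\overline{p_T}=p_{T,0}$. This leaves the residual $-(\gamma-1)[e_{\mathrm{el}}(\text{blend})-\overline{e_{\mathrm{el}}}]$, which is what I would display as Eq.~\eqref{eq:pT-onestep}. Its quadratic order comes from the fact that $e_{\mathrm{el}}=|\tau|^2/(4G)$ is smooth and jointly convex in $(\tau,G)$ for $G>0$, and $\rho$-weighted versus linear averaging differ only at first order. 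The Jensen-type gap is therefore $\mathcal O(\nu(1-\nu)|\Delta(\tau,G,\rho)|^2)$, bounded by the Hessian on the segment. Careful bookkeeping of the mixed density weighting in $\tau_{yy}$, which may jump along with $\rho$, is the delicate part of this bound.
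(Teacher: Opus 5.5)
Your proposal is correct and follows the paper's proof step for step: $S_M=u_0$, $p_T^*=p_{T,0}$, $\xi_K=1$ leave every star and double-star state equal to its input. The anchored offset $-N^u u_0-N^v v_0$, evaluated at the receiving cell's anchor, then cancels between that cell's two faces. The resulting blend plus the caloric closure gives Eq.~\eqref{eq:pT-onestep}, whose $-\nu(1-\nu)\Delta\rho\,\Delta\tau_{xx}/\rho^{n+1}$ term is exactly the gap between the density-weighted and linear averages of $\tau_{xx}$ that you remove with the hypothesis. You should also state that the common $p_T$, $v$, $\tau_{xy}$ give $F(W_R)-F(W_L)=u_0(U_R-U_L)$ on the momentum and energy rows, since that is what turns the flux difference into the blend.

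For the quadratic order, the paper avoids your ``delicate bookkeeping'' by treating $e_{\mathrm{el}}$ as a smooth function of the conservative variables, which blend exactly affinely, so the linear Taylor terms cancel at once. Your statement that the two averagings ``differ only at first order'' should say that they differ by exactly $\nu(1-\nu)\,\Delta\rho\,\Delta\tau/\rho^{n+1}$ up to sign. This is a product of jumps, hence second order; a genuinely first-order discrepancy would spoil the quadratic bound.
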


\begin{proof}
For $u_0>0$, we first show that the cell receiving the contact has the upwind blend
\begin{equation}
\label{eq:contact-blend}
U^{n+1}=(1-\nu)U_R+\nu U_L,
\qquad
\rho^{n+1}=(1-\nu)\rho_R+\nu\rho_L.
\end{equation}
The case $u_0<0$ interchanges $L,R$, and $u_0=0$ leaves both states unchanged.
The contact relations (Eqs.~\eqref{eq:SM-hllc}--\eqref{eq:xi}) give $S_M=u_0$, $p_T^*=p_{T,0}$, and $\xi_K=1$.
Thus the HLLD outer-star states equal their corresponding input states.
Uniform $v$ and $\tau_{xy}$ make the inner-state increments vanish as well.
For HLLC, the same data give unchanged star states and constant face traces $\Psi_u=u_0$, $\Psi_v=v_0$, so the out-of-solver source vanishes.
The HLLC flux is therefore the upwind physical flux.
The HLLD flux adds $-N^u(\widehat W_i)u_0-N^v(\widehat W_i)v_0$ to that flux.
This offset cancels between the two faces of cell $i$.
The common traction and work terms of the physical flux cancel as well, leaving Eq.~\eqref{eq:contact-blend} for both schemes.

The blended variables $\rho u$, $\rho v$, and $\rho\tau_{xy}$ are $\rho^{n+1}$ times $u_0,v_0,\tau_{xy,0}$, respectively.
Dividing by $\rho^{n+1}$ therefore recovers these three values exactly.
The normal stress instead recovers as
\[
\tau_{xx}^{n+1}
=\frac{(1-\nu)\rho_R\tau_{xx,R}+\nu\rho_L\tau_{xx,L}}{\rho^{n+1}}.
\]
Since the velocities are common, the kinetic-energy density blends linearly.
The caloric closure~\eqref{eq:caloric} with common $\gamma,\pinf$ then gives the total-pressure deviation
\begin{equation}
\label{eq:pT-onestep}
p_T^{n+1}-p_{T,0}
=-\nu(1-\nu)\frac{\Delta\rho\,\Delta\tau_{xx}}{\rho^{n+1}}
+(\gamma-1)\bigl[(1-\nu)e_{\mathrm{el},R}+\nu e_{\mathrm{el},L}-e_{\mathrm{el}}(W^{n+1})\bigr],
\end{equation}
where $p_K=p_{T,0}+\tau_{xx,K}$ has been used.

The first term vanishes by $\Delta\rho\,\Delta\tau_{xx}=0$.
In the elastic-energy residual, the linear Taylor terms cancel by Eq.~\eqref{eq:contact-blend}.
For $\rho,G>0$, $e_{\mathrm{el}}$ is a smooth function of the conservative variables, so the remaining term is locally quadratic in the face jumps.
It vanishes whenever $e_{\mathrm{el}}$ is affine along the blend.
For equal densities and a common shear modulus, the total-pressure residual reduces to
\[
\frac{\nu(1-\nu)(\gamma-1)}{4G}
\bigl[(\Delta\tau_{xx})^2+(\Delta\tau_{yy})^2\bigr].\qedhere
\]
\end{proof}

For stress-free fluid data with $G=0$ and $e_{\mathrm{el}}=0$ throughout, the same contact update preserves $p_T$ exactly.

\begin{remark}
\label{rem:contact-traction-invariant}
The relevant normal invariant is the total pressure $p_T = p - \tau_{xx}$, not the pressure $p$.
Solid--solid and fluid--solid contact tests should be diagnosed in $p_T$, since $p$ may vary across the contact whenever $\tau_{xx}$ does.
\end{remark}

\subsection{Pairwise face conservation}
\label{subsec:pairwise-conservation}

The anchored dual pass produces a distinct flux for each anchor (Section~\ref{subsec:hlld}).
The following result identifies the subset of rows of the conservative variables \(U\) on which the two fluxes nevertheless agree at every interior face.

\begin{proposition}[Pairwise face conservation]
\label{prop:pairwise}
At every interior face $x_{i+1/2}$, the anchored HLLD dual-pass face fluxes for the species masses $m_1, m_2$ and the normal momentum $\rho u$ satisfy
\begin{equation}
\label{eq:pairwise}
\mathcal F^{\widehat L}_{i+1/2}[k] = \mathcal F^{\widehat R}_{i+1/2}[k]
\qquad k \in \{m_1, m_2, \rho u\}.
\end{equation}
\end{proposition}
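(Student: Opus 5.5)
The plan is a row-by-row, branch-by-branch check on the rows $k\in\{m_1,m_2,\rho u\}$. On each sampled branch of Eq.~\eqref{eq:hlld-flux}, I will show that the flux entry is built only from anchor-free quantities. The two dual-pass evaluations at $x_{i+1/2}$ share the reconstructed data $(W_L,W_R)$ and differ only in the anchor $\widehat W^{\widehat D}$. It is therefore enough to show that, for fixed $(W_L,W_R)$, the entries $\mathcal F^{\widehat D}_{\mathrm{HLLD}}[k]$ do not depend on $\widehat D$.

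The first step treats the outer fluxes. The vectors $N^u$ and $N^v$ in Eq.~\eqref{eq:nc} vanish on the mass and momentum rows, so the anchored augmented flux~\eqref{eq:anchored-flux} satisfies $F^{\widehat D}(W_K)[k]=F(W_K)[k]$ for these rows. Hence $\mathcal F_K^{\widehat D}[k]$ is anchor-free.

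The second step treats the outer-star branches. The speeds $S_L$, $S_R$ come from Eq.~\eqref{eq:SLSR}, and $S_M$, $p_T^*$, $\xi_K$ come from Eqs.~\eqref{eq:SM-hllc}--\eqref{eq:xi}. All of these use only the reconstructed states, not the anchor. Consequently $m_{j,K}^*=m_{j,K}\xi_K$ and $(\rho u)_K^*=\rho_K^*S_M$ are anchor-free, and by Eq.~\eqref{eq:flux-left-star} so is $\mathcal F_K^{*,\widehat D}[k]$.

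The third step treats the double-star branches, where the anchor does enter, through $S_{K*}^{\widehat D}$. By Eq.~\eqref{eq:const-shear} and Eq.~\eqref{eq:hlld-rho-dbl}, $m^{**}_{j,K}=m^*_{j,K}$. By Eq.~\eqref{eq:hlld-contact}, $u^{**}=u^*=S_M$ with $\rho^{**}=\rho^*$, so $(\rho u)^{**}_K=(\rho u)^*_K$. The increment $S_{K*}^{\widehat D}\bigl(U_K^{**,\widehat D}-U_K^{*,\widehat D}\bigr)$ in Eq.~\eqref{eq:flux-double-star} therefore has zero entries on rows $k$, whatever the value of $S_{K*}^{\widehat D}$. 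It follows that $\mathcal F_K^{**,\widehat D}[k]=\mathcal F_K^{*}[k]$, which is anchor-free.

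The main obstacle is the branch selection. The thresholds $S_{L*}^{\widehat D}$ and $S_{R*}^{\widehat D}$ depend on the anchor, so the two passes may sample different branches at the same face. I expect this to be harmless for the rows $k$. To the left of $S_M$, the $L^*$ and $L^{**}$ branches give equal values on these rows by the third step. The only anchor-free thresholds are $S_L$, $S_M$ and $S_R$, so the value on rows $k$ depends only on which of these intervals contains $0$. The two passes therefore agree.

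I will treat the boundary cases $0=S_M$ and the fluid fallback~\eqref{eq:degenerate-anchor-flux} the same way. In the fallback, $L^{**}$ and $R^{**}$ collapse onto the star branches, and those are already anchor-free. I will also check that, at $0=S_M$, the choice between the $L$ and $R$ sides is made by the same anchor-free comparison in both passes. Under the standing assumption that the HLL fallbacks are inactive, this gives Eq.~\eqref{eq:pairwise}.
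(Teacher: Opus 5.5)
Your proposal is correct and follows the paper's proof essentially step for step. Both arguments rest on three facts. First, the source columns vanish on these rows, so $F^{\widehat D}[k]=F[k]$. Second, the outer-star masses and normal momentum depend only on $(W_L,W_R)$. Third, $U^{**}=U^{*}$ on these rows, so the anchor-dependent inner speeds cannot change the sampled value even when the two passes select different branches.

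Your explicit handling of the HLL-fallback assumption and of the tie $0=S_M$ makes precise what the paper leaves implicit. The tie is in fact harmless regardless of which side is chosen: on these rows the two sides give $m_{j,K}^{*}S_M$ and $\rho_K^{*}S_M^2+p_T^{*}$, which coincide at $S_M=0$.
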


\begin{proof}
On $k \in \{m_1, m_2, \rho u\}$, every quantity entering the sampled flux is anchor-free.
First, the source columns $N^u, N^v$ vanish on these rows by Eq.~\eqref{eq:nc}, so $F^{\widehat D}[k] = F[k]$.
Second, the outer-$K$ star values $m_{j,K}^{*} = m_{j,K}\,\xi_K$ and $(\rho u)_K^{*} = \rho_K^{*}\,S_M$ depend only on the reconstructed face data $(W_L, W_R)$.
Third, across the inner waves, density is frozen (Eq.~\eqref{eq:hlld-rho-dbl}) and $u = S_M$ is fixed by the contact condition (Eq.~\eqref{eq:hlld-contact}), so partial masses and normal momentum are unchanged.
The anchor-dependent inner speeds $S_{K*}^{\widehat D}$ may select different branches for the two anchors, but on these rows $U^{**}=U^{*}$ makes the star and double-star fluxes equal.
Both $\widehat D \in \{\widehat L, \widehat R\}$ therefore deliver the same face value on these rows at every branch of Eq.~\eqref{eq:hlld-flux}.
\end{proof}

\begin{remark}[Global conservation]
\label{rem:global-conservation}
By Proposition~\ref{prop:pairwise}, the interior-face contributions to the cell update~\eqref{eq:anchored-update} telescope to zero on rows $\{m_1, m_2, \rho u\}$.
Under periodic or isolated boundary conditions the boundary terms vanish, so the spatial integrals $\sum_i U_i[k]\,\Delta x$ are globally preserved on these rows.
In two and three dimensions the same argument applies per sweep, so the species masses are conserved exactly in every direction.
The normal momentum of one direction is the tangential momentum of the others, and there it inherits the drift of Proposition~\ref{prop:drift-rate}.
\end{remark}

\subsection{Drift convergence rate on smooth periodic data}
\label{subsec:drift-rate-analysis}

Among the rows of~\eqref{eq:fullstate}, the species masses $m_1, m_2$, the momenta $\rho u, \rho v$, and the total energy $\rho E$ are conservative at the governing-equation level (their entries in $N^u, N^v$ of~\eqref{eq:nc} vanish), while the volume fraction $\alpha_1$ and the stress components $\rho\tau_{xx}, \rho\tau_{yy}, \rho\tau_{xy}$ are nonconservative at the governing-equation level.
The single-pass HLL and HLLC schemes apply one conservative flux $\mathcal{F}^c$ per face and conserve all five conservative rows exactly.
For anchored HLLD, Proposition~\ref{prop:pairwise} shows that the dual-pass update preserves $m_1, m_2, \rho u$ pairwise at every interior face, hence exactly in spatial integrals under periodic or isolated boundary conditions.
The remaining two conservative rows, tangential momentum $\rho v$ and total energy $\rho E$, are not pairwise-conserved by the anchored construction and acquire an unintended discrete drift.
The following proposition bounds it on smooth periodic data.

\begingroup
\begin{proposition}[Drift rate on tangential momentum and energy]
\label{prop:drift-rate}
Consider the $x$-directed anchored HLLD update on a fixed periodic domain with uniform mesh spacing $h$ and a fixed explicit Runge--Kutta method.
Suppose the stage states remain uniformly smooth over $[0,T]$ in a fixed compact admissible set on which $\widehat C^{\widehat D}>0$.
Assume the reconstruction is at least first-order accurate on smooth data.
Assume the five-wave fan is valid at every face and stage.
For some $q\ge1$, suppose the reconstructed tangential jumps satisfy
\begin{equation}
\label{eq:facejump-q}
v_R-v_L=\mathcal O(h^q),
\qquad
\tau_{xy,R}-\tau_{xy,L}=\mathcal O(h^q),
\end{equation}
uniformly over all faces and stages as $h\to0$.
Then the accumulated drift at the fixed final time $t_{n_T}=T$ satisfies
\begin{equation}
\label{eq:drift-rate}
\sum_i\bigl(U_i^{n_T}[k]-U_i^0[k]\bigr)h=\mathcal O(h^q),
\qquad k\in\{\rho v,\rho E\}.
\end{equation}
\end{proposition}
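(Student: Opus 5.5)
The plan is to write the global drift as a telescoping defect over faces. For a single forward-Euler stage the dual-pass update~\eqref{eq:anchored-update} gives
\[
\sum_i \bigl(U_i^{n+1}[k]-U_i^n[k]\bigr)h=-\Delta t\sum_i\bigl(\mathcal F^{\widehat L}_{i+1/2}[k]-\mathcal F^{\widehat R}_{i+1/2}[k]\bigr),
\]
because on a periodic domain each face appears once with its left-anchored flux and once with its right-anchored flux. A fixed explicit Runge--Kutta method is a finite combination of such stage residuals with bounded coefficients, so the drift over $[0,T]$ is bounded by $C\,T\max_{\text{stages}}\sum_{j}\bigl|\mathcal F^{\widehat L}_{j+1/2}[k]-\mathcal F^{\widehat R}_{j+1/2}[k]\bigr|$. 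It therefore suffices to show that the face defect is $\mathcal O(h^{q+1})$ for $k\in\{\rho v,\rho E\}$. Summing over the $\mathcal O(h^{-1})$ faces then gives $\mathcal O(h^q)$ per unit time, which is~\eqref{eq:drift-rate}.

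To bound the face defect, I would fix a face and compare the two anchored evaluations on the $\rho v$ and $\rho E$ rows. On these rows $F^{\widehat D}=F$, and the outer-star states~\eqref{eq:hlld-partialmasses},~\eqref{eq:hlld-outer-tangential},~\eqref{eq:hlld-energy-star} are anchor-free. As in the proof of Proposition~\ref{prop:pairwise}, the anchor enters only through the inner increments $S_{K*}^{\widehat D}(U_K^{**,\widehat D}-U_K^{*})$, and only through $\widehat C^{\widehat D}$. The anchor-dependent $\tau_{yy}$ increment does not feed these rows. From~\eqref{eq:hlld-rhov-flux-active}, the $\rho v$ component has the form $a_0+\Delta v\,\phi_1(\widehat C)+\Delta\tau\,\phi_2(\widehat C)$ with $\phi_1,\phi_2$ smooth for $\widehat C$ bounded below. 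Hence the difference between the two anchors is bounded by $C\,|\widehat C^{\widehat L}-\widehat C^{\widehat R}|\,(|\Delta v|+|\Delta\tau|)$.

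For the energy row I would do the same using~\eqref{eq:energy-Lstarstar}--\eqref{eq:energy-Rstarstar}. I would write $v^{**}\tau_{xy}^{**}-v_L\tau_{xy,L}$ via~\eqref{eq:vstarstar}--\eqref{eq:tauxystarstar}, which shows that every anchor-dependent term carries a factor of $\Delta v$ or $\Delta\tau$. The increment vanishes identically when the tangential jumps vanish, as in Section~\ref{subsec:reductions}, and it is smooth in its arguments. A mean-value argument in $\widehat C$ then gives the same product bound. Smoothness of the stage states makes the adjacent anchors differ by $\mathcal O(h)$, so $|\widehat C^{\widehat L}-\widehat C^{\widehat R}|=\mathcal O(h)$. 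Combined with~\eqref{eq:facejump-q}, this gives the face defect $\mathcal O(h^{q+1})$.

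The main obstacle is branch selection. The two anchors have different inner speeds $S_{K*}^{\widehat D}$, so they may sample different branches at the same face, for instance $L^*$ for one anchor and $L^{**}$ for the other. In that case the defect is not a difference of like formulas. I would handle this with a continuity argument. Across $0=S_{L*}^{\widehat D}$ the two adjacent branch fluxes coincide, since the jump in flux is $S_{L*}(U^{**}-U^*)=0$ there. I would therefore bound the mixed-branch defect by the full inner increment at one anchor, which is $\mathcal O(|\Delta v|+|\Delta\tau|)$ times $|S_{L*}^{\widehat D}|$. On such faces $0$ lies between the two anchors' inner speeds, so $|S_{L*}^{\widehat D}|\le|S_{L*}^{\widehat L}-S_{L*}^{\widehat R}|=\mathcal O(h)$, and the same $\mathcal O(h^{q+1})$ bound follows. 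The uniform lower bound on $\widehat C^{\widehat D}$ and the validity of the five-wave fan keep all the constants uniform.
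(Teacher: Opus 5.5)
Your proposal is correct and follows essentially the same route as the paper: the same-face telescoping identity over the periodic mesh, an $\mathcal O(h)$ anchored-impedance difference multiplying $\mathcal O(h^q)$ tangential jumps to give an $\mathcal O(h^{q+1})$ face defect, and the mixed-branch case closed by $|S_{K*}^{\widehat D}|\le|S_{K*}^{\widehat L}-S_{K*}^{\widehat R}|=\mathcal O(h)$ before summing over $\mathcal O(h^{-1})$ faces and the Runge--Kutta stages. The only difference is presentational. You bound the same-branch anchor difference through the closed form~\eqref{eq:hlld-rhov-flux-active} and a mean-value argument in $\widehat C$, whereas the paper splits it into a speed-difference term times the increment $U_K^{**}-U_K^{*}$, plus the speed times the difference between the two anchored double-star states.
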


\begin{proof}
Fix $k\in\{\rho v,\rho E\}$.
Summing one forward Euler stage over the periodic mesh gives
\begin{equation}
\label{eq:same-face-identity}
\sum_i\bigl(U_i^{n+1}[k]-U_i^n[k]\bigr)h
=-\Delta t\sum_f\bigl(\mathcal F_f^{\widehat L}[k]-\mathcal F_f^{\widehat R}[k]\bigr).
\end{equation}
It suffices to show that each face-flux difference is $\mathcal O(h^{q+1})$.
The double-star flux is
\[
\mathcal F_K^{**,\widehat D}[k]
=\mathcal F_K^*[k]+S_{K*}^{\widehat D}
\bigl(U_K^{**,\widehat D}[k]-U_K^*[k]\bigr).
\]
On these rows the source columns vanish and the outer-star quantities depend only on the shared reconstructed data.
Thus $U_K^*[k]$ and $\mathcal F_K^*[k]$ are anchor-independent.
The difference between the two anchored double-star fluxes is
\[
\mathcal F_K^{**,\widehat L}[k]-\mathcal F_K^{**,\widehat R}[k]
=\bigl(S_{K*}^{\widehat L}-S_{K*}^{\widehat R}\bigr)
\bigl(U_K^{**,\widehat L}[k]-U_K^*[k]\bigr)
+S_{K*}^{\widehat R}
\bigl(U_K^{**,\widehat L}[k]-U_K^{**,\widehat R}[k]\bigr).
\]
We bound the speed and state differences in these two terms.

For the speed factor, smooth reconstruction gives $\rho_K^*=\rho_K+\mathcal O(h)$, so the common $\rho_K^*$ is bounded above and away from zero.
Smoothness gives $\widehat C^{\widehat L}-\widehat C^{\widehat R}=\mathcal O(h)$.
The inner speeds $S_{K*}^{\widehat D}=S_M\mp\widehat C^{\widehat D}/\rho_K^*$ are therefore bounded and differ by $\mathcal O(h)$ between anchors.

For the first state difference, the inner-state definitions give $v^{**,\widehat D}-v_K=\mathcal O(h^q)$ and $\tau_{xy}^{**,\widehat D}-\tau_{xy,K}=\mathcal O(h^q)$.
For the conservative states, this gives
\begin{align}
(\rho v)_K^{**,\widehat D}-(\rho v)_K^*
&=\rho_K^*\bigl(v^{**,\widehat D}-v_K\bigr)=\mathcal O(h^q),\notag\\
(\rho E)_K^{**,\widehat D}-(\rho E)_K^*
&=\pm\frac{\rho_K^*}{\widehat C^{\widehat D}}
\Bigl[v^{**,\widehat D}\bigl(\tau_{xy}^{**,\widehat D}-\tau_{xy,K}\bigr)+\tau_{xy,K}\bigl(v^{**,\widehat D}-v_K\bigr)\Bigr]=\mathcal O(h^q),
\label{eq:shear-work-sidesub}
\end{align}
with $+$ for $K=L$ and $-$ for $K=R$.

For the difference between anchors, positivity on the compact admissible set gives
\[
\frac1{\widehat C^{\widehat L}}-\frac1{\widehat C^{\widehat R}}
=\frac{\widehat C^{\widehat R}-\widehat C^{\widehat L}}
{\widehat C^{\widehat L}\widehat C^{\widehat R}}
=\mathcal O(h).
\]
The inner-state formulas then yield
\[
\begin{aligned}
v^{**,\widehat L}-v^{**,\widehat R}
&=\frac{\tau_{xy,R}-\tau_{xy,L}}{2}
\left(\frac1{\widehat C^{\widehat L}}-\frac1{\widehat C^{\widehat R}}\right)
=\mathcal O(h^{q+1}),\\
\tau_{xy}^{**,\widehat L}-\tau_{xy}^{**,\widehat R}
&=\frac{v_R-v_L}{2}
\left(\widehat C^{\widehat L}-\widehat C^{\widehat R}\right)
=\mathcal O(h^{q+1}).
\end{aligned}
\]
Subtracting the two anchored versions of~\eqref{eq:shear-work-sidesub} therefore gives
\[
U_K^{**,\widehat L}[k]-U_K^{**,\widehat R}[k]=\mathcal O(h^{q+1}).
\]

The two products in the flux-difference identity are therefore
\[
\mathcal O(h)\,\mathcal O(h^q)+\mathcal O(1)\,\mathcal O(h^{q+1})
=\mathcal O(h^{q+1}).
\]
The physical and outer-star fluxes are already shared by the anchors.

If the anchors select different branches, the corresponding inner speeds satisfy $S_{K*}^{\widehat L}\leq0\leq S_{K*}^{\widehat R}$, or the reverse ordering.
Thus $\lvert S_{K*}^{\widehat D}\rvert\leq\lvert S_{K*}^{\widehat L}-S_{K*}^{\widehat R}\rvert=\mathcal O(h)$ for either anchor.
The resulting flux difference is therefore $\mathcal O(h)\,\mathcal O(h^q)=\mathcal O(h^{q+1})$.

Summing over $\mathcal O(h^{-1})$ faces gives $\Delta t\,\mathcal O(h^q)$ per stage in~\eqref{eq:same-face-identity}.
The fixed Runge--Kutta weights and $\sum_n\Delta t_n=T$ then give~\eqref{eq:drift-rate}.
\end{proof}

On smooth data, the face-jump order is $q=1$ for first-order upwind reconstruction and $q=3$ for centered MUSCL without a slope limiter.
The impedance lower bound excludes the global fluid limit, in which $\widehat C^{\widehat D}=0$ at every face.
In that limit the update reduces to a single-pass fluid HLLC scheme and produces no conservation drift.
The corresponding $y$-sweep result follows by index permutation.
\par
\endgroup

\begin{remark}[Empirical confirmation]
\label{rem:drift-empirical}
Section~\ref{subsubsec:drift-smooth} reports drift-convergence tests under centered MUSCL with van Leer limiter on a smooth periodic initial condition.
The limiter lowers the face-jump order to $q = 2$ at smooth extrema, so the proposition guarantees $\mathcal O(h^2)$.
Both rows $\rho v$ and $\rho E$ converge at about $\mathcal O(h^3)$.
The observed third-order drift convergence may reflect cancellation between opposite-signed face contributions near smooth extrema.
The same tests under first-order upwind reconstruction, of face-jump order $q = 1$, follow the $\mathcal O(h)$ rate.
Both rates are consistent with the proposition.
The contact and shock cases in Section~\ref{subsubsec:drift-discont} demonstrate first-order drift convergence for both tangential momentum and total energy.
\end{remark}

\clearpage
\section{Three-dimensional system}\label{app:3d}

This appendix gives the three-dimensional extension of the two-dimensional system~\eqref{eq:full2d}.
The \(x\)-directed subsystem is written out explicitly, and the \(y\)- and \(z\)-directed subsystems follow by the index permutations listed at the end.

\subsection{State and conservative flux}

The three-dimensional conservative variables have 13 components.
In the \(x\)-directed subsystem,
\begin{equation}
\label{eq:3d-state}
U=
\bigl[
 \alpha_1,\;m_1,\;m_2,\;\rho u,\;\rho v,\;\rho w,\;\rho E,\;
 \rho\tau_{xx},\;\rho\tau_{yy},\;\rho\tau_{zz},\;
 \rho\tau_{xy},\;\rho\tau_{xz},\;\rho\tau_{yz}
\bigr]^T,
\end{equation}
where the total energy is \(E=e+\tfrac12(u^2+v^2+w^2)\), the internal energy carries the thermal and elastic contributions as in Eq.~\eqref{eq:caloric}, and the elastic energy per unit volume now involves all six independent stress components, \[ e_{\mathrm{el}} = \frac{\tau_{xx}^2+\tau_{yy}^2+\tau_{zz}^2 +2\tau_{xy}^2+2\tau_{xz}^2+2\tau_{yz}^2} {4G}.
\] The conservative flux is
\begin{equation}
\label{eq:3d-flux}
F(W)=
\begin{bmatrix}
\alpha_1 u \\
m_1 u \\
m_2 u \\
\rho u^2 + p - \tau_{xx} \\
\rho uv - \tau_{xy} \\
\rho uw - \tau_{xz} \\
(\rho E+p)\,u - u\tau_{xx} - v\tau_{xy} - w\tau_{xz} \\
\rho u\,\tau_{xx} \\
\rho u\,\tau_{yy} \\
\rho u\,\tau_{zz} \\
\rho u\,\tau_{xy} \\
\rho u\,\tau_{xz} \\
\rho u\,\tau_{yz}
\end{bmatrix}.
\end{equation}

\subsection{Conservative-form nonconservative coefficient columns}

In conservative form parallel to Eq.~\eqref{eq:nc} of Section~\ref{sec:governing}, the three-dimensional \(x\)-directed system reads \[ \partial_t U + \partial_x F(W) = N^u(W)\,\partial_x u + N^v(W)\,\partial_x v + N^w(W)\,\partial_x w, \] with a third column \(N^w\) carrying the \(\partial_x w\) couplings absent in two dimensions.

In the row ordering of Eq.~\eqref{eq:3d-state}, the columns are
\begin{equation}
\label{eq:3d-nc}
\begin{bmatrix} N^u & N^v & N^w \end{bmatrix}(W)=
\begin{bmatrix}
\alpha_1+\Kc & 0 & 0 \\
0 & 0 & 0 \\
0 & 0 & 0 \\
0 & 0 & 0 \\
0 & 0 & 0 \\
0 & 0 & 0 \\
0 & 0 & 0 \\
\rho\bigl(\tfrac43 G+\tau_{xx}\bigr) & 0 & 0 \\
-\rho\bigl(\tfrac23 G+\tau_{yy}\bigr) & 2\rho\tau_{xy} & 0 \\
-\rho\bigl(\tfrac23 G+\tau_{zz}\bigr) & 0 & 2\rho\tau_{xz} \\
0 & \rho\bigl(G+\tau_{xx}\bigr) & 0 \\
0 & 0 & \rho\bigl(G+\tau_{xx}\bigr) \\
-\rho\tau_{yz} & \rho\tau_{xz} & \rho\tau_{xy}
\end{bmatrix}.
\end{equation}
On the \((\alpha_1,\tau_{xx},\tau_{yy},\tau_{xy})\) rows these reduce to the two-dimensional \(N^u, N^v\) of Eq.~\eqref{eq:nc}.
Under in-plane data (\(w=0\), \(\partial_z=0\)) the shear components \(\tau_{xz}\) and \(\tau_{yz}\) remain zero, while \(\tau_{zz}\) develops but enters the tracked system only through the elastic energy.
The new column \(N^w\) carries the second shear pair \((w,\tau_{xz})\) through \(\rho(G+\tau_{xx})\) and the \(\tau_{zz}\) production through \(2\rho\tau_{xz}\), mirroring \(N^v\) under \(y\to z\).
The cross-stress \(\tau_{yz}\) is the only component coupling to all three velocity gradients, with coefficients \(-\rho\tau_{yz}, \rho\tau_{xz}, \rho\tau_{xy}\) in \(N^u, N^v, N^w\).

Two structural features stand out.
The two shear pairs \((v,\tau_{xy})\) and \((w,\tau_{xz})\) share the same tangential-velocity-gradient coefficient \(\rho(G+\tau_{xx})\), reflecting the doubly degenerate shear-wave family in three dimensions.
The cross-stress \(\tau_{yz}\) forms no independent wave family, so it introduces no new wave and varies only across the five waves already present.

\subsection{Characteristic structure}

The \(x\)-directed eigenvalues are
\begin{equation}
\label{eq:3d-eigs}
\lambda_{1\text{--}7}=u,
\qquad
\lambda_{s,\pm}^{(1)}=u\pm c_s,
\qquad
\lambda_{s,\pm}^{(2)}=u\pm c_s,
\qquad
\lambda_{f,\pm}=u\pm c_f,
\end{equation}
with \(c_s\) and \(c_f\) as in~\eqref{eq:wavespeeds}.
These are the eigenvalues of the three-dimensional reduced characteristic matrix, carrying the same pressure-row approximation as in Section~\ref{subsec:char}.
The shear eigenvalues \(u\pm c_s\) each have multiplicity two: one eigenspace carries \((v,\tau_{xy})\) and the other carries \((w,\tau_{xz})\).
The seven-fold degenerate contact family carries the material variables \((\alpha_1,m_1,m_2)\) and the transverse stresses \((\tau_{yy},\tau_{zz},\tau_{yz})\).
Across it, \(p\) and \(\tau_{xx}\) jump while the total pressure \(p_T = p - \tau_{xx}\) stays invariant.

\subsection{HLLD extension}

\subsubsection{Five-wave fan}
The five-wave fan \(S_L<S_{L*}^{\widehat D}<S_M<S_{R*}^{\widehat D}<S_R\) of Section~\ref{subsec:hlld} extends unchanged.
Both tangential pairs are resolved by the same inner wave speeds \[ S_{L*}^{\widehat D}=S_M-\frac{\widehat C^{\widehat D}}{\rho_L^{*}}, \qquad S_{R*}^{\widehat D}=S_M+\frac{\widehat C^{\widehat D}}{\rho_R^{*}}, \] with \(\widehat C^{\widehat D}\) given by Eq.~\eqref{eq:Chat}.
\subsubsection{Outer star states}
For \(K\in\{L,R\}\) the longitudinal and contact structure carries over unchanged from the two-dimensional construction.
The compression \(\xi_K\) and the star masses \(m_{j,K}^{*},\rho_K^{*}\) (Eqs.~\eqref{eq:xi}, \eqref{eq:hlld-partialmasses}), the volume fraction \(\alpha_{1,K}^{*,\widehat D}\) (Eq.~\eqref{eq:hlld-alpha}), the normal stress \(\tau_{xx,K}^{*,\widehat D}\) (Eq.~\eqref{eq:hlld-tauxx}) and transverse stress \(\tau_{yy,K}^{*,\widehat D}\) (Eq.~\eqref{eq:hlld-tauyy}), the contact values \(S_M,p_T^{*}\) (Eqs.~\eqref{eq:SM-hllc}--\eqref{eq:Tstar-hllc}), the normal momentum \((\rho u)_K^{*}=\rho_K^{*}S_M\) (Eq.~\eqref{eq:star-mom}), the pressure \(p_K^{*,\widehat D}=p_T^{*}+\tau_{xx,K}^{*,\widehat D}\) (Eq.~\eqref{eq:hlld-pstar}) and the energy \((\rho E)_K^{*}\) (Eq.~\eqref{eq:hlld-energy-star}) are unchanged, since the third direction leaves the normal acoustic and contact dynamics intact.
The first tangential pair is copied across the outer wave as in two dimensions (Eq.~\eqref{eq:hlld-outer-tangential}) and the second pair behaves identically,
\begin{equation}
\label{eq:3d-outer-tangential}
v_K^{*}=v_K,\quad \tau_{xy,K}^{*}=\tau_{xy,K},\qquad
w_K^{*}=w_K,\quad \tau_{xz,K}^{*}=\tau_{xz,K},\qquad
(\rho v)_K^{*}=\rho_K^{*}v_K,\quad (\rho w)_K^{*}=\rho_K^{*}w_K.
\end{equation}
The two transverse stresses with no two-dimensional counterpart follow the \(\tau_{yy}^{*,\widehat D}\) template.
The \(zz\) stress is its image under \(y\to z\),
\begin{equation}
\label{eq:3d-tauzz-star}
\tau_{zz,K}^{*,\widehat D}=\tau_{zz,K}
+\frac{\widehat\rho^{\widehat D}\bigl(\tfrac23\widehat G^{\widehat D}+\widehat\tau_{zz}^{\widehat D}\bigr)(S_M-u_K)}
       {\rho_K(S_K-u_K)},
\end{equation}
and the cross-stress carries the off-diagonal coefficient \(\widehat\tau_{yz}^{\widehat D}\),
\begin{equation}
\label{eq:3d-tauyz-star}
\tau_{yz,K}^{*,\widehat D}=\tau_{yz,K}
+\frac{\widehat\rho^{\widehat D}\,\widehat\tau_{yz}^{\widehat D}\,(S_M-u_K)}
       {\rho_K(S_K-u_K)}.
\end{equation}
These outer-star values are entered into the double-star increments~\eqref{eq:3d-tauzz-dblstar} and~\eqref{eq:3d-tauyz-dblstar} below.

\subsubsection{Inner star states}
Each pair has its own common double-star state, computed independently by the two-dimensional formulas~\eqref{eq:vstarstar}--\eqref{eq:tauxystarstar} applied separately to \((v,\tau_{xy})\) and \((w,\tau_{xz})\).
The transverse-stress increments mirror the two-dimensional formula, with each diagonal stress coupled to its own shear pair:
\begin{alignat}{2}
\label{eq:3d-tauyy-dblstar}
\tau_{yy,L}^{**,\widehat D}
&=
\tau_{yy,L}^{*,\widehat D}
+
\frac{2\widehat\rho^{\widehat D}\widehat\tau_{xy}^{\widehat D}}
     {\widehat C^{\widehat D}}
\left(v^{**,\widehat D}-v_L\right),
\\
\label{eq:3d-tauzz-dblstar}
\tau_{zz,L}^{**,\widehat D}
&=
\tau_{zz,L}^{*,\widehat D}
+
\frac{2\widehat\rho^{\widehat D}\widehat\tau_{xz}^{\widehat D}}
     {\widehat C^{\widehat D}}
\left(w^{**,\widehat D}-w_L\right).
\end{alignat}
The cross-stress \(\tau_{yz}\) receives increments from both shear pairs:
\begin{equation}
\label{eq:3d-tauyz-dblstar}
\tau_{yz,L}^{**,\widehat D}
=
\tau_{yz,L}^{*,\widehat D}
+
\frac{\widehat\rho^{\widehat D}}{\widehat C^{\widehat D}}
\Bigl(
\widehat\tau_{xz}^{\widehat D}
\bigl(v^{**,\widehat D}-v_L\bigr)
+
\widehat\tau_{xy}^{\widehat D}
\bigl(w^{**,\widehat D}-w_L\bigr)
\Bigr).
\end{equation}
The energy double-star sums the tangential work from both pairs:
\begin{equation}
\label{eq:3d-energy-dblstar}
(\rho E)_{L}^{**,\widehat D}
=
(\rho E)_{L}^{*}
+
\frac{\rho_L^{*}}{\widehat C^{\widehat D}}
\Bigl(
v^{**,\widehat D}\,\tau_{xy}^{**,\widehat D}
-v_L\,\tau_{xy,L}
+w^{**,\widehat D}\,\tau_{xz}^{**,\widehat D}
-w_L\,\tau_{xz,L}
\Bigr).
\end{equation}
The right-side (\(R\)) formulas use the same expressions with opposite signs on the increments.

\subsubsection{Anchored sampled flux}
Fan sampling and flux assembly follow the two-dimensional construction, with the anchored augmented flux \(F^{\widehat D}(W) = F(W) - N^u(\widehat W^{\widehat D})\,u(W) - N^v(\widehat W^{\widehat D})\,v(W) - N^w(\widehat W^{\widehat D})\,w(W)\), the three-dimensional analog of Eq.~\eqref{eq:anchored-flux} with the additional \(N^w\,w\) term.

\subsection{Directional permutations}

The \(y\)- and \(z\)-directed subsystems are obtained by the index permutations: \[
\renewcommand{\arraystretch}{1.3}
\begin{array}{l@{\quad}l}
y\text{-sweep:} &
(u,v,w)\to(v,u,w),
\quad
(\tau_{xx},\tau_{xy},\tau_{xz},\tau_{yy},\tau_{zz},\tau_{yz})
\to
(\tau_{yy},\tau_{xy},\tau_{yz},\tau_{xx},\tau_{zz},\tau_{xz}),
\\
z\text{-sweep:} &
(u,v,w)\to(w,u,v),
\quad
(\tau_{xx},\tau_{xy},\tau_{xz},\tau_{yy},\tau_{zz},\tau_{yz})
\to
(\tau_{zz},\tau_{xz},\tau_{yz},\tau_{xx},\tau_{yy},\tau_{xy}).
\end{array}
\] In each case the normal velocity and normal stress of the \(x\)-sweep are replaced by their directional counterparts, and the tangential pairs and cross-stress are relabeled accordingly.

\clearpage
\section{Complex-inclusion robustness demonstration}\label{app:complex}

The shock--cylinder case of Section~\ref{subsec:s62-shock-cylinder} compares the solvers on a single smooth inclusion.
This appendix adds a qualitative robustness demonstration on a deliberately challenging geometry.
The same Mach $1.006$ water shock, with the identical post-shock state, impacts four embedded epoxy inclusions: a rotated rectangle, an ellipse, a rotated triangle, and a circle, with overlapping regions merged by union.
The geometry combines curved and non-grid-aligned interfaces, sharp corners, fine features, and narrow gaps that set up many internal reflections.
The domain matches the transient-cylinder case of Section~\ref{subsec:s61-acoustic-scattering}, the grid is $1200\times800$, and the case is run to $t = 1500\,\mu$s.
Anchored HLLD completes the run, with the deviatoric stresses confined to the inclusions and the interfaces clean throughout (Figure~\ref{fig:complex-shock-contour}).

\begin{figure}[htbp]
\centering
\includegraphics[width=\textwidth]{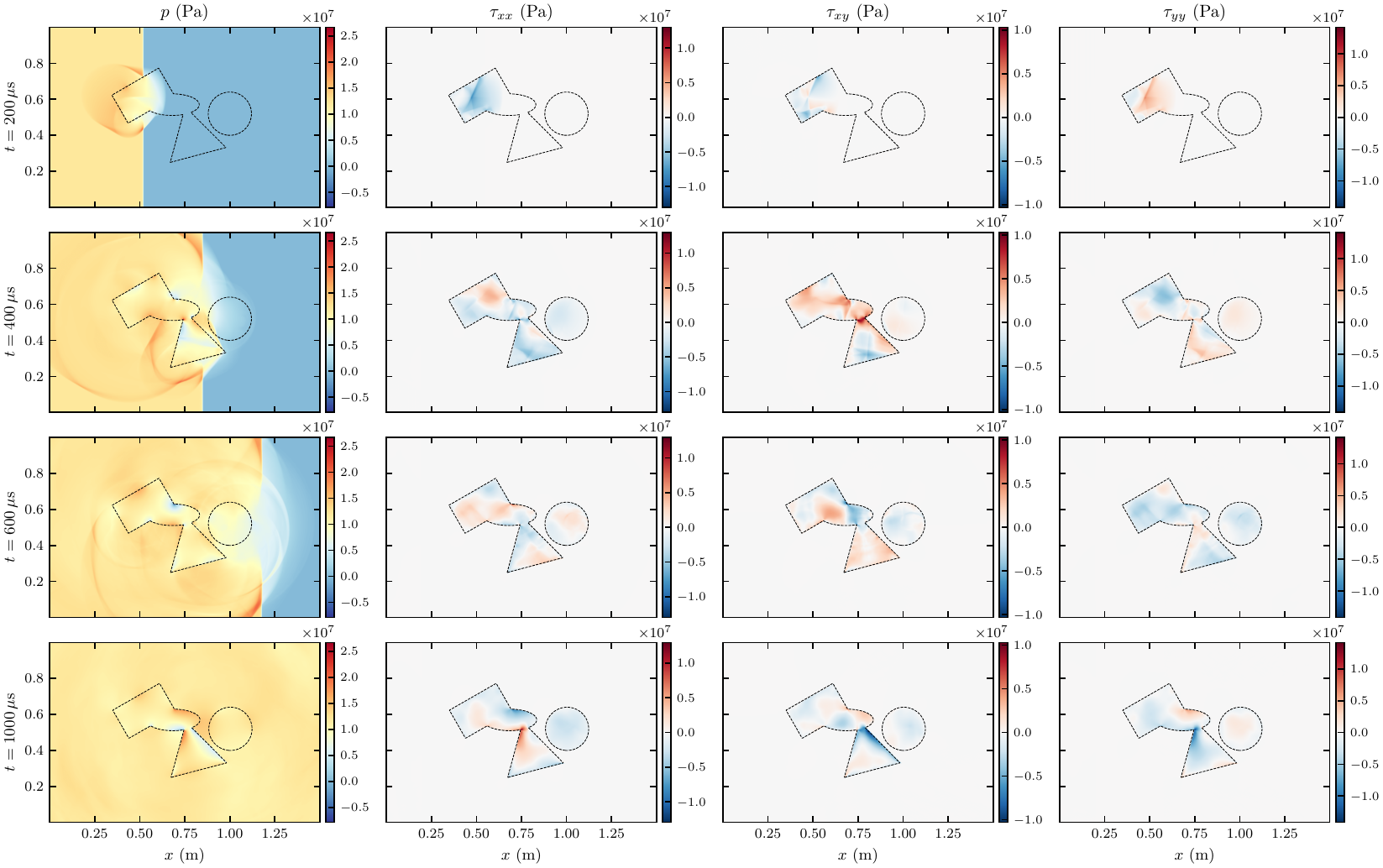}
\caption{Complex-inclusion shock case: $p$, $\tau_{xx}$, $\tau_{xy}$, $\tau_{yy}$ at $t = 200$, $400$, $600$, $1000\,\mu$s (anchored HLLD, $1200\times800$). Dashed contours mark the instantaneous material interfaces ($\alpha_2 = 0.5$).}
\label{fig:complex-shock-contour}
\end{figure}

\clearpage
\section*{CRediT authorship contribution statement}

\noindent\textbf{Yanjun Zhang:} Conceptualization, Methodology, Software, Validation, Formal analysis, Investigation, Data curation, Visualization, Writing -- original draft, Writing -- review \& editing.

\noindent\textbf{Tim Colonius:} Conceptualization, Methodology, Resources, Supervision, Funding acquisition, Writing -- review \& editing.

\section*{Acknowledgments}

The authors thank Spencer H.~Bryngelson for his helpful comments on the manuscript and his assistance with reviewing, refining, and integrating the solver implementation into MFC.
They also gratefully acknowledge his continued development and maintenance of the MFC codebase.

This work was supported by the National Science Foundation under Grant No.~2529317.

Computations were performed on Bridges-2 at the Pittsburgh Supercomputing Center~\cite{BrownEtAl2021Bridges2} through allocation MCH260104 from the Advanced Cyberinfrastructure Coordination Ecosystem: Services \& Support (ACCESS) program~\cite{BoernerEtAl2023ACCESS}.
The ACCESS program is supported by National Science Foundation grants 2138259, 2138286, 2138307, 2137603, and 2138296.

Computations were also performed on Frontier under OLCF project CFD154.
An award of computer time was provided by the INCITE program.
This research also used resources of the Oak Ridge Leadership Computing Facility, which is a DOE Office of Science User Facility supported under Contract DE-AC05-00OR22725.

\section*{Declaration of competing interests}

The authors declare that they have no known competing financial interests or personal relationships that could have appeared to influence the work reported in this paper.

\section*{Code availability}

The anchored HLLD Riemann solver presented in this work is implemented in the open-source MFC codebase, available at \url{https://github.com/MFlowCode/MFC}.

\section*{Declaration of generative AI and AI-assisted technologies in the manuscript preparation process}

During the preparation of this work, the authors used ChatGPT and Codex (OpenAI), and Claude Code (Anthropic) to assist with improving the manuscript's language, clarity, and organization, identifying relevant literature, and reviewing the manuscript for potential errors and inconsistencies.
The authors subsequently reviewed and edited the AI-assisted content, independently verified the relevant references, and take full responsibility for the content of the published article.

\end{document}